\documentclass[10pt, conference, letterpaper]{IEEEtran}
\usepackage{cite}
\ifCLASSINFOpdf
\usepackage[pdftex]{graphicx}
\else
\usepackage[dvips]{graphicx}
\fi
\usepackage[cmex10]{amsmath}
\usepackage{algorithm}
\usepackage{algorithmic}
\usepackage{amsmath}
\usepackage{array}
\usepackage{framed}
\usepackage{bm}
\usepackage[tight,footnotesize]{subfigure}
\usepackage{multirow}
\usepackage{tikz}
\usetikzlibrary{calc}
\usepackage{amsmath,amsthm,amssymb,amsfonts}
\usepackage[paperwidth=8.5in, paperheight=11in, margin=0.8in]{geometry}
\usepackage{color}


\newtheorem{theorem}{Theorem}
\newtheorem{lemma}{Lemma}
\newtheorem{definition}{Definition}
\newtheorem{observation}{Observation}

\newtheorem{assumption}{Assumption}

\IEEEoverridecommandlockouts
\begin{document}

\title{Optimal Network Control in Partially-Controllable Networks}
\author{
\IEEEauthorblockN{Qingkai Liang and Eytan Modiano}
\IEEEauthorblockA{Laboratory for Information and Decision Systems\\Massachusetts Institute of Technology, Cambridge, MA}
\thanks{This work was supported by NSF Grant CNS-1524317 and by DARPA I2O and Raytheon BBN Technologies under Contract No. HROO l l-l 5-C-0097.}
}

\maketitle

\begin{abstract}
The effectiveness of many optimal network control algorithms (e.g., BackPressure) relies on the premise that all of the nodes are fully controllable. However, these algorithms may yield poor performance in a partially-controllable network where a subset of nodes are uncontrollable and use some unknown policy. Such a partially-controllable model is of increasing importance in real-world networked systems such as overlay-underlay networks. In this paper, we design optimal network control algorithms that can stabilize a partially-controllable network. We first study the scenario where uncontrollable nodes use a  queue-agnostic policy, and propose a low-complexity throughput-optimal algorithm, called Tracking-MaxWeight (TMW), which enhances the original MaxWeight algorithm with an explicit learning of the policy used by uncontrollable nodes. Next, we investigate the scenario where uncontrollable nodes use a queue-dependent policy and the problem is formulated as an MDP with unknown queueing dynamics. We propose a new reinforcement learning algorithm, called Truncated Upper Confidence Reinforcement Learning (TUCRL), and prove that TUCRL achieves tunable three-way tradeoffs between throughput, delay and convergence rate.
\end{abstract}

\begin{tikzpicture}[remember picture, overlay]
\node at ($(current page.north) + (-3in,-0.5in)$) {Technical Report};
\end{tikzpicture}

\section{Introduction}
Optimal network control  has been an active area of research  for more than thirty years, and many efficient routing algorithms have been developed over the past few decades, such as the well-known throughput-optimal BackPressure routing algorithm \cite{tassiulas}. The effectiveness of these algorithms usually relies on the premise that all of the nodes in a network are fully controllable. Unfortunately, an increasing number of real-world networked systems are only \emph{partially controllable}, where a subset of nodes are not managed by the network operator and use some unknown network control policy, such as overlay-underlay networks.

An overlay-underlay network consists of overlay nodes and underlay nodes \cite{jones-overlay, paschos-overlay, rai-overlay}. The overlay nodes can implement state-of-the-art algorithms while the underlay nodes are uncontrollable and use some unknown protocols (e.g., legacy protocols). 
Figure \ref{fig:underlay_example} shows an overlay-underlay network where the communications among overlay nodes rely on the uncontrollable underlay nodes. 
Overlay networks have been used to improve the  capabilities of computer networks for a long time (e.g., content delivery \cite{overlay-4}).
\begin{figure}[ht!]
\begin{center}
\includegraphics[width=2.3in]{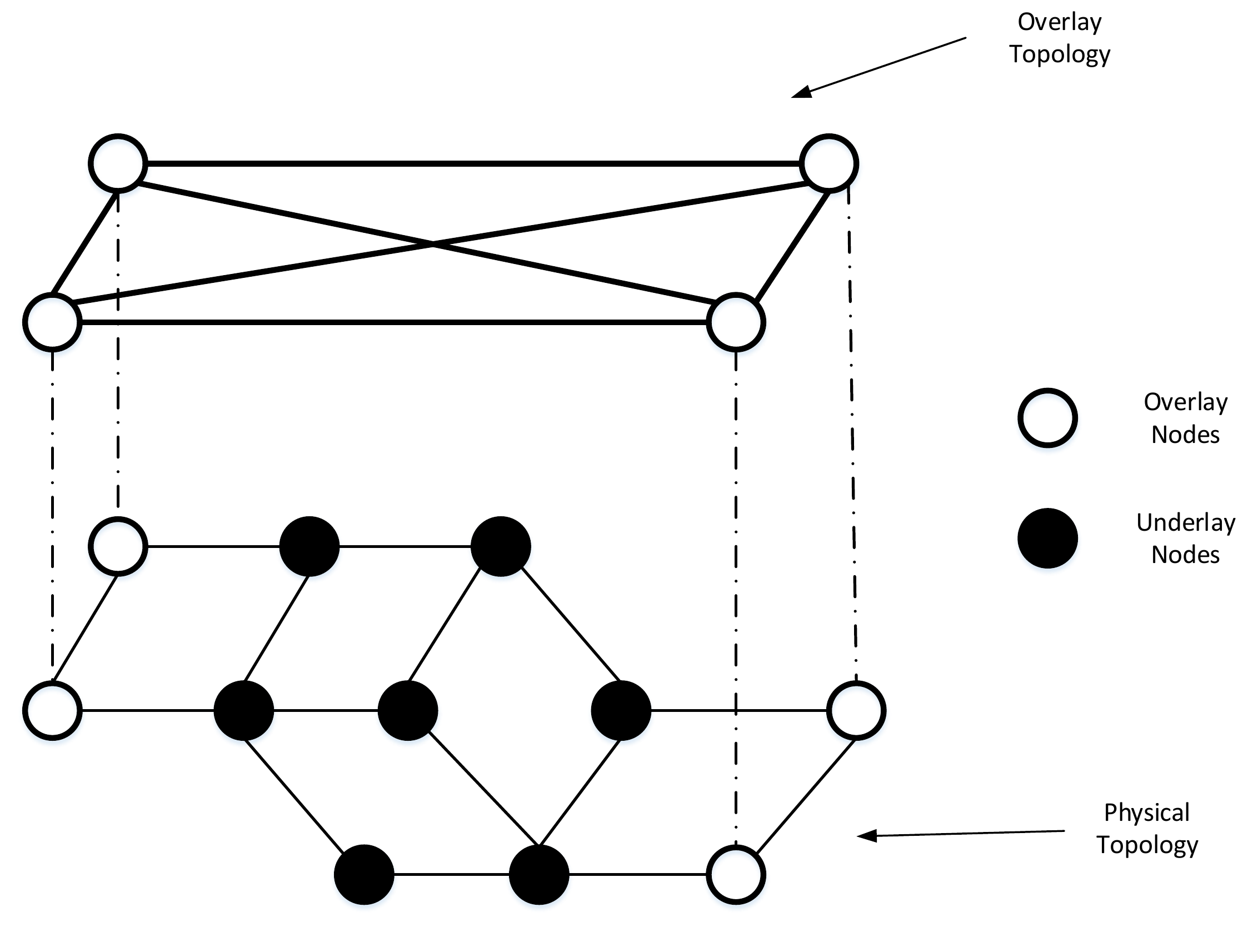}
\caption{An example of overlay-underlay networks.}
\label{fig:underlay_example}
\end{center}
\end{figure}


Due to the unknown behavior of uncontrollable nodes, the existing routing algorithms may yield poor performance in a partially-controllable network. For example, Figure \ref{fig:example} shows an example where the well-known Backpressure routing algorithm \cite{tassiulas} fails to deliver the maximum throughput when some nodes are uncontrollable. In particular, uncontrollable node 3 adopts a policy that does not preserve the flow conservation law such that its backlog builds up, but uncontrollable node 2 hides this backlog information from node 1. As a result, if node 1 uses Backpressure routing, it always transmits packets to node 2, although these packets will never be delivered. A smarter algorithm should be able to learn the behavior of the uncontrollable nodes such that node 1 only sends packets along route $1\rightarrow 5\rightarrow 4$.

As a result, it is important to develop new network control algorithms that achieve consistently good performance in a partially-controllable environment. In this paper, we study efficient network control algorithms that can stabilize a partially-controllable network whenever possible. In particular, we consider two scenarios.

First, we investigate the scenario where uncontrollable nodes use a \emph{queue-agnostic} policy, which captures a wide range of practical network protocols, such as shortest path routing  (e.g., OSPF, RIP), multi-path routing  (e.g., ECMP) and randomized routing algorithms. In this scenario, we propose a low-complexity throughput-optimal algorithm, called Tracking-MaxWeight (TMW), which enhances the original MaxWeight algorithm \cite{tassiulas} with an explicit learning of the policy used by uncontrollable nodes.

Second, we study the scenario where uncontrollable nodes use a \emph{queue-dependent} policy, i.e., the action taken by uncontrollable nodes relies on the observed queue length vector (e.g., Backpressure routing). In this scenario, we show that the queueing dynamics become unknown and no longer follow the classic Lindley recursion \cite{lindley}, which makes the problem fundamentally different from the traditional network optimization framework: we not only need to know how to perform optimal network control but also need to learn the queueing dynamics in an efficient way. We formulate the problem as a Markov Decision Process (MDP) with unknown dynamics, and propose a new reinforcement learning algorithm, called Truncated Upper Confidence Reinforcement Learning (TUCRL), that is shown to achieve network stability under mild conditions.

\begin{figure}[ht!]
\begin{center}
\includegraphics[width=2.2in]{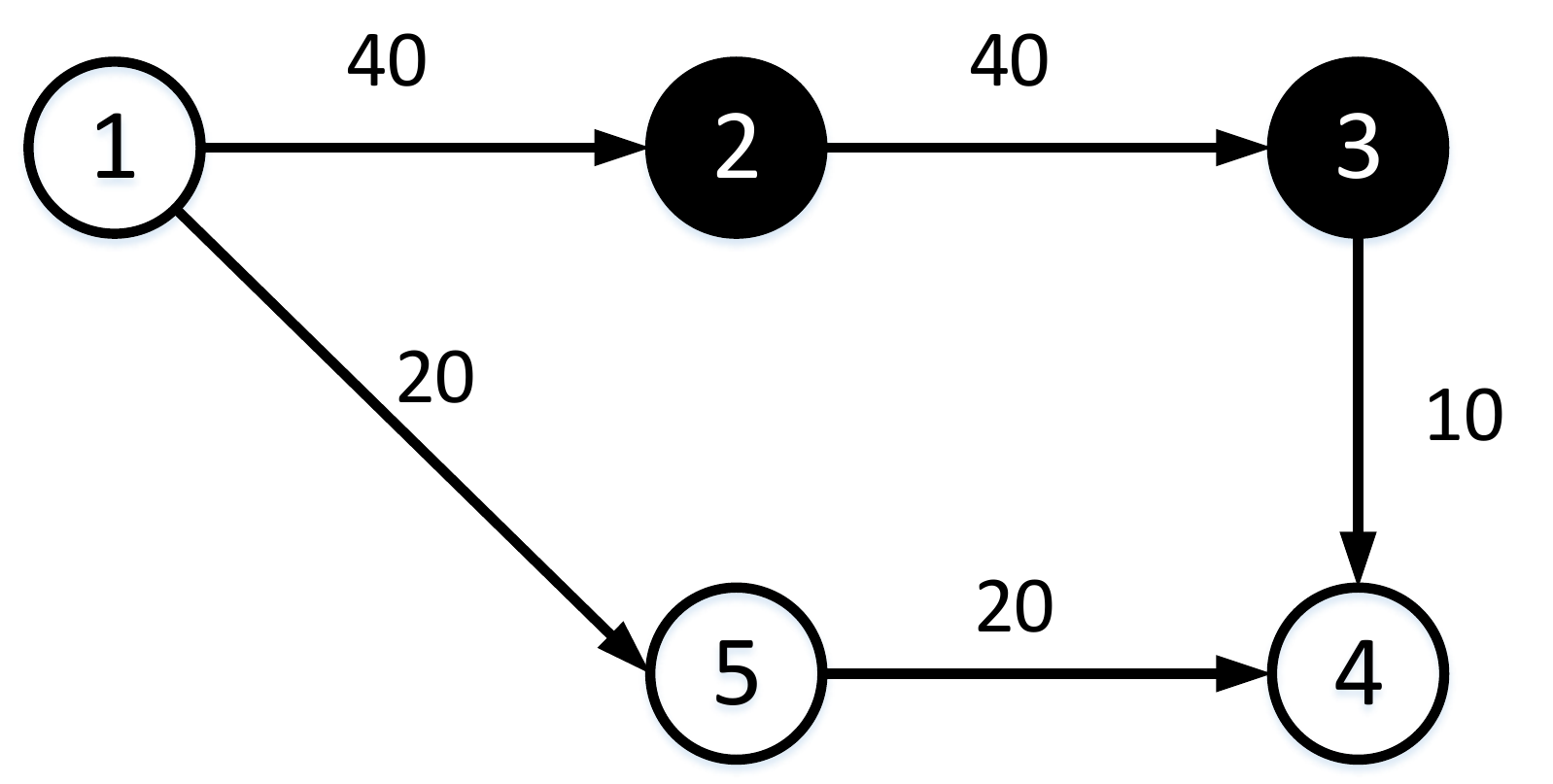}
\caption{Counterexample where the well-known Backpressure routing algorithm fails to deliver the maximum throughput in a partially-controllable network. The number next to each link is its capacity. Each node can transmit only to one of its neighbors in each time slot. There is only one flow: $1\rightarrow 4$ (at rate 20). Black nodes are uncontrollable nodes. Node $2$ transmits any packet it received to node $3$ at full rate, so that its queue length is always zero; node $3$ adopts a non-work-conserving policy that holds any packet it received. When node 1 uses Backpressure routing, it always transmits packets to node 2 since its queue length is always zero, which hides the fact that backlog builds up at node 3.}
\label{fig:example}
\end{center}
\end{figure}

\subsection{Related Work}
Most of the existing works on network optimization in a partially-controllable environment are in the context of overlay-underlay networks.  An important feature of overlay-underlay networks is that underlay nodes are not controllable and may adopt arbitrary (unknown) policies. The objective is to find efficient control policies for the controllable overlay nodes in order to optimize certain performance metrics (e.g., throughput). In \cite{jones-overlay}, the authors showed that the well-known BackPressure algorithm \cite{tassiulas}, which was shown to be throughput-optimal in a wide range of scenarios, may lead to a loss in throughput when used in an overlay-underlay setting, and proposed a heuristic routing algorithm for overlay nodes called Overlay Backpressure Policy (OBP).  An optimal backpressure-type routing algorithm for a special case, where the underlay paths do not overlap with each other, was given in \cite{paschos-overlay}. Recently, \cite{rai-overlay} showed that the overlay routing algorithms proposed in \cite{jones-overlay}\cite{paschos-overlay} are not throughput-optimal in general, and developed the Optimal Overlay Routing Policy (OORP) for overlay nodes. However, all of the existing overlay routing algorithms \cite{jones-overlay, paschos-overlay, rai-overlay} impose very stringent assumptions about the behavior of underlay nodes. In particular, the underlay nodes are required to use fixed-path routing (e.g., shortest-path routing) and maintain stability  whenever possible, which fails to account for many important underlay policies (e.g., underlay nodes may use multi-path routing). 



In terms of technical tools, our work leverages techniques from reinforcement learning, since a partially-controllable network with queue-dependent uncontrollable policy can be formulated as an MDP with unknown dynamics. Over the past few decades, many reinforcement learning algorithms have been developed, such as Q-learning \cite{rl-book}, actor critic \cite{ac} and policy gradient \cite{reinforce}. Recently, the successful applications of deep neural networks in reinforcement learning algorithms have produced many deep reinforcement learning algorithms such as DQN \cite{dqn}, DDPG \cite{ddpg} and TRPO \cite{trpo}. However, most of these methods are heuristic-based and do not have any performance guarantees. Among the existing reinforcement learning algorithms there are a few that do provide good performance bounds, such as model-based reinforcement learning algorithms UCRL \cite{UCRL1,UCRL2} and PSRL \cite{PSRL1,PSRL2}. Unfortunately, these algorithms require that the size of the state space be relatively small, which cannot be directly applied in our context since the state space (i.e., queue length space) contains countably-infinite states. 
In this work, we combine the UCRL algorithm with a queue truncation technique and propose the Truncated Upper Confidence Reinforcement Learning (TUCRL) algorithm that has good performance guarantees even with countably-infinite queue length space. 

\subsection{Our Contributions}
In this paper, we investigate optimal network control  for a partially-controllable network. Whereas existing works (e.g., \cite{jones-overlay, paschos-overlay, rai-overlay})  imposed very stringent assumptions about the behavior of uncontrollable nodes  for analytical tractability, this is the first work that establishes stability results under the generalized partially-controllable network model. In particular, we develop two  network control algorithms.

First, we develop a low-complexity Tracking-MaxWeight (TMW) algorithm that is guaranteed to achieve network stability if uncontrollable nodes adopt queue-agnostic policies. The Tracking-MaxWeight algorithm enhances the original MaxWeight algorithm \cite{tassiulas} with an explict learning of the policy used by uncontrollable nodes. 

Next, we propose a new reinforcement learning algorithm (i.e., the TUCRL algorithm) in the more challenging scenario where uncontrollable nodes may use queue-dependent policies. It combines the state-of-art model-based UCRL algorithm  \cite{UCRL1,UCRL2} with a queue truncation technique to overcome the problem with countably-infinite queue length space. We prove that  TUCRL  achieves network stability by dropping a negligible fraction of packets. We also show that the TUCRL algorithm maintains a three-way tradeoff between delay, throughput and convergence rate.

\section{System Model}\label{sec:model}
Consider a networked system with $N$ nodes (the set of all nodes is denoted by $\mathcal{N}$). There are $K$ flows in the network and each node $i$ maintains a queue for buffering undelivered packets for each flow $k$. As a result, there are $NK$ queues in the network, and we denote by $\mathbf{Q}(t)$ the queue length vector at the beginning of time slot $t$, where its element $Q_{ik}(t)$ represents the queue length for flow $k$ at node $i$.

Let  $\omega_t$ be the \emph{network event} that occurs in slot $t$, which includes information about the current network parameters, such as a vector of channel conditions for each link and a vector of exogenous arrivals to each queue.
We assume that the sequence of network events $\{\omega_t\}_{t\ge 0}$ follow a stationary stochastic process.  In particular, the vector of exogenous packet arrivals is denoted by $\mathbf{a}(\omega_t)=\{a_{ik}(t)\}_{i,k}$, where $a_{ik}(t)$ is the number of exogenous arrivals to queue $(i,k)$ in slot $t$. Denote by $\lambda_{ik} = \mathbb{E}[a_{ik}(t)]$ the expected exogenous packet arrival rate to queue $(i,k)$ in steady state.

At the beginning of each time slot $t$, after observing the current network event $\omega_t$ and the current queue length vector $\mathbf{Q}(t)$, each node $i$ needs to make a routing decision $f_{ijk}(t)$ indicating the offered transmission rate for flow $k$ over link $i\rightarrow j$. The corresponding network routing vector is denoted by $\mathbf{f}(t)=\{f_{ijk}(t)\}_{i,j,k}$. 

There are two types of nodes in the network: controllable nodes (the set of controllable nodes is denoted by $\mathcal{C}$) and uncontrollable nodes (the set of uncontrollable nodes is denoted by $\mathcal{U}$). The network operator can only control the routing behavior for controllable nodes while the routing actions taken by uncontrollable nodes cannot be regulated and are only observable at the end of each time slot. In this case, the network routing vector $\mathbf{f}(t)$ can be decomposed into two parts: $\mathbf{f}(t) = (\mathbf{f}^{c}(t), \mathbf{f}^{u}(t))$. Here, $\mathbf{f}^{c}(t)=\{f_{ijk}(t)\}_{i\in\mathcal{C}}$ represents the routing decisions made by controllable nodes (referred to as the \textbf{controllable action}) and $\mathbf{f}^{u}(t)=\{f_{ijk}(t)\}_{i\in\mathcal{U}}$ corresponds to the routing decisions made by uncontrollable nodes (referred to as the \textbf{uncontrollable action}). The routing vectors $\mathbf{f}^{c}(t)$ and $\mathbf{f}^{u}(t)$ are constrained within some action spaces $\mathcal{F}^c_{\omega_t}$ and $\mathcal{F}^u_{\omega_t}$, respectively, that may depend on the current network event $\omega_t$, respectively. The action space for all nodes is denoted by $\mathcal{F}_{\omega_t}=\mathcal{F}_{\omega_t}^c \cup \mathcal{F}_{\omega_t}^u$. The action space can be used to specify routing constraints (e.g., the total transmission rate over each link should not exceed its capacity)  or describe  scheduling constraints (e.g.,  each node can only  transmit to one of its neighbors in each time slot).

Note that when there is not enough backlog to transmit, the actual number of transmitted packets may be less than the offered transmission rate. In particular, we denote by $\widetilde{f}_{ijk}(Q_{ik}(t))$ (or simply $\widetilde{f}_{ijk}(t)$ if the context is clear) the actual number of transmitted packets in flow $k$ over link $i\rightarrow j$ in slot $t$ under the current queue length $Q_{ik}(t)$. Clearly, we have $\widetilde{f}_{ijk}(Q_{ik}(t)) \le \min\{f_{ijk}(t), Q_{ik}(t)\}$.
We further assume that the routing decision can always be chosen to respect the backlog constraints (but the actual actions may not necessarily be queue-respecting). This can be done simply by never attempting to transmit more data than we have. Under such notations, the queuing dynamics are given by
\[
\begin{split}
\small
&Q_{ik}(t+1)\\
=&Q_{ik}(t) + a_{ik}(t) + \sum_{j\in\mathcal{N}}\widetilde{f}_{jik}(t) - \sum_{j\in\mathcal{N}}\widetilde{f}_{ijk}(t)\\
\le &\Big[Q_{ik}(t) + a_{ik}(t) + \sum_{j\in\mathcal{N}}f_{jik}(t) - \sum_{j\in\mathcal{N}}f_{ijk}(t)\Big]^+,
\end{split}
\]
where $[z]^+=\max\{z,0\}$. 
We also make the following boundedness assumption: the amount of exogenous arrivals and the offered transmission rate in each time slot are bounded by some constant $D$, i.e., 
\[
0\le a_{ik}(t)\le D,~~0\le f_{ijk}(t)\le D,~\forall i,j,k.
\]

A network control policy $\pi$ is a mapping from the observed network event $\omega$ and queue length vector $\mathbf{Q}$ to a feasible routing action. In particular, denote by $\pi_c:(\omega,\mathbf{Q})\mapsto \mathbf{f}^c$ a \textbf{controllable policy} and $\pi_u:(\omega,\mathbf{Q})\mapsto \mathbf{f}^u$ an \textbf{uncontrollable policy}. In this paper, we assume that the uncontrollable policy $\pi_u$ remains fixed over time  but is unknown to the network operator. Our objective is to find a controllable policy $\pi_c$ such that network stability can be achieved, as is defined as follows.
\begin{definition}\label{def:stability}
A network is rate stable if 
\[
\lim_{t\rightarrow\infty}\frac{\mathbb{E}[Q_{ik}(t)]}{t}=0,~\forall i,k.
\]
Rate stability means that the average arrival rate to each queue equals the average departure rate from that queue.
\end{definition}

\section{Queue-Agnostic Uncontrollable Policy}\label{sec:agnostic}
In this section, we consider the scenario where the uncontrollable policy is queue-agnostic, which simply observes the current network event $\omega_t$ and makes a routing decision $\mathbf{f}^u(t)\in\mathcal{F}_{\omega_t}^u$ as a stationary function only of $\omega_t$, i.e., $\pi_u:\omega_t\mapsto \mathbf{f}^u(t)$. In the stochastic network optimization literature, such a policy is also referred to as an \textbf{$\omega$-only policy} \cite{neely-sno}. Despite their simple form, $\omega$-only policies can capture a wide range of network control protocols in practice, such as  shortest-path routing protocols (e.g., OSPF, RIP) , multi-path routing protocols (e.g., ECMP) and randomized routing protocols.

Unfortunately, even under simple $\omega$-only uncontrollable policies, existing routing algorithms  may fail to stabilize the network. For example, as is illustrated in Figure \ref{fig:example}, the well-known Backpressure routing algorithm achieves low throughput when uncontrollable node uses queue-agnostic policies. 
In this example, the failure is due to the fact that some uncontrollable node uses a non-stabilizing policy that does not preserve flow conservation  but the Backpressure algorithm is not aware of this non-stabilizing behavior.


In this section, we propose a low-complexity  algorithm that learns the behavior of uncontrollable nodes and achieves network stability under any $\omega$-only uncontrollable policy. 

\subsection{Tracking-MaxWeight Algorithm}
Now we introduce an algorithm that achieves network stability  whenever uncontrollable nodes use an $\omega$-only policy. The algorithm is called Tracking-MaxWeight (TMW), which enhances the original MaxWeight algorithm \cite{tassiulas} with an explicit learning of the policy used by uncontrollable nodes.  Throughout this section, we let $\{\mathbf{f}^u(t)\}_{t\ge 0}$ be the sequence of routing actions that are actually executed by uncontrollable nodes.

The details of the TMW algorithm are presented in Algorithm \ref{alg:TMW}. In each slot $t$, the TMW algorithm  generates the routing actions $\mathbf{g}^c(t)=\{g_{ijk}(t)\}_{i\in\mathcal{C}}$ for controllable nodes and also produces an ``imagined" routing action $\mathbf{g}^u(t)=\{g_{ijk}(t)\}_{i\in\mathcal{U}}$ for uncontrollable nodes, by solving the optimization problem \eqref{eq:TMW}. With these calculated actions, the TMW algorithm then updates two virtual queues. The first virtual queue $\mathbf{X}(t)$ tries to emulate the physical queue $\mathbf{Q}(t)$ but assumes that the imagined uncontrollable action $\mathbf{g}^u(t)$ is applied (while the physical queue is updated using the true uncontrollable action $\mathbf{f}^u(t)$). The second virtual queue $\mathbf{Y}(t)$ tracks the cumulative difference between the imagined uncontrollable actions $\{\mathbf{g}^u(t)\}_{t\ge 0}$ and the actual uncontrollable actions $\{\mathbf{f}^u(t)\}_{t\ge 0}$. In particular, we use $\Delta_{ijk}(t)$ to measure the difference between the imagined routing action $g_{ijk}(t)$ and the true routing action $f_{ijk}(t)$ taken by uncontrollable node $i\in\mathcal{U}$, which is given by
\begin{equation}\label{eq:diff}
\Delta_{ijk}(t) = g_{ijk}(t) - \widetilde{f}_{ijk}(t),~\forall i\in\mathcal{U},
\end{equation}
where $\widetilde{f}_{ijk}(t)$ is the actual number of transmitted packets under the true routing action $f_{ijk}(t)$ given the current queue backlog $\mathbf{Q}(t)$.
Note that for each controllable node $i\in\mathcal{C}$, we simply set $\Delta_{ijk}(t)=0$.

The optimization problem \eqref{eq:TMW}  aims at maximizing a weighted sum of flow variables, which is similar to the optimization problem solved in the original MaxWeight algorithm \cite{tassiulas} except for the setting of weights. In the original MaxWeight algorithm, the weight is $W_{ijk}(t)=Q_{ik}(t)-Q_{jk}(t)$ corresponding to the physical queue backlog differential, while in the Tracking-MaxWeight algorithm the weight $W_{ijk}(t)=X_{ik}(t)-X_{jk}(t)-Y_{ijk}(t)$ accounts for both the backlog differential for virtual queue $\mathbf{X}(t)$ and the backlog of virtual queue $\mathbf{Y}(t)$. The derivation of  \eqref{eq:TMW} is based on the minimization of quadratic Lyapunov drift terms for the two virtual queues:
\[
\small
\begin{split}
\min_{\mathbf{g}(t)\in\mathcal{F}_{\omega_t}} ~~~ &\sum_{i,k} X_{ik}(t)\Big[a_{ik}(t)+\sum_{j}g_{jik}(t)-\sum_j g_{ijk}(t)\Big] \\
&+\sum_{i,j,k} Y_{ijk}(t)\Big(g_{ijk}(t) - \widetilde{f}_{ijk}(t)\Big),
\end{split}
\]
where the first term corresponds to the Lyapunov drift of virtual queue $\mathbf{X}(t)$ and the second term is the Lyapunov drift of virtual queue $\mathbf{Y}(t)$. Note that the minimization is done over controllable actions $\mathbf{g}^c(t)$ and  ``imagined" uncontrollable actions $\mathbf{g}^u(t)$.  Cleaning up irrelevant constants, i.e., $a_{ik}(t)$ and $ \widetilde{f}_{ijk}(t)$, and rearranging terms yield the optimization problem \eqref{eq:TMW}.

\begin{algorithm}[ht!]
\caption{Tracking-MaxWeight (TMW)}\label{alg:TMW}
\begin{algorithmic}[1]
\STATE In each slot $t$, observe the current network event $\omega_t$ and solve the following optimization problem to obtain the controllable action $\mathbf{g}^c(t)$ and the \emph{imagined} uncontrollable action $\mathbf{g}^u(t)$:
\begin{equation}\label{eq:TMW}
\small
\max_{\mathbf{g}(t)\in\mathcal{F}_{\omega_t}}~~~ \sum_{(i,j)}\sum_k g_{ijk}(t) W_{ijk}(t),
\end{equation}
where
\[
\small
W_{ijk}(t)=X_{ik}(t)-X_{jk}(t)-Y_{ijk}(t).
\]

\STATE  Controllable nodes execute the routing decision $\mathbf{g}^c(t)$.

\STATE Observe the true routing action $\mathbf{f}^u(t)$ taken by uncontrollable nodes and update virtual queues:
\[
\small
\begin{split}
&X_{ik}(t+1) = \Big[X_{ik}(t) + a_{ik}(t) + \sum_{j\in\mathcal{N}}g_{jik}(t) - \sum_{j\in\mathcal{N}}g_{ijk}(t)\Big]^+\\
&Y_{ijk}(t+1) = Y_{ijk}(t) + \Delta_{ijk}(t)
\end{split}
\]
where $\Delta_{ijk}(t)$ is defined in \eqref{eq:diff}.
\end{algorithmic}
\end{algorithm}

Next we show that Tracking-MaxWeight achieves  stability whenever uncontrollable nodes use an $\omega$-only policy and the network is \emph{within the stability region}, i.e., there exists a sequence of feasible routing vectors $\{\mathbf{f}^{c}(t)\}_{t\ge 0}$ for controllable nodes such that 
\begin{equation}\label{eq:loaded}
\lambda_{ik} +  \sum_{j\in\mathcal{N}} \widetilde{f}_{jik}-\sum_{j\in\mathcal{N}}\widetilde{f}_{ijk} \le 0,~\forall i,k,
\end{equation}
where
\[
\widetilde{f}_{ijk} = \lim_{T\rightarrow\infty}\frac{1}{T}\sum_{t=0}^{T-1} \mathbb{E}[\widetilde{f}_{ijk}(Q^*_{ik}(t))]
\]
is the long-term average actual flow transmission rate under $\{\mathbf{f}^c(t),\mathbf{f}^u(t)\}_{t\ge 0}$ and $\{\mathbf{Q}^*(t)\}_{t\ge 0}$ is the corresponding optimal queue length trajectory. In other words, \eqref{eq:loaded} requires that  flow conservation  should be preserved for every queue under the optimal controllable policy, otherwise no algorithm can stabilize the network. It is important to note that in \eqref{eq:loaded} the flow conservation law is with respect to the \emph{actual} transmissions  since an uncontrollable node may not preserve  flow conservation in terms of its offered transmissions (e.g., in Figure \ref{fig:example},  the offered incoming rate to node 3 is 40 while the offered outgoing rate from node 3 is 0). The only way to stabilize these nodes is by limiting the amount of backlog such that the actual endogenous arrivals to these nodes are smaller. The performance of Tracking-MaxWeight is given in the following theorem.

\begin{theorem}\label{thm:TMW}
When uncontrollable nodes use an $\omega$-only policy and the network is within the stability region, Tracking-MaxWeight achieves rate stability.
\end{theorem}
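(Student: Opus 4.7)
The plan is a quadratic Lyapunov drift argument on the two virtual queues $\mathbf{X}$ and $\mathbf{Y}$, followed by a coupling step that promotes their rate stability to rate stability of the physical queue $\mathbf{Q}$. I take $L(t)=\tfrac12\sum_{i,k}X_{ik}(t)^2+\tfrac12\sum_{i,j,k}Y_{ijk}(t)^2$, and combine the elementary inequality $([X+v]^+)^2\le X^2+v^2+2Xv$ with the boundedness of arrivals and offered rates to derive a one-slot conditional drift bound of the schematic form
\[
\mathbb{E}\bigl[L(t{+}1){-}L(t)\mid\mathbf{X}(t),\mathbf{Y}(t)\bigr]\le B+\sum_{i,k}X_{ik}(t)\lambda_{ik}-\mathbb{E}\bigl[\textstyle\sum_{i,j,k}g_{ijk}(t)W_{ijk}(t)\bigr]-\sum_{i\in\mathcal{U},j,k}Y_{ijk}(t)\,\mathbb{E}[\widetilde{f}_{ijk}(t)],
\]
with $B$ depending only on $N,K,D$ and $W_{ijk}$ being exactly the TMW weight. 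Because \eqref{eq:TMW} maximises $\sum g_{ijk}W_{ijk}$ over $\mathcal{F}_{\omega_t}$, TMW minimises this upper bound pointwise.

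To exploit the minimality I substitute into the bound an oracle $(g^{c,\star},g^{u,\star})$ where $g^{u,\star}(t)=\pi_u(\omega_t)$ (feasible precisely because $\pi_u$ is $\omega$-only) and $g^{c,\star}$ is the rate-stabilising controllable sequence $\mathbf{f}^{c,\star}$ guaranteed by \eqref{eq:loaded}. Under this oracle the expected net offered flow into each $X_{ik}$ is non-positive by flow balance, while the $Y$-drift collapses to the residual $\sum_{i\in\mathcal{U},j,k}Y_{ijk}(t)\bigl(f^u_{ijk}(t)-\widetilde{f}^u_{ijk}(t)\bigr)$. Since $f^u_{ijk}>\widetilde{f}^u_{ijk}$ can hold only when $Q_{ik}(t)$ is too small to satisfy the offered outflow, an $\varepsilon$-interior strengthening of \eqref{eq:loaded} together with a continuity argument lets me absorb this residual into the constant, yielding $\mathbb{E}[L(t{+}1){-}L(t)\mid\mathbf{X}(t),\mathbf{Y}(t)]\le B'$ uniformly in $t$. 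Telescoping and Jensen's inequality then give $\mathbb{E}[X_{ik}(T)]/T\to 0$ and $\mathbb{E}[Y_{ijk}(T)]/T\to 0$.

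To transfer this conclusion to $\mathbf{Q}$, I sum the exact recursion for $Q_{ik}(T)$, substitute $\widetilde{f}_{ijk}(t)=g_{ijk}(t)-\Delta_{ijk}(t)$ whenever $i\in\mathcal{U}$, use the standing assumption that the controllable routing decision is chosen queue-respecting so that $\widetilde{f}_{ijk}(t)=g_{ijk}(t)$ whenever $i\in\mathcal{C}$, and invoke the clamp-induced lower bound $X_{ik}(T)\ge X_{ik}(0)+\sum_t[a_{ik}(t)+\sum_j g_{jik}(t)-\sum_j g_{ijk}(t)]$ to obtain a pathwise inequality of the form $Q_{ik}(T)\le X_{ik}(T)+\sum_{j\in\mathcal{U}}|Y_{jik}(T)|+\sum_{j\in\mathcal{U}}|Y_{ijk}(T)|+C_0$ with $C_0$ a deterministic constant built from initial conditions. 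Dividing by $T$ and passing to the limit delivers $\lim_{T\to\infty}\mathbb{E}[Q_{ik}(T)]/T=0$, i.e., rate stability. The hardest part is the oracle residual in Step~2: the sign of $Y_{ijk}$ is unconstrained, so a naive $D|Y|$ bound would wipe out the negative drift, and the argument hinges on the structural fact that non-vanishing residuals coincide with near-empty physical queues, which pins their time-averaged contribution to a constant.
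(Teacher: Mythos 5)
Your overall architecture --- a quadratic Lyapunov drift on $(\mathbf{X},\mathbf{Y})$, using the fact that TMW minimizes the drift bound so that any feasible oracle can be substituted, followed by a pathwise inequality transferring stability from the virtual queues to $\mathbf{Q}$ --- is exactly the paper's. But the oracle you substitute in the second step is the wrong one, and the claim you make about it is false. You take $g^{u,\star}(t)=\pi_u(\omega_t)$, i.e., the true \emph{offered} uncontrollable action, and assert that ``the expected net offered flow into each $X_{ik}$ is non-positive by flow balance.'' Condition \eqref{eq:loaded} is flow conservation for the \emph{actual} transmissions $\widetilde{f}_{ijk}$; the offered rates of the true uncontrollable policy need not conserve flow at all, and this is precisely the pathology the model is built to handle. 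In the network of Figure~\ref{fig:example}, node 2 offers rate 40 into node 3 every slot while node 3 offers 0 out, so under your oracle the $X$-drift at node 3 is $+40\,X_{3k}(t)$, and the $Y$-residual on link $2\to 3$ is $Y_{23k}(t)\bigl(40-\widetilde{f}_{23k}(t)\bigr)=40\,Y_{23k}(t)\ge 0$ since $Q_2(t)\equiv 0$. Nothing cancels, and the one-slot drift is not bounded by a constant. Your fallback --- that non-vanishing residuals $f_{ijk}-\widetilde{f}_{ijk}$ coincide with near-empty physical queues and hence contribute only a constant on time average --- also fails, because an uncontrollable queue can remain near-empty forever (node 2 above), so the residual is $\Theta(1)$ per slot while its coefficient $|Y_{23k}(t)|$ grows linearly.

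The missing idea is the paper's Lemma~\ref{lm:loaded-stochastic}: by Lemma 2 of the cited stability framework, there exists a \emph{fictitious} $\omega$-only pair $(\widehat{\pi}_c,\widehat{\pi}_u)$ whose offered rates $\mu_{ijk}$ equal the long-run averaged \emph{actual} rates $\widetilde{f}_{ijk}$ of the optimal trajectory. Substituting that oracle makes the expected $X$-increment non-positive by \eqref{eq:loaded} \emph{and} makes the expected $Y$-increment $\mathbb{E}[\mu_{ijk}(t)-\widetilde{f}_{ijk}(Q_{ik}(t))]$ vanish (the paper equates actual and offered transmissions by restricting attention to slots with $Q_{ik}(t)\ge ND$), so both drift terms are controlled simultaneously without any residual to absorb. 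In short: the oracle's uncontrollable component must be a synthetic $\omega$-only policy offering the \emph{actual} average rates, not the true uncontrollable policy's offered rates. Your final step (the pathwise bound $Q_{ik}(T)\le X_{ik}(T)+\sum_{j}|Y_{ijk}(T)|+\sum_{j\in\mathcal{U}}|Y_{jik}(T)|+C_0$) is essentially the paper's Lemma~\ref{lm:TMW-queue} and is fine in spirit, but the drift step as you wrote it does not go through.
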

\begin{proof}
The proof first shows that the two virtual queues $\mathbf{X}(t)$ and $\mathbf{Y}(t)$ can be stabilized by the TMW algorithm by using the Lyapunov drift analysis. Then we prove that whenever the two virtual queues are stable, the physical queue $\mathbf{Q}(t)$ is also stable. See Appendix \ref{ap:TMW} for details.
\end{proof}

\section{Queue-Dependent Uncontrollable Policy}\label{sec:dependent}
The previous section investigated the scenario where uncontrollable nodes use a queue-agnostic policy (i.e., $\omega$-only policy). In this section, we study a more general case where the uncontrollable policy may be \emph{queue-dependent}, which can be used to describe many state-of-the-art optimal network control protocols. For example, the well-known Backpressure algorithm makes routing decisions based on the currently observed queue length vector. 
In this scenario, the uncontrollable policy is a fixed mapping from the observed network event $\omega_t$ and the observed queue length vector $\mathbf{Q}(t)$ to a routing vector $\mathbf{f}^u(t)$ for uncontrollable nodes, i.e., $\pi_u:\Big(\omega_t,\mathbf{Q}(t)\Big)\mapsto \mathbf{f}^u(t)$. 

Note that the queueing dynamics are
\[
\begin{split}
\small
Q_{ik}(t+1)  & \le \Big[ Q_{ik}(t) + a_{ik}(t) + \sum_{j\in\mathcal{C}}f_{jik}(t) \\
&~~~~+ \sum_{j\in\mathcal{U}}f_{jik}(t)-\sum_{j\in\mathcal{N}}f_{ijk}(t)\Big]^+,~\forall i,k.
\end{split}
\]
Since for each $j\in\mathcal{U}$, its routing variable $f_{jik}(t)$ is an arbitrary (unknown) function of $\mathbf{Q}(t)$, the above queueing dynamics could depend on $\mathbf{Q}(t)$ in an arbitrary (unknown) way that is not in the simple piecewise-linear form as in the classic Lindley recursion. As a result, we rewrite the queueing dynamics as
\begin{equation}\label{eq:new_q}
\mathbf{Q}(t+1) = \beta(\mathbf{f}^c(t),\mathbf{Q}(t), \omega_t),
\end{equation}
where $\beta(\cdot)$ is some unknown function that depends on our controllable routing action $\mathbf{f}^c(t)$, the current queue length vector $\mathbf{Q}(t)$ and the observed network event $\omega_t$. 

Due to the unknown queueing dynamics, many analytical tools for optimal network control break down. For example, the previous Tracking-MaxWeight algorithm utilizes the Lyapunov drift analysis which is not applicable if the queueing dynamics do not follow  the Lindley recursion.  As a result, optimal network control becomes very challenging and fundamentally different from the traditional stochastic network optimization framework. In the following, we first formulate the problem a \textbf{Markov Decision Process (MDP) with unknown dynamics} and then propose a new reinforcement learning algorithm that can achieve network stability under mild conditions.

Before moving on to the technical details, we first introduce some notations and assumptions that will be used throughout this section. For convenience, we define action $\alpha_t\triangleq \mathbf{f}^c(t)$ and simply write ``controllable routing action $\mathbf{f}^c(t)$" as ``action $\alpha_t$", since the uncontrollable routing action $\mathbf{f}^u(t)$ has been implicitly treated as a part of the environment (see queueing dynamics \eqref{eq:new_q}).  For the same reason, ``controllable policy $\pi_c$" and ``policy $\pi$" are also used interchangeably. The action space for $\alpha_t$ is denoted by $\mathcal{A}$ which is assumed to be fixed and finite. We also make the following assumption regarding the optimal system performance.
\begin{assumption}\label{as:mdp-stability}
There exists a policy $\pi^*$ such that $\sum_{i,k} Q^*_{ik}(t)<\infty$ with probability 1 for any $t\ge 0$, where $\mathbf{Q}^*(t)$ is the queue length vector in slot $t$ under policy $\pi^*$.
\end{assumption}
\noindent In other words, it is required that the total queue length should remain bounded under an optimal policy $\pi^*$ otherwise there is no hope for stabilizing the network.  In essence, Assumption \ref{as:mdp-stability} requires that the network be stabilizable by some controllable policy $\pi^*$.

\subsection{MDP Formulation}
We formulate the problem of achieving network stability as an MDP $M=(\mathcal{A}, \mathcal{S}, \theta, P)$. Here $\mathcal{A}$ is the routing action space for controllable nodes, and $\mathcal{S}$ is the state space that corresponds to the queue length vector space $\mathcal{Q}$. The cost function $\theta(\alpha_t,\mathbf{Q}(t))$ under  action $\alpha_t$ and state $\mathbf{Q}(t)$ is given by $\theta(\alpha_t,\mathbf{Q}(t))=\sum_{i,k} Q_{ik}(t)$, which corresponds to the sum of queue lengths in  slot $t$. In addition, $P$ is the state transition matrix, where $P(\mathbf{Q}'|\mathbf{Q},\alpha)$ is the probability that the next state is $\mathbf{Q}'$ when  action $\alpha$ is taken under the current state $\mathbf{Q}$. Note that the transition matrix $P$ is generated according to the queueing dynamics \eqref{eq:new_q}, and that the influence of network event and uncontrollable routing action has been implicitly incorporated into the probabilistic transition matrix $P$. Note also that the queueing dynamics $\beta(\cdot)$ are unknown, so this is an MDP with unknown dynamics, which is also referred to as a \textbf{Reinforcement Learning (RL)} problem \cite{sutton}.

Let $J^\pi\Big(M,\mathbf{Q}(0)\Big)$ be the time-average expected total queue length  when  policy $\pi$ is applied in MDP $M$ and the initial queue length vector is $\mathbf{Q}(0)$, i.e.,
\[
J^{\pi}\Big(M,\mathbf{Q}(0)\Big) =\lim_{T\rightarrow \infty}\frac{1}{T}\sum_{t=0}^{T-1} \mathbb{E}_{\pi,M}\Big[\sum_{i,k} Q_{ik}(t)\Big|\mathbf{Q}(0)\Big],
\]
where the expectation $\mathbf{E}_{\pi,M}[\cdot]$ is with respect to the randomness of the queue length trajectory $\{\mathbf{Q}(t)\}_{t\ge 0}$ when policy $\pi$ is applied in MDP $M$. Also let $J^*\Big(M,\mathbf{Q}(0)\Big)=\min_\pi J^{\pi}\Big(M,\mathbf{Q}(0)\Big)$ be the minimum time-average expected queue length under an optimal policy $\pi^*$. Our objective is to find an optimal policy that solves the MDP and achieves the minimum average queue length.

\subsection{Challenges to Solving the MDP}
The MDP has an unknown transition structure, which gives rise to an ``exploration-exploitation" tradeoff. On one hand, we need to exploit the existing knowledge to make the best (myopic) decision; on the other hand, it is necessary to explore new states in order to learn which states may lead to lower costs in the future. Moreover, there might be some ``trapping"  sub-optimal states that  take  a long time (or is even impossible) for any policy to escape. Any algorithm that has zero knowledge about system dynamics at the beginning is likely to get trapped in these states during the exploration phase. Therefore, we need to impose restrictions on the transition structure in the MDP model. In particular, we restrict our consideration to weakly communicating MDPs with finite communication time, defined as follows.
\begin{assumption}\label{as:diameter}
For any two queue length vectors $\mathbf{Q}$ and $\mathbf{Q}'$ (except for those which are transient under every policy), there exists a policy $\pi$ that can move from $\mathbf{Q}$ to $\mathbf{Q}'$ within $L||\mathbf{Q}'-\mathbf{Q}||_1$ time slots (in expectation), where $L$ is a constant.
\end{assumption}
\noindent In other words, it is assumed that  there is no ``trapping" state in the system otherwise no reinforcement learning algorithm can be guranteed to avoid the traps and optimally solve the MDP. Note that in a weakly communicating MDP, the optimal average cost does not depend on the initial state (cf. \cite{MDP}, Section 8.3.3). Thus we drop the dependence on the initial state $\mathbf{Q}(0)$, and write the optimal average cost (queue length) as $J^*(M)$.

Another challenge is that the MDP has a countably-infinite state space (i.e., queue length vector space). Existing reinforcement learning methods that can handle such an infinite state space are mostly heuristic-based (e.g., \cite{dqn}\cite{ddpg}\cite{trpo}), and do not have any performance guarantees. On the other hand, there are a few reinforcement learning algorithms that do have good performance guarantees, but these algorithms require that the size of the state space be relatively small. Even if we consider a finite time horizon $T$, the size of the queue length vector space could be up to $O(T^N)$ (assuming bounded arrivals in each slot), which could lead to weak performance bounds. For example, in the UCRL algorithm \cite{UCRL1,UCRL2}, the regret bound is $O(S\sqrt{T})$, where $S$ is the size of the state space. If UCRL is applied in our context, the resulting regret bound would be $O(T^{N+0.5})$ which is a trivial super-linear regret bound.

\subsection{TUCRL Algorithm}
In this section, we develop an algorithm that achieves network stability under Assumptions \ref{as:mdp-stability} and \ref{as:diameter}. We call our algorithm Truncated Upper Confidence Reinforcement Learning (TUCRL), as it combines the model-based UCRL algorithm \cite{UCRL1,UCRL2} with a queue truncation technique that resolves the infinite state space problem.

Specifically, consider a \emph{truncated system} where new exogenous packet arrivals are dropped when the total queue length $\sum_{i,k} Q_{ik}(t)$ reaches $V-1$ for some threshold $V\ge 1$. In such a truncated system, the state space is the truncated queue length vector space $\mathcal{Q}_V$ which contains all queue length vectors where the length of each queue does not exceed $V-1$. In order for packet dropping to be feasible, we assume that there is an admission control action that can shed new exogenous packets as needed.

Our TUCRL algorithm applies the model-based UCRL algorithm \cite{UCRL1,UCRL2} in  the truncated system, which maintains an estimation for the unknown queueing dynamics  and then computes the optimal policy under the estimated dynamics.  It applies the ``optimistic principle"  for exploration, where under-explored state-action pairs are assumed to be able to result in lower costs, which implicitly encourages the exploration of novel state-action pairs.

The detailed description of TUCRL is presented in Algorithm \ref{alg:RL}, which is similar to the standard UCRL algorithm  except that queue truncation is applied when appropriate. Specifically, the TUCRL algorithm proceeds in episodes, and the length of each episode is dynamically determined. In episode $\ell$, the TUCRL algorithm first constructs an empirical estimation $\hat{P}$ for the transition matrix based on historical observations (step 1). In particular, the estimated transition probability from state $\mathbf{Q}$ to $\mathbf{Q}'$ under action $\alpha$ is 
\begin{equation}\label{eq:transition-estimation}
\hat{P}(\mathbf{Q}'|\mathbf{Q},\alpha)=\frac{n_\ell(\mathbf{Q},\alpha,\mathbf{Q}')}{n_\ell(\mathbf{Q},\alpha)},
\end{equation}
where $n_\ell(\mathbf{Q},\alpha)$ is the cumulative number of visits to state-action pair $(\mathbf{Q},\alpha)$ up until the beginning of episode $\ell$ and $n_{\ell}(\mathbf{Q},\alpha,\mathbf{Q}')$ is the number of times that transition $(\mathbf{Q},\alpha)\rightarrow \mathbf{Q}'$ happens up to the beginning of episode $\ell$. Note that if $n_{\ell}(\mathbf{Q},\alpha)=0$, the estimated transition probability is set to be zero.

Then the TUCRL algorithm constructs an upper confidence set $\mathcal{M}_{\ell}$ for all plausible MDP models based on the empirical estimation $\hat{P}$ (step 2). The upper confidence set is constructed in a way such that it contains the true MDP model with high probability. Specifically, the upper confidence set $\mathcal{M}_{\ell}$ contains all the MDPs with truncated queue length space $\mathcal{Q}_V$ and transition matrix $P\in\mathcal{P}_{\ell}$ where
\begin{equation}\label{eq:confidence-set}
\small
\mathcal{P}_{\ell}=\Big\{P:\Big|\Big|P(\cdot|\mathbf{Q},\alpha)-\hat{P}(\cdot|\mathbf{Q},\alpha)\Big|\Big|_1 \le \sqrt{\frac{ C\log(2|\mathcal{A}|t_{\ell} V)}{\max\{1,n_{\ell}(\mathbf{Q},\alpha)\}}} \Big\}.
\end{equation}
Here, $V$ is the queue truncation threshold, $t_{\ell}$ is the starting time of episode $\ell$ and $C\triangleq 2(2ND+1)^N$ is a constant.

Next, the TUCRL algorithm selects an ``optimistic MDP" $M_{\ell}$ that yields the minimum average queue length among all the plausible MDPs in the confidence set $\mathcal{M}_{\ell}$, and  computes a nearly-optimal policy $\pi_{\ell}$ under MDP $M_{\ell}$ (step 3). The joint selection of the optimistic MDP and the calculation of the nearly-optimal policy are referred to an \textbf{optimistic planning} \cite{optimistic}. There are many efficient methods for performing optimistic planning, such as Extended Value Iteration \cite{UCRL2} and OP-MDP \cite{op-mdp}. For completeness, we provide a description of Extended Value Iteration in Appendix \ref{ap:EVI}. 

Finally, the computed policy $\pi_{\ell}$ is executed until the stopping condition of episode $\ell$ is triggered. An episode ends when the number of visits to some state-action pair doubles, i.e., when we encounter a state-action pair $(\mathbf{Q}(t),\alpha_t)$ such that its visiting frequency in episode $\ell$ ($v_{\ell}(\mathbf{Q}(t),\alpha_t)$) equals its cumulative visiting frequency up to the beginning of episode $\ell$ ($n_{\ell}(\mathbf{Q}(t),\alpha_t)$). We will show that this  stopping condition guarantees that the  total number of episodes up to time $T$ is $O(V^N \log T)$ (see Appendix \ref{as:epi}). Note that during the execution of policy $\pi_{\ell}$, new packet arrivals may be dropped if the total queue length exceeds $V-1$. Here, the dropped packets could be any new arrivals to any queue. We will prove that the fraction of dropped packets is negligible if the threshold $V$ is properly selected.

\begin{algorithm}
\caption{Truncated Upper Confidence Reinforcement Learning (TUCRL)}\label{alg:RL}
\begin{algorithmic}
\REQUIRE  queue truncation threshold $V$ \\

\vspace{1mm}
\STATE Set $t=0$
\FOR{episode $\ell=1,2,\cdots$} \vspace{2mm}
\STATE \textbf{1. Initialize episode $\ell$:}\label{step:init}
\STATE \quad $\bullet$ Set the start of episode $\ell$: $t_{\ell}=t$
\STATE \quad $\bullet$ Initialize state-action count for episode $\ell$: $v_{\ell}(\mathbf{Q},\alpha)=0$ 
\STATE \quad $\bullet$ Update accumulative state-action count $n_{\ell}(\mathbf{Q},\alpha)$ and transition count $n_{\ell}(\mathbf{Q},\alpha,\mathbf{\mathbf{Q}}')$ up to episode $\ell$
\STATE \quad $\bullet$ Estimate transition probability $\hat{P}(\mathbf{Q}'|\mathbf{Q},\alpha)$ according to \eqref{eq:transition-estimation} for any $\mathbf{Q},\mathbf{Q}'\in\mathcal{Q}_V$ and $\alpha\in\mathcal{A}$  \vspace{3mm}
\STATE \textbf{2. Construct upper confidence set:}\label{step:construct}
\STATE \quad  Construct a confidence set $\mathcal{M}_{\ell}$ that contains all the MDPs with truncated queue space $\mathcal{Q}_V$ and transition matrix $P\in\mathcal{P}_{\ell}$ as shown in \eqref{eq:confidence-set}  \vspace{3mm}
\STATE \textbf{3. Optimistic planning:}\label{step:plan}
\STATE \quad Compute the optimistic MDP model $M_{\ell}$ (that yields the minimum average total queue length) in the confidence set $\mathcal{M}_{\ell}$ and a nearly-optimal policy $\pi_{\ell}$ under $M_{\ell}$ (up to accuracy $\frac{1}{\sqrt{t_{\ell}}}$) \vspace{3mm}
\STATE \textbf{4. Execute policy (with packet dropping):}\label{step:ex}
\REPEAT
\STATE $\bullet$ Observe current queue length vector $\mathbf{Q}(t)$ and new exogenous arrivals $\mathbf{a}(t)$
\STATE $\bullet$ Arbitrarily drop $\Big[\sum_{i,k} \Big(Q_{ik}(t) + a_{ik}(t)\Big) - V+1\Big]^+$ newly arrived packets from the network
\STATE $\bullet$ Take action $\alpha_t=\pi_{\ell}(\mathbf{Q}(t))$ 
\STATE $\bullet$ Update $v_{\ell}(\mathbf{Q}(t),\alpha_t)=v_{\ell}(\mathbf{Q}(t),\alpha_t) + 1$
\STATE $\bullet$ $t=t+1$
\UNTIL{$v_{\ell}(\mathbf{Q}(t),\alpha_t)=\max\{1,n_{\ell}(\mathbf{Q}(t),\alpha_t)\}$}
\label{step:stopping}
\ENDFOR
\end{algorithmic}
\end{algorithm}

\subsection{Performance of TUCRL Algorithm}
The following theorem characterizes the performance of the TUCRL algorithm regarding its queue length, packet dropping rate and convergence rate.
\begin{theorem}\label{thm:tucrl-final}
Under Assumptions \ref{as:mdp-stability} and \ref{as:diameter}, the performance of the TUCRL algorithm is as follows.

\vspace{1mm}

$\bullet$ \textbf{(Queue Length)} The time-average expected queue length converges to a bounded value:
\[
\lim_{T\rightarrow\infty}\frac{1}{T}\sum_{t=0}^{T-1}\sum_{i,k} \mathbb{E}[Q_{ik}(t)] \le \Theta(1)-\Theta\Big(\frac{1}{V}\Big).
\]

$\bullet$ \textbf{(Packet Dropping Rate)} The long-term expected fraction of dropped packets is
\[
\lim_{T\rightarrow\infty} \mathbb{E}[\eta_T] \le \Theta\Big(\frac{1}{V}\Big),
\]
where $\eta_T$ is the fraction of dropped packets within $T$ slots.

$\bullet$ \textbf{(Convergence Rate)} The time-average expected queue length after $T$ slots is within a
$\widetilde{\mathcal{O}}\Big(\frac{poly(V)}{\sqrt{T}}\Big)$-neighborhood of the steady-state expected queue length,
where $poly(V)$ is some polynomial in $V$ and $\widetilde{\mathcal{O}}$ is the big-O notation that ignores any logarithmic term.
\end{theorem}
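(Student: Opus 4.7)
The plan is to reduce the analysis to the UCRL regret bound applied to a truncated MDP $M_V$ whose state space $\mathcal{Q}_V$ has size $O(V^N)$, and then convert the resulting regret into the three claimed guarantees.

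First, I would establish basic structural properties of $M_V$. The diameter of $M_V$ is $O(LV)$: Assumption \ref{as:diameter} implies that any two non-transient states $\mathbf{Q}, \mathbf{Q}' \in \mathcal{Q}_V$ can be connected by some policy in expected time at most $L\|\mathbf{Q}'-\mathbf{Q}\|_1 \le 2L(V-1)$. To bound $J^*(M_V)$, I would lift the stabilizing policy $\pi^*$ from Assumption \ref{as:mdp-stability} into the truncated system by composing it with admission control that only drops arrivals when $\sum_{i,k}(Q_{ik}+a_{ik}) > V-1$. A Markov-inequality argument on the untruncated stationary distribution of $\sum_{i,k}Q^*_{ik}(t)$ (finite a.s.\ by Assumption \ref{as:mdp-stability}) shows that the truncation boundary is hit with per-slot probability $O(1/V)$, yielding simultaneously a dropping rate of $O(1/V)$ and an average queue length within $O(1/V)$ of its untruncated value. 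Consequently $J^*(M_V) \le \Theta(1)-\Theta(1/V)$.

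Second, I would invoke the UCRL2 regret analysis \cite{UCRL2} for $M_V$. TUCRL is, by construction, UCRL run on the truncated MDP: the confidence radius in \eqref{eq:confidence-set} is chosen for state space of size at most $V^N$, and the doubling stopping rule yields the $O(V^N\log T)$ episode bound playing the role of UCRL's $O(SA\log T)$ bound. The standard telescoping argument over episodes then gives regret $\widetilde{\mathcal{O}}\bigl(D(M_V)\cdot|\mathcal{Q}_V|\sqrt{|\mathcal{A}|T}\bigr)=\widetilde{\mathcal{O}}(\mathrm{poly}(V)\sqrt{T})$ against $J^*(M_V)$. Because per-slot costs are $O(V)$ rather than $O(1)$, I would carry an explicit factor of $V$ through extended value iteration; this contributes to the $\mathrm{poly}(V)$ prefactor but preserves the $\sqrt{T}$ scaling.

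Third, I would translate the regret into the three claims. Dividing by $T$ and letting $T\to\infty$, the regret term vanishes and the queue-length bound follows from $J^*(M_V) \le \Theta(1)-\Theta(1/V)$; the convergence-rate statement is the finite-$T$ form of the same inequality. For the dropping-rate bound, TUCRL's expected dropping fraction cannot exceed that of any $J^*(M_V)$-achieving policy plus a $\widetilde{\mathcal{O}}(\mathrm{poly}(V)/\sqrt{T})$ slack (proved by a parallel regret argument on the dropping counter), and the $\pi^*$-with-admission-control witness from Stage 1 bounds this infimum by $O(1/V)$. The main obstacle I anticipate is the joint bound on $J^*(M_V)$ and the dropping rate: truncation alters dynamics near the boundary, so a direct comparison between untruncated and truncated trajectories requires a careful coupling that shows the admission-controlled chain inherits the tail bound of the untruncated $\pi^*$-chain. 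A secondary subtlety is making the factor-of-$V$ cost inflation interact cleanly with the $O(V^N)$ state-space size inside extended value iteration, so that the final convergence rate stays polynomial in $V$ rather than exponential.
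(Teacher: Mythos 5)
Your overall architecture matches the paper's: run the UCRL machinery on the truncated MDP $M_V$ (whose state space has size $O(V^N)$ and, by Assumption \ref{as:diameter}, diameter $O(LV)$ with bias span $O(LN^2V^2)$), obtain a bound of the form $\sum_t\sum_{i,k}Q_{ik}(t)\le TJ_V^*+\widetilde{\mathcal{O}}(\mathrm{poly}(V)\sqrt{T})$, and compare $J_V^*$ to $J^*$ via the stabilizing policy of Assumption \ref{as:mdp-stability}. The paper carries the regret analysis out explicitly (Theorem \ref{thm:truncated-bound}, Lemmas \ref{lm:confidence}--\ref{lm:dominate}) rather than citing UCRL2 as a black box, and it handles the benchmark comparison more simply than your proposed coupling: by Markov's inequality on $\max_t\sum_{i,k}Q^*_{ik}(t)$, with probability $1-O(1/V)$ the optimal untruncated trajectory never reaches the threshold, so on that event $J_V^*=J^*$ exactly (Lemma \ref{lm:cc}), and the complementary event contributes at most $\Theta(1/V)\cdot V$ to the expectation. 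No trajectory coupling near the boundary is needed.

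The genuine gap is in your packet-dropping argument. A ``parallel regret argument on the dropping counter'' does not go through: UCRL-style optimism holds only for the cost the algorithm actually optimizes (the total queue length), so there is no guarantee that the policies $\pi_\ell$ computed by TUCRL have near-optimal \emph{dropping} rate, and bounding TUCRL's drops by those of a $J_V^*$-achieving benchmark plus a regret term is unjustified. Your Stage-1 observation that the benchmark hits the truncation boundary with per-slot probability $O(1/V)$ says nothing about how often TUCRL's own trajectory does. The paper's route (Theorem \ref{thm:drop}) avoids this entirely: packets can only be dropped in a slot where $\sum_{i,k}Q_{ik}(t)\ge V-ND$, so the number $|\mathcal{X}|$ of such slots satisfies $(V-ND)|\mathcal{X}|\le \sum_t\sum_{i,k}Q_{ik}(t)\le TJ_V^*+\gamma(T,V)$ by the queue-length bound already established; dividing by the total number of exogenous arrivals (concentrated around $T\sum_{i,k}\lambda_{ik}$ via Hoeffding) gives $\eta_T=O\bigl(\gamma(T,V)/(TV)+J_V^*/V\bigr)$. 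You would need to replace your Stage-3 dropping argument with a step of this form, i.e., one that converts the cumulative queue-length guarantee for TUCRL itself into a bound on the time spent near the cap.
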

\begin{proof}
We first find an upper bound on the total queue length under TUCRL in the truncated system. Based on the queue length upper bound, we further analyze the fraction of time when queue truncation is triggered by using concentration inequalities. See Appendix \ref{ap:tucrl-final} for details.
\end{proof}

\vspace{1mm}

There are several important observations regarding Theorem \ref{thm:tucrl-final}. First, the TUCRL algorithm achieves bounded queue length by dropping a negligible fraction of packets under a suitably large value of $V$. 
Second, there is a \textbf{three-way tradeoff} between total queue length (delay), packet dropping rate (throughput) and convergence rate. For example, by increasing the value of $V$, the packet dropping rate becomes smaller (i.e., throughput becomes higher) but the total queue length (delay) increases and the convergence becomes slower. Similar three-way tradeoffs between utility, delay and convergence rate are discussed in \cite{ness-heavy}.

\vspace{2mm}

\noindent \textbf{Complexity of TUCRL.} The time complexity of TUCRL is dominated by the complexity of the optimistic planning module (step 3) which is implementation-dependent. For example, if a naive Extended Value Iteration (see Appendix \ref{ap:EVI}) is used, the time complexity of each value iteration step  is exponential in the number of queues and thus cannot scale to large-scale problems. One way to scale the optimisic planning module is by using approximate dynamic programming that employs various approximation techniques in the planning procedure, such as using linear functions or neural networks to approximate the value function (see \cite{ADP} for a comprehensive introduction). Recent deep reinforcement learning techniques may also be leveraged to efficiently perform value iterations in large-scale problems, such as Randomized Least-Squares Value Iteration (RLSVI) \cite{deep-exploration}, Value Iteration Networks (VIN) \cite{VIN} and Value Prediction Networks (VPN) \cite{VPN}.
Such approximations will not lead to significant changes in the performance of TUCRL since we only require an approximate solution in step 3.

\section{Simulation Results}\label{sec:simulation}
\subsection{Scenario 1: Queue-Agnostic Uncontrollable Policy}
We first study the partially-controllable network shown in Figure \ref{fig:example_2}. There are two flows: $1\rightarrow 4$ and $6\rightarrow 4$.  Each node in the network needs to make a routing and scheduling decision in every time slot. The constraint is that each node can transmit to only one of its neighbors in each time slot and the transmission rate  over each link cannot exceed its capacity. Node 2 and node 3 are uncontrollable nodes that use randomized queue-agnostic policies. Specifically, uncontrollable node 2 uses a randomized routing algorithm that transmits any packets it received to either node 3 or node 5 with an equal probability in each time slot. Uncontrollable node 3 uses a randomized scheduling policy that serves flow $1\rightarrow 4$ or flow $6\rightarrow 4$ with an equal probability in each time slot. The arrival rate of flow $6\rightarrow 4$ is 5. In this case, it can be shown that the maximum supportable arrival rate for flow $1\rightarrow 4$ is 25 given the routing constraints and the behavior of uncontrollable nodes.  

\begin{figure}[ht!]
\begin{center}
\includegraphics[width=2.0in]{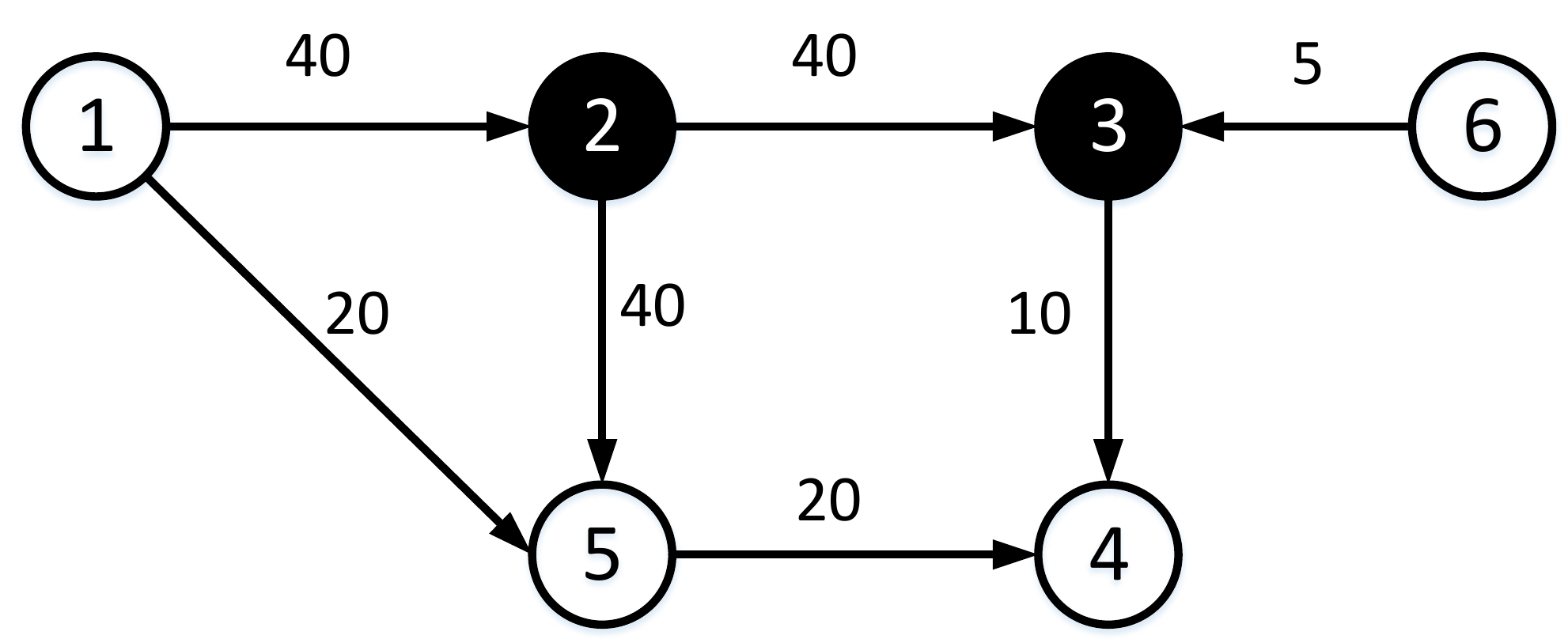}
\caption{Network topology used in simulation scenario 1. The number next to each link is its capacity. Each node can only transmit to one of its neighbors in each slot. Black nodes are uncontrollable nodes that use randomized queue-agnostic policies.}
\label{fig:example_2}
\end{center}
\end{figure}

\begin{figure*}[ht!]
\subfigure[Throughput performance of MaxWeight and Tracking-MaxWeight.]
{\label{fig:tmw_load}
\includegraphics[width=55mm,height=45mm]{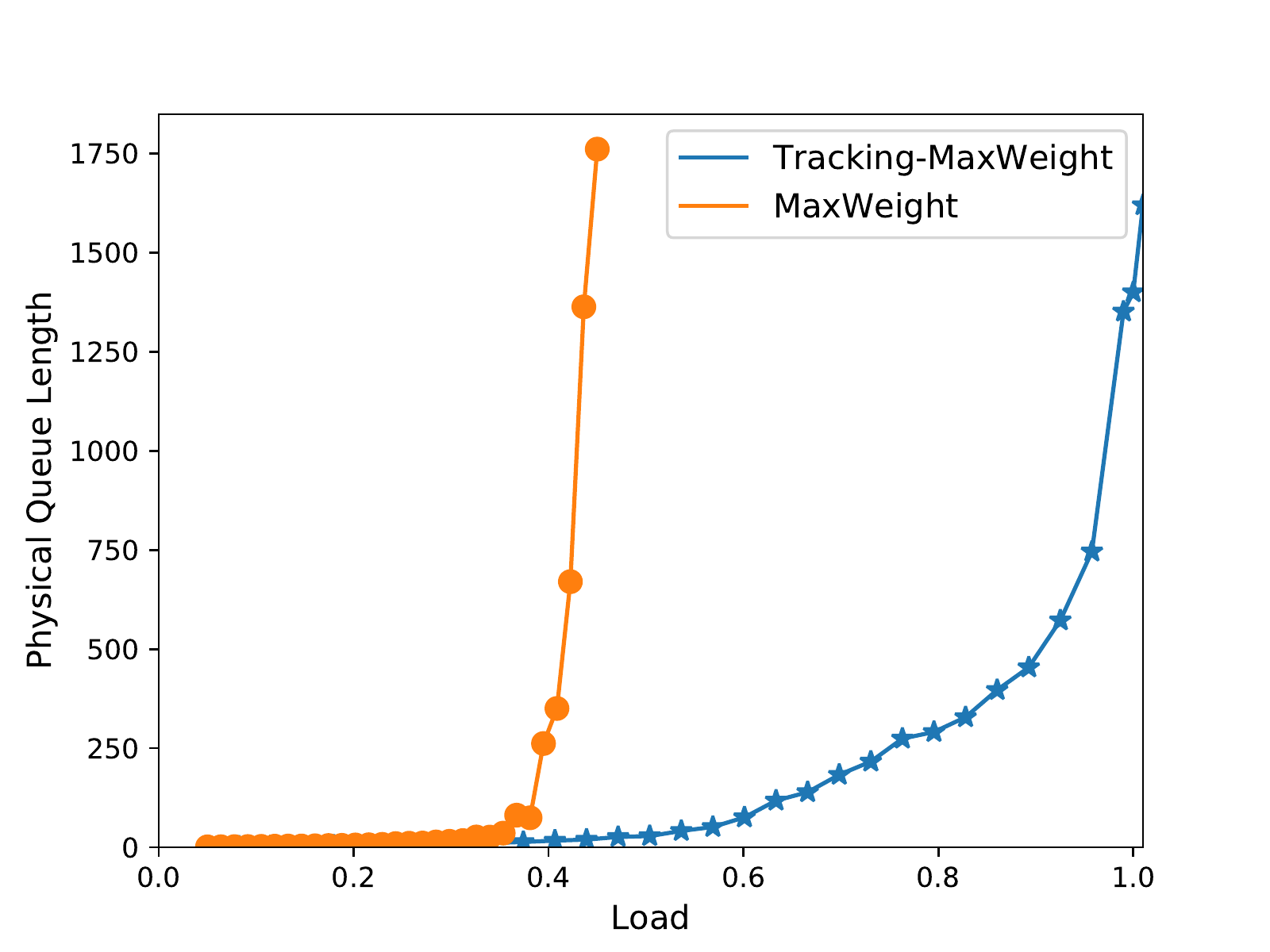}}\hspace{3mm}
\subfigure[Queue length under the TMW algorithm (load = 0.99).]
{\label{fig:tmw_queue}\includegraphics[width=55mm,height=42mm]{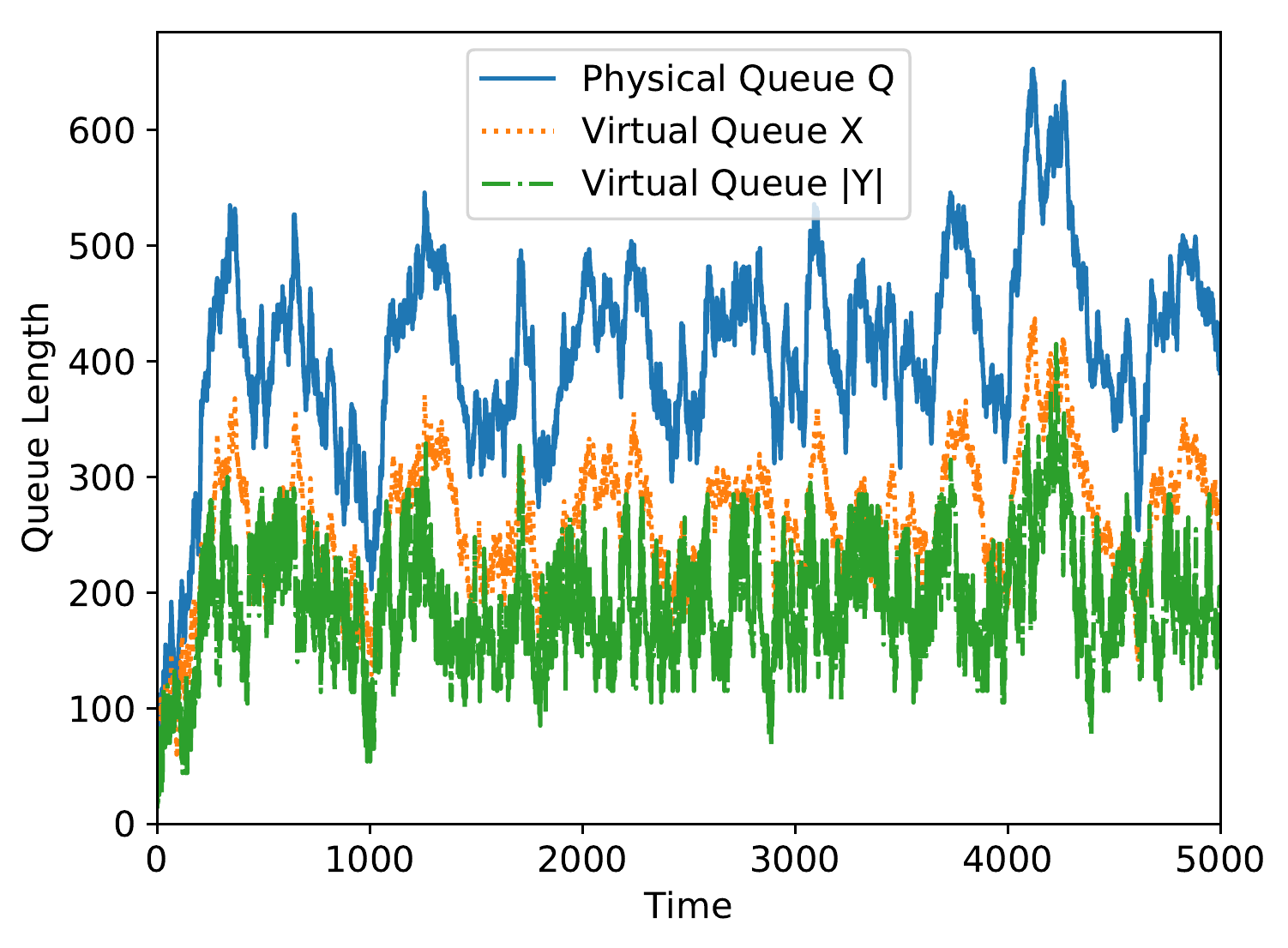}}\hspace{3mm}
\subfigure[The TMW algorithm quickly learns that node 3 serves flow $1\rightarrow 4$ with probability 0.5.]
{\label{fig:tmw_estimation}\includegraphics[width=55mm,height=42mm]{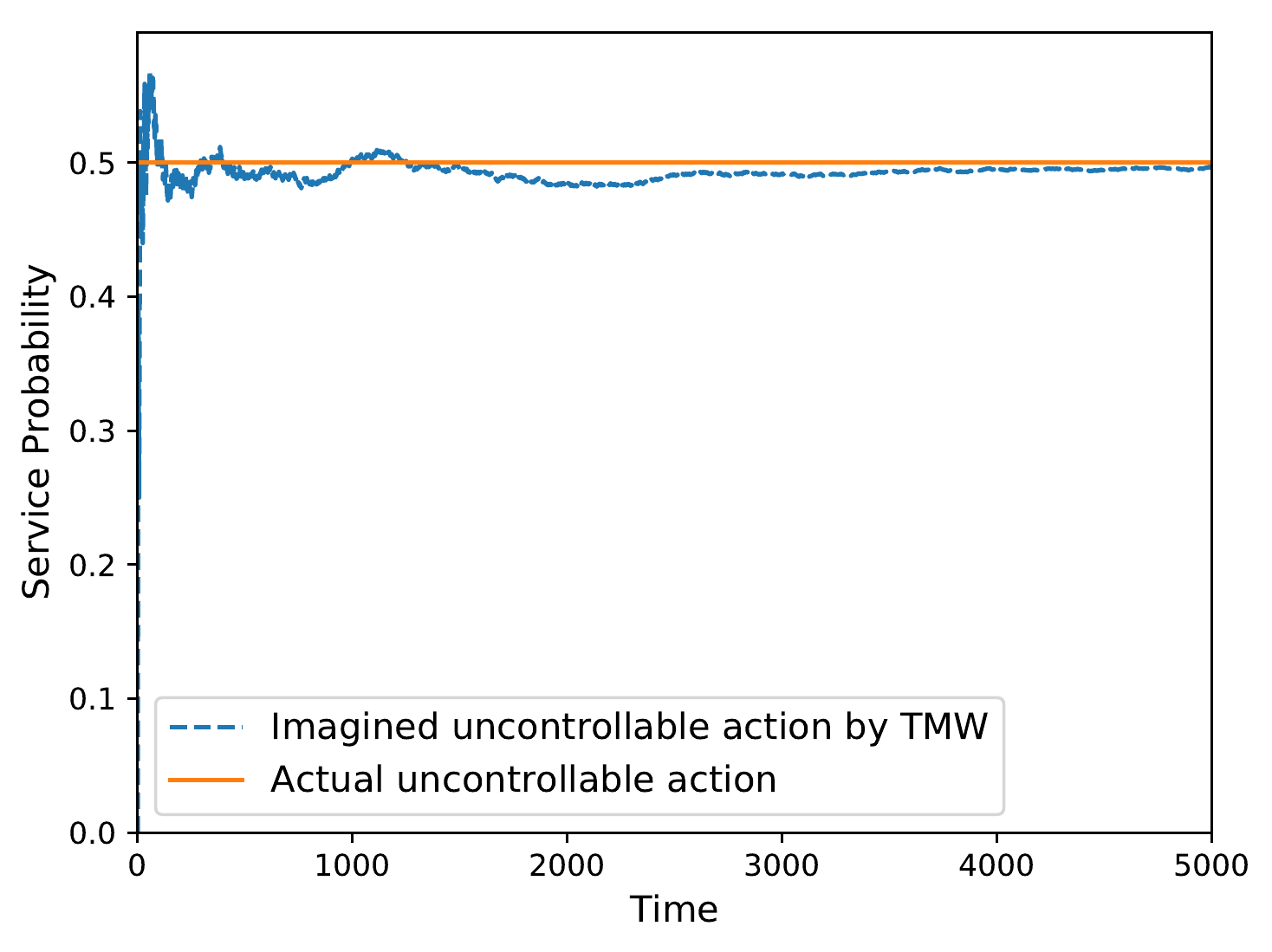}}
\caption{Performance of the Tracking-MaxWeight (TMW) algorithm in Scenario 1.}
\label{fig:scenario1}
\end{figure*}

We have shown in Section \ref{sec:agnostic} that the Tracking-MaxWeight (TMW) algorithm achieves the optimal throughput in this scenario. In Figure \ref{fig:tmw_load}, we compare Tracking-MaxWeight  with the well-known MaxWeight algorithm (i.e., BackPressure routing), in terms of the supportable rate for flow $1\rightarrow 4$. Specifically, Figure \ref{fig:tmw_load} shows the total queue length achieved by MaxWeight and Tracking-MaxWeight under different system loads (if the load is $\rho$, then the arrival rate of flow $1\rightarrow 4$ is $25\rho$ while the arrival rate of flow $6\rightarrow 4$ is fixed to 5). It is observed that MaxWeight can only support around 40\% arrivals (the queue length under MaxWeight blows up at load $\approx 0.4$). By comparison, our Tracking-MaxWeight achieves the optimal throughput.

We further examine the behavior of the Tracking-MaxWeight algorithm in Figure \ref{fig:tmw_queue} and Figure \ref{fig:tmw_estimation}. Specifically, Figure \ref{fig:tmw_queue} shows the queue length trajectory for the physical queue  $\mathbf{Q}(t)$ and the two virtual queues $\mathbf{X}(t),\mathbf{Y}(t)$. As our theory predicts, both the physical queue $\mathbf{Q}(t)$ and the two virtual queues $\mathbf{X}(t),\mathbf{Y}(t)$ are stable under the TMW algorithm. Figure \ref{fig:tmw_estimation} shows the learning curve of the TMW algorithm for the uncontrollable policy used by node $3$. In particular, node $3$ uses  randomized scheduling that serves flow $1\rightarrow 4$ and flow $6\rightarrow 4$ with an equal probability 0.5. It is observed in Figure \ref{fig:tmw_estimation} that the TMW algorithm quickly learns the service probability for flow $1\rightarrow 4$ at node 3 (i.e., the ``imagined uncontrollable action" in TMW approaches the true uncontrollable action).

\subsection{Scenario 2: Queue-Dependent Uncontrollable Policy}
Next we study a more challenging scenario where the action taken by uncontrollable nodes is queue-dependent. In particular, consider the network topology shown in Figure \ref{fig:sim_topology} where node 2 and node 3 are uncontrollable. There is only one flow $1\rightarrow 4$ and the constraint is that each node can transmit to only one of its neighbours in each time slot. The policy used by the two uncontrollable nodes is as follows. Let $\mu_{24}(t)$ and $\mu_{34}(t)$ be the transmission rate that node 2 and node 3 allocates to the flow in slot $t$, respectively. Then
\[
\small
(\mu_{24}(t),\mu_{34}(t))=
\begin{cases} 
      (0.5, 0) &  Q_3(t)\le 10 \\
      (0, 1) & Q_2(t)\le 10\text{ and }Q_3(t)>10 \\
      (0.25, 0.25) & Q_2(t)> 10\text{ and }Q_3(t)>10
  \end{cases}
\]
As a result, the maximum throughput of 1 can be supported only if $Q_2(t)$ is small ($Q_2(t)\le 10$) and $Q_3(t)$ is large ($Q_3(t)>10$). Although this is an artificial example, it sheds light on the challenges when uncontrollable nodes use queue-dependent policies: any throughput-optimal algorithm should be able to efficiently learn which queue length region can support the maximum throughput and keep the queue length within this region.  

\begin{figure}[ht!]
\begin{center}
\includegraphics[width=1.0in]{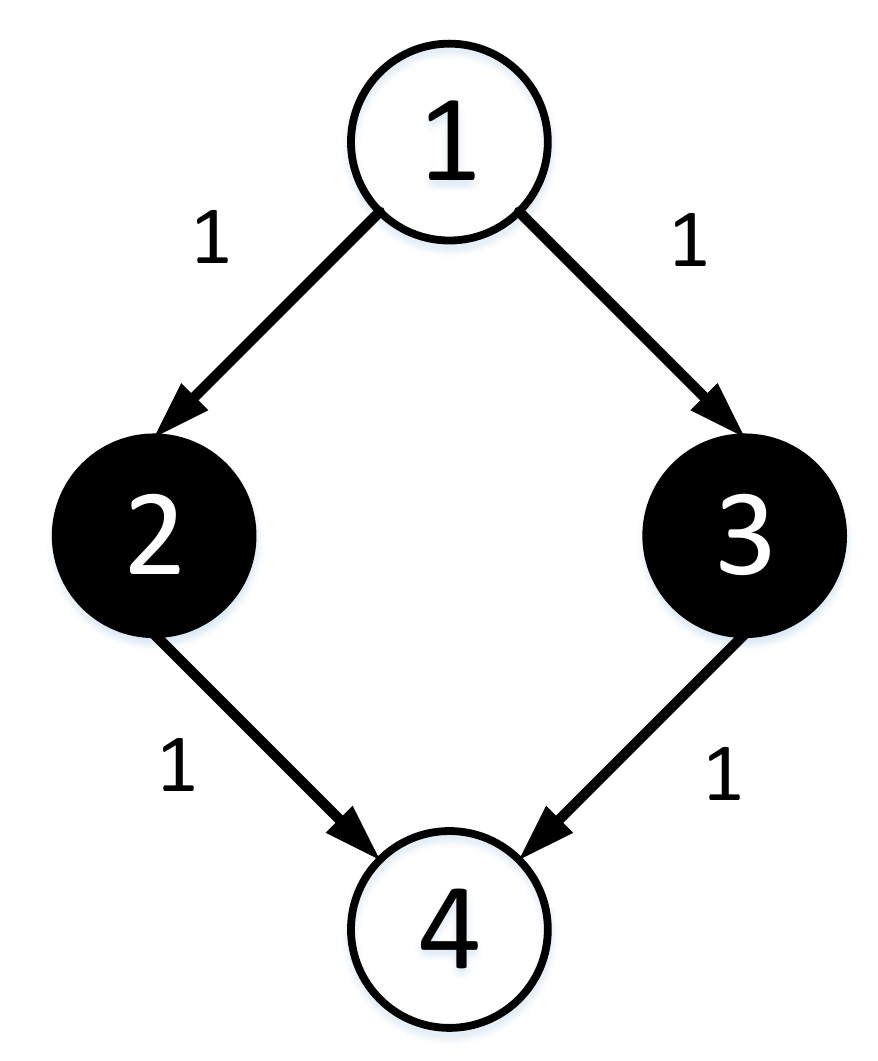}
\caption{Network topology used in simulation scenario 2 The number next to each link is its capacity. Each node can only transmit to one of its neighbors in each slot. There is only one flow  $1\rightarrow 4$. Black nodes are uncontrollable nodes that use queue-dependent policies.}
\label{fig:sim_topology}
\end{center}
\end{figure}

\begin{figure*}[htbp]
\centering
\begin{minipage}[t]{0.32\textwidth}
\centering
\includegraphics[width=2.2in, height=1.9in]{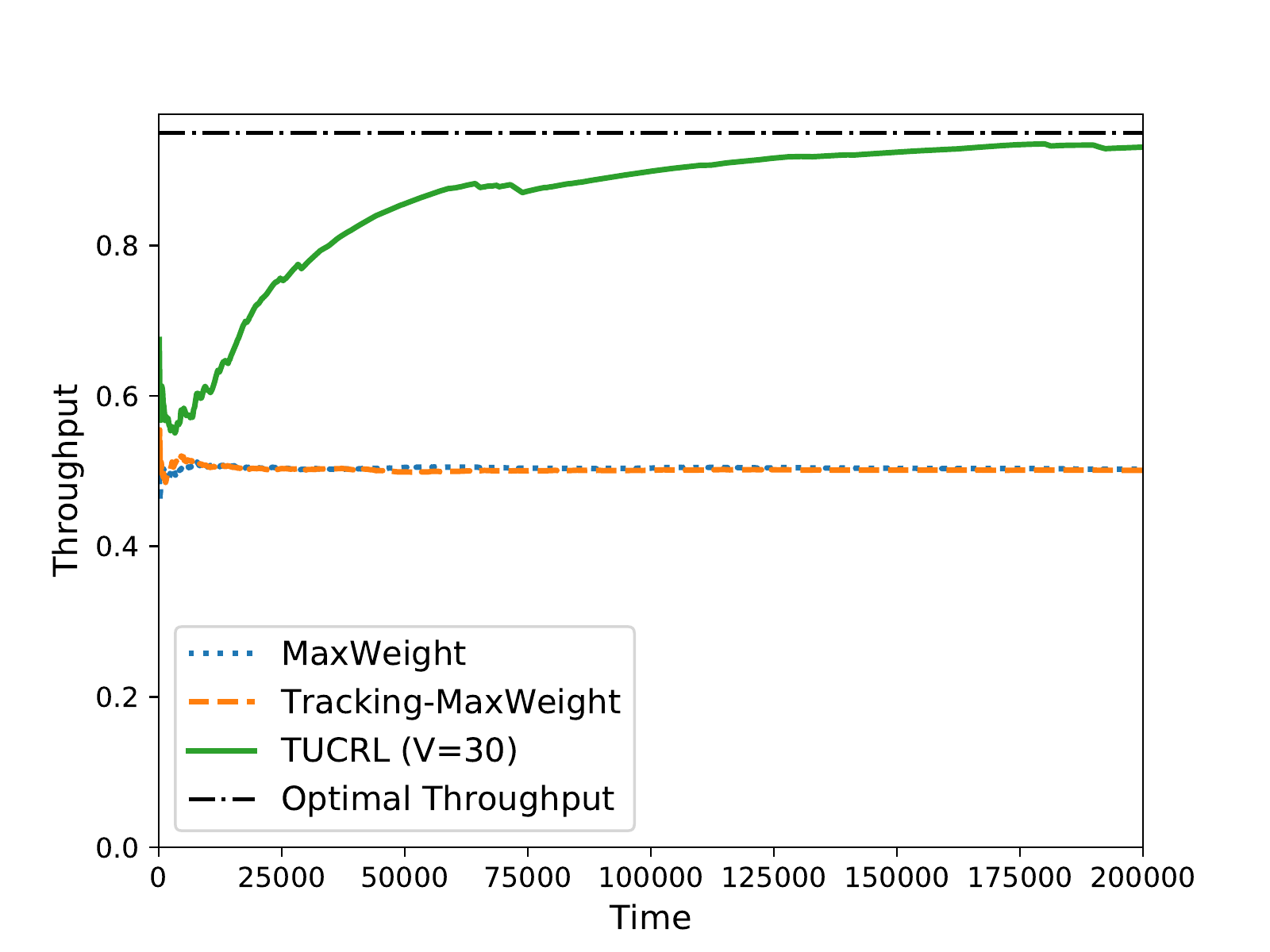}\vspace{-3mm}
\caption{Throughput comparison among MaxWeight, Tracking MaxWeight and TUCRL in Scenario 2 (load = 0.95).}
\label{fig:rl_comp}
\end{minipage}%
\hspace{0.16cm}
\begin{minipage}[t]{0.32\textwidth}
\centering
\includegraphics[width=2.2in, height=1.9in]{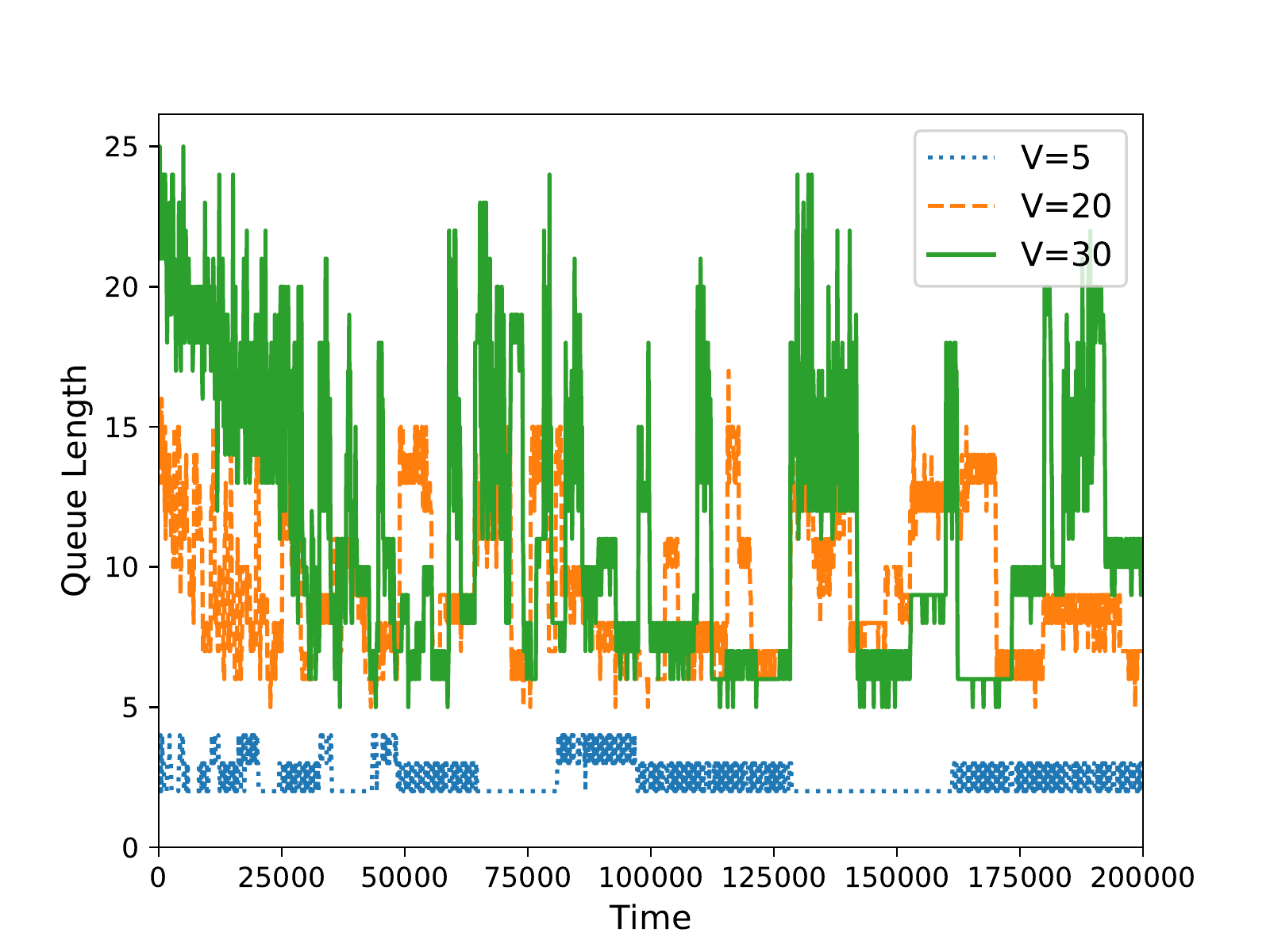}\vspace{-3mm}
\caption{Queue length under the TUCRL algorithm with different queue truncation threshold $V$ (load = 0.95).}
\label{fig:rl_queue}
\end{minipage}
\hspace{0.16cm}
\begin{minipage}[t]{0.32\textwidth}
\centering
\includegraphics[width=2.2in, height=1.9in]{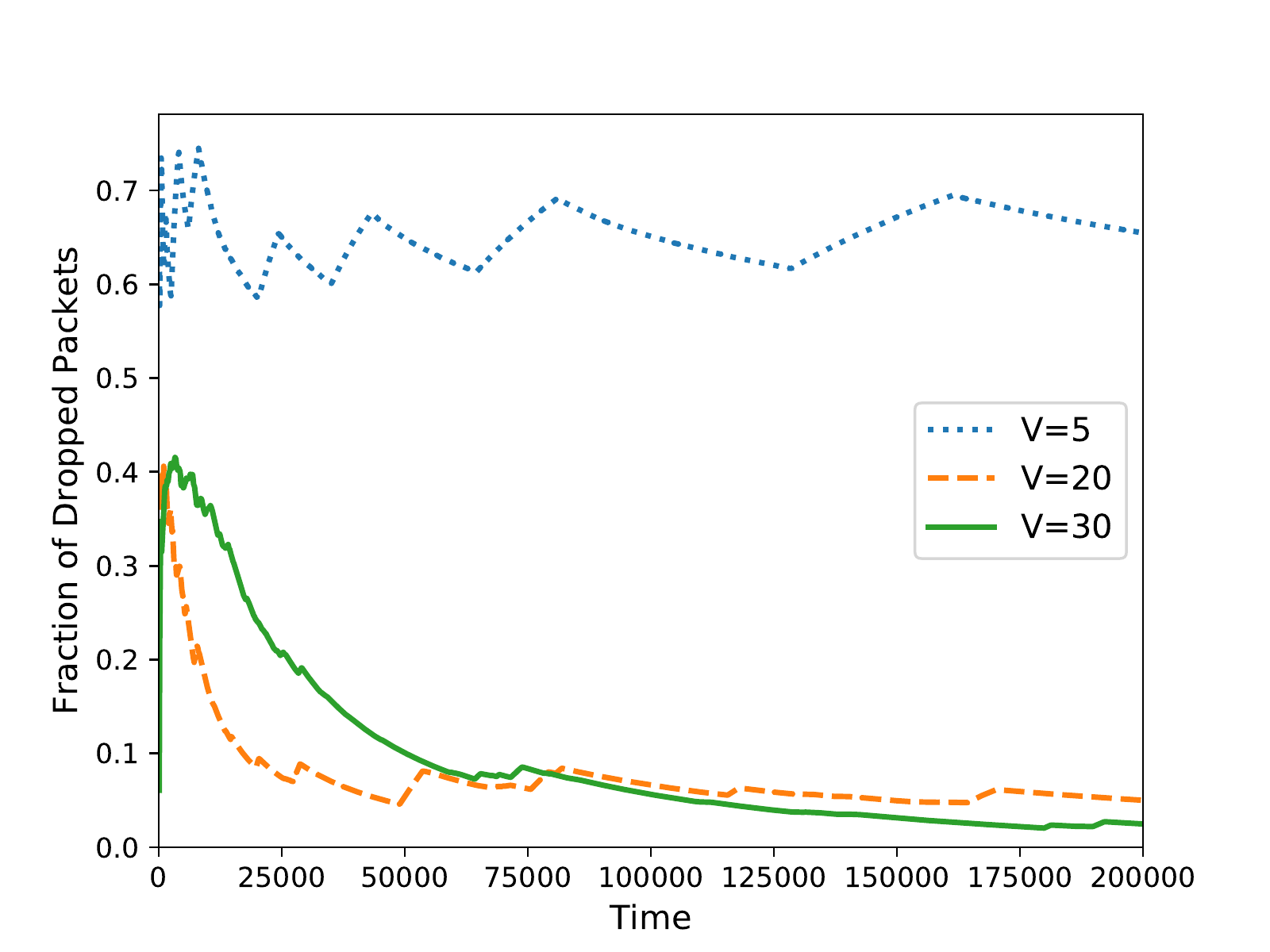}\vspace{-3mm}
\caption{Fraction of dropped packets under the TUCRL algorithm with different queue truncation thresholds $V$ (load = 0.95).}
\label{fig:rl_loss}
\end{minipage}
\end{figure*}

We first compare the throughput performance of TUCRL with MaxWeight and Tracking-MaxWeight. Note that the TUCRL algorithm occasionally drops packets. In order to make a fair comparison, the throughput performance is measured with respect to the number of packets that have been delivered. It is observed in Figure \ref{fig:rl_comp} that TUCRL achieves the optimal throughput while MaxWeight or Tracking-MaxWeight only deliver a throughput of 0.5 in this scenario. It should be noted that the TUCRL algorithm takes longer time to learn and converge than MaxWeight or Tracking-MaxWeight.


Next we investigate the performance of the TUCRL algorithm under different values of the truncation threshold $V$. As we proved in Theorem \ref{thm:tucrl-final}, the value of $V$ determines a three-way tradeoff between queue length, packet dropping rate and convergence rate. As is illustrated in Figure \ref{fig:rl_queue} and Figure \ref{fig:rl_loss}, a larger value of $V$ leads to a larger queue length and the convergence becomes slower, but the fraction of dropped packets becomes smaller. Note that when $V=20$ and $V=30$, the fraction of dropped packets becomes very small as time goes by. In contrast, when $V=5$, the fraction of dropped packets remains non-negligible ($\sim$ 60\%) since the TUCRL algorithm cannot explore the ``throughput-optimal region" where $Q_2(t)\le 10\text{ and }Q_3(t)>10$ with a queue truncation threshold $V=5$.


\section{Conclusions}\label{sec:conclusion}
In this paper, we study optimal network control algorithms that stabilize a partially-controllable network where a subset of nodes are uncontrollable. We first study the scenario where the uncontrollable nodes use a queue-agnostic policy and propose a simple throughput-optimal Tracking-MaxWeight algorithm that enhances the original MaxWeight algorithm with an explicit learning of uncontrollable behavior. Then we investigate the scenario where the uncontrollable policy may be queue-dependent. This problem is formulated as an MDP and we develop a reinforcement learning algorithm called TUCRL that achieves a three-way tradeoff between throughput, delay and convergence rate.

\begin{footnotesize}
\bibliographystyle{plain}
\bibliography{citations}
\end{footnotesize}

\onecolumn

\appendices

\section{Proof to Theorem \ref{thm:TMW}}\label{ap:TMW}
We first introduce a lemma that  characterizes the load of the two virtual queues $\mathbf{X}(t)$ and $\mathbf{Y}(t)$. 
\begin{lemma}\label{lm:loaded-stochastic}
There exists an $\omega$-only controllable policy $\widehat{\pi}_c: \omega_t\mapsto\bm{\mu}^c(t)$ and an $\omega$-only uncontrollable policy $\widehat{\pi}_u: \omega_t\mapsto \bm{\mu}^u(t)$ such that
\[
\lambda_{ik} + \sum_{j\in\mathcal{N}}\mu_{jik} - \sum_{j\in\mathcal{N}}\mu_{ijk} \le 0,~\forall i,k
\]
\[
\mu_{ijk} - \widetilde{f}_{ijk} = 0,~\forall i,j,k,
\]
where $\mu_{ijk}=\mathbb{E}[\mu_{ijk}(t)]$ is the expected flow transmission rate under the $\omega$-only policy $\widehat{\pi}_c$ or $\widehat{\pi}_u$.
\end{lemma}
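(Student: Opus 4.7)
The plan is to construct the two $\omega$-only policies by extracting a suitable time-average of the actions played by an arbitrary stabilizing sequence $\{\mathbf{f}^c(t),\mathbf{f}^u(t)\}_{t\ge 0}$ that realizes \eqref{eq:loaded}. Since $\{\omega_t\}$ is stationary and the actual transmissions $\widetilde{\mathbf{f}}(t)$ are uniformly bounded by $D$, the sequence of empirical joint distributions of the pairs $(\omega_t,\widetilde{\mathbf{f}}(t))$ lies in a compact set of probability measures. I would pass to a Ces\`aro-convergent subsequence to obtain a limiting joint distribution whose $\omega$-marginal equals the stationary law $\{p_\omega\}$, and I would let $\bar{\bm{\mu}}(\omega)$ denote the conditional mean of the limiting actual transmission vector given $\omega_t=\omega$. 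By the tower property and the definition of $\widetilde{f}_{ijk}$,
\[
\widetilde{f}_{ijk}=\sum_{\omega} p_\omega\,\bar{\mu}_{ijk}(\omega),\qquad\forall i,j,k.
\]

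Next I would observe that in every slot the actual-transmission vector lies in $\mathcal{F}_{\omega_t}$ (since $\widetilde{f}_{ijk}(t)\le f_{ijk}(t)$ and the offered rate vector is feasible), so $\bar{\bm{\mu}}(\omega)$ belongs to the convex hull of $\mathcal{F}_\omega$. Carath\'eodory's theorem then gives a randomized selection supported on finitely many points of $\mathcal{F}_\omega$ whose mean equals $\bar{\bm{\mu}}(\omega)$; viewed as a function of the observed event $\omega$ this is precisely a randomized $\omega$-only joint policy. Splitting its output into the controllable coordinates $i\in\mathcal{C}$ and the uncontrollable coordinates $i\in\mathcal{U}$ yields the desired pair $\widehat{\pi}_c$ and $\widehat{\pi}_u$, and by construction the per-slot expected rate satisfies $\mu_{ijk}=\mathbb{E}[\mu_{ijk}(t)]=\sum_\omega p_\omega\bar{\mu}_{ijk}(\omega)=\widetilde{f}_{ijk}$, giving the second identity $\mu_{ijk}-\widetilde{f}_{ijk}=0$.

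The flow-conservation inequality then drops out by substituting $\mu_{ijk}=\widetilde{f}_{ijk}$ into the stability-region hypothesis \eqref{eq:loaded}. The only non-routine step in this plan is the very first one: making sense of the conditional average $\bar{\bm{\mu}}(\omega)$ for an arbitrary (possibly non-stationary, non-Markov) optimal trajectory. I would handle this by the compactness-plus-subsequence argument sketched above, which is the standard device used in Neely's $\omega$-only reductions in stochastic network optimization; once that step is in place the rest of the argument is Carath\'eodory plus linearity of expectation and requires no further technical input.
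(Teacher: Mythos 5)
Your proposal is correct and follows essentially the same route as the paper: the paper takes the sequence of actual transmissions $r_{ijk}(t)=\widetilde{f}_{ijk}(Q^*_{ik}(t))$ under the optimal trajectory and invokes Lemma 2 of the Neely stability reference (the standard $\omega$-only reduction), which is exactly the compactness-plus-Carath\'eodory construction you sketch, and then obtains the flow-conservation inequality by substituting $\mu_{ijk}=\widetilde{f}_{ijk}$ into \eqref{eq:loaded}. The only difference is that you unpack the cited lemma rather than quoting it.
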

\begin{proof}
By our assumption, there exists a sequence of queue-respecting routing vectors $\{\mathbf{r}(t)\}_{t\ge 0}$ such that
\[
r_{ijk}(t) = \widetilde{f}_{ijk}(Q^*_{ik}(t)),~\forall t,~\forall i,j,k.
\]
By Lemma 2 in \cite{neely-stability}, there exists an $\omega$-only controllable policy $\widehat{\pi}_c: \omega_t\mapsto\bm{\mu}^c(t)$ and an $\omega$-only uncontrollable policy $\widehat{\pi}_u: \omega_t\mapsto \bm{\mu}^u(t)$ such that
\[
\mathbb{E}[\mu_{ijk}(t)] = \lim_{T\rightarrow\infty}\frac{1}{T}\sum_{t=0}^{T-1} \mathbb{E}[r_{ijk}(t)],~\forall i,j,k.
\]
As a result, we have
\[
\mu_{ijk}=\mathbb{E}[\mu_{ijk}(t)] =\lim_{T\rightarrow\infty}\frac{1}{T}\sum_{t=0}^{T-1} \mathbb{E}[r_{ijk}(t)] = \widetilde{f}_{ijk},~\forall i,j,k.
\]
By \eqref{eq:loaded} we have
\[
\lambda_{ik} +  \sum_{j\in\mathcal{N}} \mu_{jik}-\sum_{j\in\mathcal{N}}\mu_{ijk} \le 0,~\forall i,k,
\]
which completes the proof.
\end{proof}

\vspace{2mm}

Next we prove that the two virtual queues can be stabilized by the TMW algorithm.
\begin{lemma}\label{lm:TMW-stochastic-stability}
Under the TMW algorithm, we have
\[
\lim_{t\rightarrow\infty}\frac{\sum_{i,k} \mathbb{E}[X_{ik}(t)]}{t} =0,
\]
\[
\lim_{t\rightarrow\infty}\frac{\sum_{i,j,k} \mathbb{E}[|Y_{ijk}(t)|]}{t} =0.
\]
\end{lemma}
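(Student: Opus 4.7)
The plan is to exhibit a combined quadratic Lyapunov function on the two virtual queues and then run the standard drift-plus-minimization argument, comparing TMW against the $\omega$-only benchmark supplied by Lemma~\ref{lm:loaded-stochastic}. Concretely, take
\[
L(t) \;=\; \tfrac{1}{2}\sum_{i,k} X_{ik}(t)^2 \;+\; \tfrac{1}{2}\sum_{i,j,k} Y_{ijk}(t)^2.
\]
Squaring the two virtual-queue updates in Algorithm~\ref{alg:TMW} and invoking the uniform bound $D$ on arrivals, offered rates, and actual transmissions, I would derive a deterministic one-slot drift bound
\[
L(t+1)-L(t) \;\le\; B \;+\; \sum_{i,k} X_{ik}(t)\Big[a_{ik}(t)+\sum_j g_{jik}(t)-\sum_j g_{ijk}(t)\Big] \;+\; \sum_{i,j,k} Y_{ijk}(t)\Big[g_{ijk}(t)-\widetilde f_{ijk}(t)\Big],
\]
with $B$ a finite constant depending only on $N, K, D$. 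Rearranging, the $\mathbf g$-dependent part of this right-hand side equals $-\sum_{i,j,k} g_{ijk}(t) W_{ijk}(t)$ plus $\mathbf g$-independent terms, so by construction the TMW action minimizes this upper bound over every feasible $\mathbf g(t)\in\mathcal F_{\omega_t}$.

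Next, I would plug in the $\omega$-only benchmark $(\widehat\pi_c,\widehat\pi_u)$ from Lemma~\ref{lm:loaded-stochastic} as a competitor to $(\mathbf g^c,\mathbf g^u)$ and take conditional expectations given $(\mathbf X(t),\mathbf Y(t),\mathbf Q(t))$. The $X$-coefficient collapses to $\lambda_{ik}+\sum_j\mu_{jik}-\sum_j\mu_{ijk}\le 0$, contributing a non-positive MaxWeight-style term. The $Y$-coefficient reduces to $\mu_{ijk}-\mathbb E[\widetilde f_{ijk}(t)\mid\mathbf Q(t)]$, and here lies the key subtlety: Lemma~\ref{lm:loaded-stochastic} equates $\mu_{ijk}$ and $\widetilde f_{ijk}$ only as long-run averages, whereas the per-slot conditional mean $\mathbb E[\widetilde f_{ijk}(t)\mid\mathbf Q(t)]$ is backlog-limited whenever $Q_{ik}(t)$ is small. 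I would resolve this by passing to a $T$-slot drift argument: sum the one-slot bound over a window of $T$ consecutive slots and use the ergodicity of $\{\omega_t\}$ together with the $\omega$-only structure of both $\pi_u$ and $\widehat\pi_u$, so that the time-averaged $Y$-coefficient converges to $\mu_{ijk}-\widetilde f_{ijk}=0$ up to an $o(T)$ correction that can be absorbed into a single constant $B'$.

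Combining these ingredients yields $\mathbb E[L(t+T)\mid\mathbf X(t),\mathbf Y(t)]-L(t)\le T B'$; telescoping from $0$ to $T$ gives $\mathbb E[L(T)]\le L(0)+B'T$, so $\mathbb E[X_{ik}(T)^2]$ and $\mathbb E[Y_{ijk}(T)^2]$ are $O(T)$. Jensen's inequality then bounds $\mathbb E[X_{ik}(T)]\le\sqrt{\mathbb E[X_{ik}(T)^2]}=O(\sqrt T)$ and likewise $\mathbb E[|Y_{ijk}(T)|]=O(\sqrt T)$, so dividing by $T$ and letting $T\to\infty$ yields both rate-stability statements. The main obstacle throughout is the per-slot mismatch between $\mathbb E[\widetilde f_{ijk}(t)\mid\mathbf Q(t)]$ and its long-run average $\widetilde f_{ijk}$; everything else is standard Lyapunov bookkeeping wrapped around the MaxWeight-style optimality of TMW.
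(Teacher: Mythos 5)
Your setup --- the quadratic Lyapunov function on $(\mathbf X,\mathbf Y)$, the one-slot drift bound, the observation that TMW minimizes the $\mathbf g$-dependent part of that bound, and the substitution of the $\omega$-only benchmark from Lemma~\ref{lm:loaded-stochastic} --- matches the paper's proof, and you correctly located the one non-routine step: controlling $\mathbb E\big[\mu_{ijk}(t)-\widetilde f_{ijk}(Q_{ik}(t))\big]$. But your proposed resolution of that step does not go through. You argue that ergodicity of $\{\omega_t\}$ and the $\omega$-only structure of $\pi_u$ and $\widehat\pi_u$ make the time-average of $\widetilde f_{ijk}(Q_{ik}(t))$ converge to $\widetilde f_{ijk}$. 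The problem is that $\widetilde f_{ijk}(Q_{ik}(t))$ is \emph{not} an $\omega$-only quantity: only the offered rate $f_{ijk}(t)$ is a function of $\omega_t$; the actual transmission is truncated by the backlog $Q_{ik}(t)$ on the trajectory generated by TMW, whereas $\widetilde f_{ijk}$ in Lemma~\ref{lm:loaded-stochastic} is defined as the long-run actual rate along the \emph{optimal} trajectory $\mathbf Q^*(t)$. Ergodicity of $\omega_t$ gives no relationship between averages of backlog-truncated quantities along two different queue trajectories, and asserting that TMW's trajectory does not starve the uncontrollable queues is essentially the stability statement you are trying to prove, so the argument as written is circular.

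The paper closes this gap with a pointwise (per-slot) comparison rather than a time-averaged one. Whenever $Q_{ik}(t)\ge ND$, the backlog exceeds the total offered outgoing rate, so $\widetilde f_{ijk}(Q_{ik}(t))=f_{ijk}(t)$ exactly; and since the actual transmission on \emph{any} trajectory never exceeds the offered rate, $f_{ijk}(t)\ge \widetilde f_{ijk}(Q^*_{ik}(t))$, where $f_{ijk}(t)$ is trajectory-independent because $\pi_u$ is $\omega$-only. Chaining these gives
\[
\mathbb E\big[\mu_{ijk}(t)-\widetilde f_{ijk}(Q_{ik}(t))\big]=\mathbb E\big[\mu_{ijk}(t)-f_{ijk}(t)\big]\le \mathbb E\big[\mu_{ijk}(t)-\widetilde f_{ijk}(Q^*_{ik}(t))\big]=0
\]
slot by slot on the large-backlog event, with the small-backlog slots handled separately (the paper restricts attention to times after the last slot at which some $Q_{ik}(t)<ND$). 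If you replace your ergodicity step with this comparison, the rest of your argument --- telescoping to $\mathbb E[L(T)]=O(T)$ and applying Jensen to get $\mathbb E[X_{ik}(T)]$, $\mathbb E[|Y_{ijk}(T)|]=O(\sqrt T)$ --- is exactly the paper's conclusion and is fine.
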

\begin{proof}
Define the Lyapunov function as 
\[
\Phi(t) = \sum_{i,k} X_{ik}^2(t) + \sum_{i,j,k}Y_{ijk}^2(t).
\]
By the evolution of virtual queue $\mathbf{X}(t)$, we have
\[
\sum_{i,k}X_{ik}^2(t+1)-\sum_{i,k}X_{ik}^2(t)\le \sum_{i,k}2X_{ik}(t)\Big(a_{ik}(t) + \sum_{j\in\mathcal{N}} g_{jik}(t) - \sum_{j\in\mathcal{N}}g_{ijk}(t)\Big)+N^3KD^2,
\]
where the inequality is due to the boundedness assumptions that $a_{ik}(t)\le D$ and $g_{ijk}(t)\le D$ for any $i,j,k$.  Similarly, for virtual queue $\mathbf{Y}(t)$ we have
\[
\sum_{i,j,k}Y^2_{ijk}(t+1)-\sum_{i,j,k}Y^2_{ijk}(t)\le \sum_{i,j,k}2Y_{ijk}(t)\Delta_{ijk}\Big(g_{ijk}(t), f_{ijk}(t), Q_{ik}(t)\Big)+N^2KD^2,
\]
where we rewrite $\Delta_{ijk}(t) = g_{ijk}(t) - \widetilde{f}_{ijk}(Q_{ik}(t))$ as $\Delta_{ijk}\big(g_{ijk}(t), f_{ijk}(t), Q_{ik}(t)\big)$ to explicitly emphasize on the dependence of $\Delta_{ijk}(t)$ on the imagined routing action $g_{ijk}(t)$, the true routing action $f_{ijk}(t)$ and the current queue length $Q_{ik}(t)$.

As a result, the conditional expected Lyapunov drift $\delta(t)\triangleq \mathbb{E}[\Phi(t+1)-\Phi(t)|\mathbf{X}(t),\mathbf{Y}(t)]$ can be bounded by
\[
\begin{split}
\delta(t) &\le 2\mathbb{E}\Big[\sum_{i,k}X_{ik}(t)\Big(a_{ik}(t) + \sum_{j\in\mathcal{N}} g_{jik}(t) - \sum_{j\in\mathcal{N}}g_{ijk}(t)\Big)+\sum_{i,j,k}Y_{ijk}(t)\Delta_{ijk}\Big(g_{ijk}(t), f_{ijk}(t), Q_{ik}(t)\Big)\Big]+2N^3KD^2\\
& \le 2\mathbb{E}\Big[\sum_{i,k}X_{ik}(t)\Big(a_{ik}(t) + \sum_{j\in\mathcal{N}} \mu_{jik}(t) - \sum_{j\in\mathcal{N}}\mu_{ijk}(t)\Big)+\sum_{i,j,k}Y_{ijk}(t)\Delta_{ijk}\Big(\mu_{ijk}(t), f_{ijk}(t), Q_{ik}(t)\Big)\Big]+2N^3KD^2,
\end{split}
\]
where the second inequality is due to the operation of TMW and $\bm{\mu}(t)$ is given in Lemma  \ref{lm:loaded-stochastic} which shows that
\[
\mathbb{E}\Big[a_{ik}(t) + \sum_{j\in\mathcal{N}} \mu_{jik}(t) - \sum_{j\in\mathcal{N}}\mu_{ijk}(t)\Big] = \lambda_{ik} +\sum_{j\in\mathcal{N}} \mu_{jik}-\sum_{j\in\mathcal{N}} \mu_{ijk}=0,~\forall i,k
\]
In addition, for any $i,j,k$, when there are enough queue backlogs ($Q_{ik}(t)\ge ND$) we have
\[
\begin{split}
\mathbb{E}\Big[\Delta_{ijk}\Big(\mu_{ijk}(t), f_{ijk}(t), Q_{ik}(t)\Big)\Big]& = \mathbb{E}\Big[\mu_{ijk}(t)-\widetilde{f}_{ijk}(Q_{ik}(t))\Big]\\
&=\mathbb{E}\Big[\mu_{ijk}(t)-f_{ijk}(t)\Big]\\
&\le  \mathbb{E}\Big[\mu_{ijk}(t)-\widetilde{f}_{ijk}(Q^*_{ik}(t))\Big]\\
&=0,
\end{split}
\]
where $\mathbf{Q}^*(t)$ is the queue length vector under the optimal controllable policy and the last equality is due to Lemma  \ref{lm:loaded-stochastic}. As a result, when there are enough backlogs (i.e., $Q_{ik}(t)\ge ND$ for any $i,k$) we have
\[
\delta(t)\le 2N^3KD^2.
\]
Let $t_0$ be the last time when there exists a queue $(i,k)$ such that $Q_{ik}(t)<ND$, i.e., $Q_{ik}\ge ND$ for any $t\ge t_0$ and any $i,k$. Without loss of generality, we assume $t_0=0$.  Summing over $t=0,\cdots,T-1$ and using law of iterated expectation, we have
\[
\mathbb{E}[\Phi(T)-\Phi(0)]\le 2N^3KD^2T.
\]
It follows that
\[
\mathbb{E}\Big[\sum_{i,k}X_{ik}^2(T)+\sum_{i,j,k}Y_{ijk}^2(T)\Big]\le \mathbb{E}\Big[\sum_{i,k}X_{ik}^2(0) + \sum_{i,j,k}Y_{ijk}^2(0)\Big]+2N^3KD^2T,
\]
which implies that
\[
\sum_{i,k}\mathbb{E}[X_{ik}(T)]\le \sqrt{NK}\sqrt{2\sum_{i,k}\mathbb{E}[X_{ik}^2(T)}]\le \sqrt{NK}\sqrt{\mathbb{E}\Big[\sum_{i,k}X_{ik}^2(0) + \sum_{i,j,k}Y_{ijk}^2(0)\Big]+2N^3KD^2T}.
\]
Assuming that $\sum_{i,k}X_{ik}(0)<\infty$ and $\sum_{i,j,k}Y_{ijk}(0)<\infty$, we have
\[
\lim_{T\rightarrow\infty}\frac{\sum_{i,k}\mathbb{E}[X_{ik}(T)]}{T}=0.
\]
Similarly, we have
\[
\lim_{T\rightarrow\infty}\frac{\sum_{i,j,k}\mathbb{E}[|Y_{ijk}(T)|]}{T}=0.
\]
This completes the proof.
\end{proof}

\vspace{2mm}

Finally, we show that as long as the two virtual queues $\mathbf{X}(t)$ and $\mathbf{Y}(t)$ are stable, then the physical queue $\mathcal{Q}(t)$ is also stable.
\begin{lemma}\label{lm:TMW-queue}
For any $i,k$ and $t\ge 0$, we have $Q_{ik}(t)\le X_{ik}(t)+\sum_{j\in\mathcal{N}}Y_{ijk}(t)-\sum_{j\in\mathcal{U}}Y_{jik}(t)$.
\end{lemma}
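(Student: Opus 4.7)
The plan is to argue by induction on $t$, with base case $t=0$ being trivial under the natural initialization $Q_{ik}(0)=X_{ik}(0)=Y_{ijk}(0)=0$. Define the right-hand side as $Z_{ik}(t)\triangleq X_{ik}(t)+\sum_{j\in\mathcal{N}} Y_{ijk}(t)-\sum_{j\in\mathcal{U}} Y_{jik}(t)$, assume $Q_{ik}(t)\le Z_{ik}(t)$, and aim to show the one-step drift inequality $Z_{ik}(t+1)-Z_{ik}(t)\ge Q_{ik}(t+1)-Q_{ik}(t)$, which closes the induction. The physical increment is $Q_{ik}(t+1)-Q_{ik}(t)=a_{ik}(t)+\sum_j\widetilde{f}_{jik}(t)-\sum_j\widetilde{f}_{ijk}(t)$ (the $[\,\cdot\,]^+$ in the paper's display is an upper bound, but the exact increment uses $\widetilde{f}$ and automatically keeps $Q_{ik}\ge 0$).

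The computation I would carry out next expands $Z_{ik}(t+1)-Z_{ik}(t)$ by using the inequality $X_{ik}(t+1)-X_{ik}(t)\ge a_{ik}(t)+\sum_j g_{jik}(t)-\sum_j g_{ijk}(t)$ (which follows because dropping the $[\,\cdot\,]^+$ gives a valid lower bound on $X_{ik}(t+1)$) together with the exact $Y$-updates $Y_{ijk}(t+1)-Y_{ijk}(t)=\Delta_{ijk}(t)$, where $\Delta_{ijk}(t)=g_{ijk}(t)-\widetilde{f}_{ijk}(t)$ for $i\in\mathcal{U}$ and $\Delta_{ijk}(t)=0$ for $i\in\mathcal{C}$. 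Combining these yields
\[
Z_{ik}(t+1)-Z_{ik}(t)\ge a_{ik}(t)+\sum_j g_{jik}(t)-\sum_j g_{ijk}(t)+\sum_j\Delta_{ijk}(t)-\sum_{j\in\mathcal{U}}\Delta_{jik}(t).
\]
I would then split into two cases. If $i\in\mathcal{U}$, plugging in $\sum_j\Delta_{ijk}(t)=\sum_j[g_{ijk}(t)-\widetilde{f}_{ijk}(t)]$ cancels $\sum_j g_{ijk}(t)$, and after substituting $g_{jik}(t)=\widetilde{f}_{jik}(t)$ for controllable neighbours $j\in\mathcal{C}$ (their imagined action is the executed action and is backlog-respecting), the right-hand side collapses to $a_{ik}(t)+\sum_j\widetilde{f}_{jik}(t)-\sum_j\widetilde{f}_{ijk}(t)$. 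If $i\in\mathcal{C}$, the term $\sum_j\Delta_{ijk}(t)$ vanishes, the terms involving $\sum_{j\in\mathcal{U}} g_{jik}(t)$ cancel, and using $g_{jik}(t)=\widetilde{f}_{jik}(t)$ for $j\in\mathcal{C}$ as well as $g_{ijk}(t)=\widetilde{f}_{ijk}(t)$ for the controllable node $i$ itself again reduces the bound to $a_{ik}(t)+\sum_j\widetilde{f}_{jik}(t)-\sum_j\widetilde{f}_{ijk}(t)$. Either way the drift inequality holds and the induction closes.

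The main obstacle I anticipate is the $i\in\mathcal{C}$ case, because there the virtual-queue $Y$ offers no compensating term (by construction $\Delta_{ijk}(t)=0$ when $i\in\mathcal{C}$), so we must rely on the modelling convention that TMW's controllable action is chosen to respect the physical backlog, giving $g_{ijk}(t)=f_{ijk}(t)=\widetilde{f}_{ijk}(t)$; once that identity is invoked, the algebra falls out cleanly. A secondary bookkeeping point is to justify dropping the $[\,\cdot\,]^+$ in the $X$-update (valid because we only need a lower bound on $X_{ik}(t+1)-X_{ik}(t)$) and to verify that the sign pattern of the correction $-\sum_{j\in\mathcal{U}} Y_{jik}$ is exactly what is needed to convert ``imagined'' inflows from uncontrollable neighbours into ``actual'' inflows.
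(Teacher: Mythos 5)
Your proposal is correct and follows essentially the same route as the paper: induction on $t$ with the base case from the initialization, then a one-step comparison that lower-bounds the increment of the right-hand side by dropping the $[\cdot]^+$ in the $X$-update, using the exact $Y$-updates via $\Delta_{ijk}(t)$, and invoking the modelling convention that controllable nodes' offered rates respect the physical backlog so their outflows are fully realized — which is exactly the subtle point the paper also relies on (and glosses over for $i\in\mathcal{C}$). The only nitpick is that for \emph{inflows} from controllable neighbours you do not need (and should not claim) the equality $\widetilde{g}_{jik}(t)=g_{jik}(t)$; the one-sided bound $\widetilde{g}_{jik}(t)\le g_{jik}(t)$ suffices and always holds, and it points in the direction your drift inequality needs.
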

\begin{proof}
We prove this lemma by induction on $t$. The base case trivially holds true since we initialize $X_{ik}(0)=Q_{ik}(0)$ and $Y_{ijk}(0)=0$. Now suppose that $Q_{ik}(t)\le X_{ik}(t)+\sum_{j\in\mathcal{N}}Y_{ijk}(t)-\sum_{j\in\mathcal{U}}Y_{jik}(t)$ holds true for some $t\ge 0$. Then for any $i\in\mathcal{U}$ we have
\[
\begin{split}
Q_{ik}(t+1) & = Q_{ik}(t) + a_{ik}(t) + \sum_{j\in\mathcal{C}} \widetilde{g}_{jik}(Q_{jk}(t)) + \sum_{j\in\mathcal{U}} \widetilde{f}_{jik}(Q_{jk}(t)) - \sum_{j\in\mathcal{N}}\widetilde{f}_{ijk}(Q_{ik}(t))\\
&\le X_{ik}(t)+ a_{ik}(t)+\sum_{j\in\mathcal{N}} g_{jik}(t) -  \sum_{j\in\mathcal{N}} g_{ijk}(t) \\
&~~+\sum_{j\in\mathcal{C}} \widetilde{g}_{jik}(Q_{jk}(t))-\sum_{j\in\mathcal{C}} g_{jik}(t) \\
&~~+ \sum_{j\in\mathcal{N}}Y_{ijk}(t)+\sum_{j\in\mathcal{N}} g_{ijk}(t)-\sum_{j\in\mathcal{N}}\widetilde{f}_{ijk}(Q_{ik}(t))\\
&~~-\sum_{j\in\mathcal{U}}Y_{jik}(t) + \sum_{j\in\mathcal{U}} \widetilde{f}_{jik}(Q_{jk}(t))-\sum_{j\in\mathcal{U}} g_{jik}(t)\\
&\le X_{ik}(t+1)+\sum_{j\in\mathcal{N}}Y_{ijk}(t)-\sum_{j\in\mathcal{U}}Y_{jik}(t)+\sum_{j\in\mathcal{N}}\Delta_{ijk}(t)-\sum_{j\in\mathcal{U}}\Delta_{jik}(t)\\
&\le X_{ik}(t+1) + \sum_{j\in\mathcal{N}} Y_{ijk}(t+1)-\sum_{j\in\mathcal{U}}Y_{jik}(t+1),
\end{split}
\]
where the first inequality is due to the induction and simple algebra.
Similar induction applies to every controllable node $i\in\mathcal{C}$. This completes the induction proof.
\end{proof}

\vspace{2mm}

By Lemma \ref{lm:TMW-stochastic-stability} and Lemma \ref{lm:TMW-queue}, we can finally conclude that
\[
\begin{split}
\lim_{t\rightarrow\infty}\frac{\sum_{i,k} \mathbb{E}[Q_{ik}(t)]}{t}&\le \lim_{t\rightarrow\infty}\frac{\sum_{i,k} \Big(\mathbb{E}[X_{ik}(t)]+\sum_{j\in\mathcal{N}}\mathbb{E}[Y_{ijk}(t)]-\sum_{j\in\mathcal{U}}\mathbb{E}[Y_{jik}(t)]\Big)}{t}\\
&\le \lim_{t\rightarrow\infty}\frac{\sum_{i,k} \Big(\mathbb{E}[X_{ik}(t)] + \sum_{j\in\mathcal{U}}|\mathbb{E}[|Y_{ijk}(t)|] +  \sum_{j\in\mathcal{U}}\mathbb{E}[|Y_{jik}(t)|]\Big)}{t}\\
& = 0,
\end{split}
\]
which completes the proof to Theorem \ref{thm:TMW}.

\section{Extended Value Iteration}\label{ap:EVI}
In this appendix, we introduce Extended Value Iteration (EVI) \cite{UCRL2} as one of the approaches for optimistic planning (see step 3 in Algorithm \ref{alg:RL}). Extended Value Iteration is similar to the canonical Value Iteration but applies to MDPs with ``extended action space" where the additional action is the selection of the optimistic MDP (more precisely, the selection of the optimistic transition matrix) that yields the minimum average cost among a set of plausible MDPs. In particular, let $w_j(\mathbf{Q})$ be the value function for state $\mathbf{Q}$ obtained after iteration $j$. Then Extended Value Iteration proceeds as follows. For any $\mathbf{Q}\in\mathcal{Q}_V$
\begin{equation}\label{eq:EVI}
\begin{split}
w_0(\mathbf{Q})&=0\\
w_{j+1}(\mathbf{Q}) &= \min_{\alpha\in\mathcal{A}}\Big\{\sum_{i,k} Q_{ik}  + \min_{p(\cdot)\in \mathcal{P}_{\ell}(\mathbf{Q},\alpha)}\Big\{\sum_{\mathbf{Q}'\in\mathcal{Q}_V}p(\mathbf{Q}')w_j(\mathbf{Q}')\Big\}\Big\},
\end{split}
\end{equation}
where $\mathcal{P}_{\ell}(\mathbf{Q},\alpha)$ is given in \eqref{eq:confidence-set}.
The inner minimization problem can be easily solved  using the following greedy algorithm which tries to push as much probability mass to the state with the largest value.
\begin{framed}
\noindent \textbf{Greedy Algorithm:} find optimistic transition probability $p(\cdot)$ for a particular state-action pair $(\mathbf{Q},\alpha)$

\vspace{2mm}

\noindent \textbf{Input:} Estimated transition probability $\hat{P}(\cdot|\mathbf{Q},\alpha)$ and distance $d(\mathbf{Q},\alpha)$ as given in \eqref{eq:confidence-set} 

\vspace{2mm}

1. Sort states in $\mathcal{Q}_V$ in descending order according to their $w_j(\cdot)$ value. We assume that the sorted state space is denoted by $\mathcal{Q}'_V=\{\mathbf{Q}_1',\mathbf{Q}_2',\cdots,\mathbf{Q}_n'\}$ where
\[
w_j(\mathbf{Q}_1')\ge w_j(\mathbf{Q}_2')\ge \cdots \ge w_j(\mathbf{Q}_n').
\]

2. Set
\[
p(\mathbf{Q}'_1) = \min\Big\{1,\hat{P}(\mathbf{Q}_1'|\mathbf{Q},\alpha)+\frac{d(\mathbf{Q},\alpha)}{2}\Big\}
\]
\[
p(\mathbf{Q}'_i) = \hat{P}(\mathbf{Q}_i'|\mathbf{Q},\alpha)\text{ for all states }\mathbf{Q}_i'~\text{with }i>1.
\]

3. Set $m=n$

4. \textbf{While} $\sum_{\mathbf{Q}'\in\mathcal{Q}'_V}p(\mathbf{Q}')>1$ \textbf{do}

\qquad (a) Reset $p(\mathbf{Q}'_m)=\max\{0,1-\sum_{\mathbf{Q}'\ne \mathbf{Q}'_m}p(\mathbf{Q}')\}$.

\qquad (b) Set $m=m-1$.
\end{framed}
The distance $d(\mathbf{Q},\alpha)$  in the input is the range used in our confidence set construction  \eqref{eq:confidence-set}, i.e., 
\[
d(\mathbf{Q},\alpha)=\sqrt{\frac{C \log(2|\mathcal{A}|t_{\ell} V)}{\max\{1,n_{\ell}(\mathbf{Q},\alpha)\}}}.
\]

The iteration stops when the following convergence criterion is triggered:
\begin{equation}\label{eq:convergence}
\max_{\mathbf{Q}\in\mathcal{Q}_V} \{w_{j+1}(\mathbf{Q})-w_j(\mathbf{Q})\} - \min_{\mathbf{Q}\in\mathcal{Q}_V} \{w_{j+1}(\mathbf{Q})-w_j(\mathbf{Q})\} \le \frac{1}{\sqrt{t_{\ell}}},
\end{equation}
which requires that the span of the average cost across all states be less than or equal to $\frac{1}{\sqrt{t_{\ell}}}$.

\section{Proof to Theorem \ref{thm:tucrl-final}}\label{ap:tucrl-final}
We first introduce some notations.
\begin{itemize}
\item{Let $M(P)$ be an MDP model for the original untruncated system (where no packet dropping happens) with transition matrix $P$. 
Denote by $M^*\triangleq M(P^*)$ the true MDP model for the original untruncated system with the true transition matrix $P^*$. Note that in an untruncated system, the state space is the countably-infinite  queue length space $\mathcal{Q}$.}
\item{Let $M_V(P)$ be a \emph{truncated MDP} model with transition matrix $P$ for a truncated system with threshold $V$. Note that the state space in a truncated MDP model is the truncated queue length  space $\mathcal{Q}_V$. Denote by $M^*_V$ the \emph{truncated true MDP model} that is obtained by applying packet dropping to the original untruncated true MDP model $M^*$. 
}
\item{Let $J^*$ be the optimal time-average expected queue length in the untruncated true  MDP $M^*$. Similarly, denote by $J_V^*$ the optimal time-average expected queue length in the truncated true MDP $M_V^*$.}
\end{itemize}

The following theorem provides an upper bound on the total queue length in a truncated system with threshold $V$.
\begin{theorem}\label{thm:truncated-bound}
With probability at least $1-O\Big(\frac{1}{T}+\frac{1}{V}\Big)$, the TUCRL algorithm achieves
\begin{equation}\label{eq:final-truncated}
\begin{split}
\sum_{t=0}^{T-1}\sum_{i,k} Q_{ik}(t) &= TJ_V^* + \gamma(T,V),
\end{split}
\end{equation}
where
\[
\gamma(T,V)\triangleq O\Big(V^{N+2}\log T + V^{2+N\slash 2}\sqrt{ T\log(TV)}\Big).
\]
\end{theorem}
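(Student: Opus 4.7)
The plan is to cast Theorem \ref{thm:truncated-bound} as a regret bound for TUCRL applied to the truncated MDP $M_V^*$, adapting the UCRL2 analysis of \cite{UCRL2} to account for two features unique to our setting: the per-step cost $\sum_{i,k} Q_{ik}(t)$ is of order $NV$ rather than $O(1)$, and the diameter of any MDP in the confidence set $\mathcal{M}_\ell$ scales linearly with $V$. Letting $e_\ell$ denote the set of slots in episode $\ell$, the quantity of interest decomposes as $\sum_t \sum_{i,k} Q_{ik}(t) - T J_V^* = \sum_\ell \sum_{t \in e_\ell} \big( \sum_{i,k} Q_{ik}(t) - J_V^* \big)$, and each inner sum can be analyzed via the Bellman equation of the optimistic MDP $M_\ell$ selected by the algorithm.

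The first step is to define a good event on which (a) the true truncated transition matrix $P^*$ lies in $\mathcal{P}_\ell$ for every episode $\ell$ that starts within the horizon $T$, and (b) the martingales built from the algorithm's execution concentrate around their means. For (a), the key structural observation is that in one time slot each queue can change by at most $2ND$, so the support of $P^*(\cdot|\mathbf{Q},\alpha)$ has size at most $(2ND+1)^N$; this is precisely why the constant $C = 2(2ND+1)^N$ appears in \eqref{eq:confidence-set}. Applying the Weissman $\ell_1$ concentration inequality to each state-action pair and taking a union bound over the $O(V^N |\mathcal{A}| T)$ possibilities shows that (a) fails with probability $O(1/T)$. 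The doubling stopping rule yields the standard bound of $O(V^N |\mathcal{A}| \log T)$ episodes. Finally, Assumption \ref{as:diameter} together with $\|\mathbf{Q} - \mathbf{Q}'\|_1 \le NV$ for $\mathbf{Q},\mathbf{Q}' \in \mathcal{Q}_V$ implies that every MDP in $\mathcal{M}_\ell$ has diameter $O(LNV)$, so its bias function $h_\ell$ has $\mathrm{span}(h_\ell) \le O(V) \cdot O(NV) = O(V^2)$.

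With these ingredients, the within-episode regret decomposes into three familiar terms. The optimistic-planning slack contributes $\sum_\ell |e_\ell|/\sqrt{t_\ell} = O(\sqrt{T})$. The transition-estimation term $\sum_\ell \sum_{t \in e_\ell} (P_\ell - P^*) h_\ell(\mathbf{Q}(t), \alpha_t)$ is bounded by $\mathrm{span}(h_\ell)$ times the $\ell_1$ confidence radius, and combining this with the standard pigeon-hole identity $\sum_\ell \sum_{\mathbf{Q},\alpha} v_\ell(\mathbf{Q},\alpha)/\sqrt{\max\{1,n_\ell(\mathbf{Q},\alpha)\}} = O(\sqrt{V^N |\mathcal{A}| T})$ produces a contribution of $O(V^{2+N/2}\sqrt{T \log(TV)})$. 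The martingale term, whose increments are of order $\mathrm{span}(h_\ell) = O(V^2)$, contributes $O(V^2 \sqrt{T \log T})$ with probability $1 - O(1/T)$ by Azuma--Hoeffding. Separately, state-action pairs visited for the first time each pay up to $O(\mathrm{span}(h_\ell))$ in excess regret, summing to a burn-in cost of $O(V^2 \cdot V^N |\mathcal{A}| \log T) = O(V^{N+2} \log T)$. The residual $O(1/V)$ in the failure probability absorbs the edge case in which the truncated MDP fails to inherit Assumption \ref{as:diameter} uniformly, which is controlled by relating the visitation frequency of the truncation boundary to $1/V$, a fact we also need for the packet-dropping analysis. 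The main obstacle will be the careful tracking of polynomial-in-$V$ factors through each regret term, since both the cost magnitude and the diameter are inflated by $V$ everywhere in the proof; once the $O(V^2)$ span of $h_\ell$ is established, the remaining manipulations closely mirror the UCRL2 argument, and summing the three contributions produces the claimed $\gamma(T,V)$.
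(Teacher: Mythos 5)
Your proposal is correct and follows essentially the same route as the paper: a UCRL2-style episodic regret decomposition with the $O(V^2)$ span bound on the bias function (from Assumption \ref{as:diameter} and the $O(NV)$ per-step cost), Weissman's $\ell_1$ inequality with support size $(2ND+1)^N$ for the confidence sets, the pigeonhole sum for the estimation term, and Azuma--Hoeffding for the martingale term, yielding exactly the three contributions $O(\sqrt{T})$, $O(V^{2+N/2}\sqrt{T\log(TV)})$, and $O(V^2\sqrt{T\log T}+V^{N+2}\log T)$. The only bookkeeping difference is that the paper's $O(1/V)$ in the failure probability arises from the union bound over episodes in the confidence-set lemma (not from a diameter-inheritance edge case, which the paper does not separately treat), but this does not affect the stated bound.
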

\begin{proof}
See Appendix \ref{ap:truncated-bound}.
\end{proof}

\vspace{1mm}

Next, we provide an upper bound on the fraction of dropped packets based on Theorem \ref{thm:truncated-bound}.

\begin{theorem}\label{thm:drop}
When $T$ is sufficiently large, with probability at least $1-O\Big(\frac{1}{T}+\frac{1}{V}\Big)$, the fraction of dropped packets is
\[
\eta_T = O\Big(\frac{\gamma(T,V)}{TV}+\frac{J_V^*}{V}\Big).
\]
\end{theorem}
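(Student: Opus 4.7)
The plan is to convert the time-summed queue length bound from Theorem \ref{thm:truncated-bound} into a bound on the number of slots in which dropping occurs, and then translate that slot count into a bound on the fraction of dropped packets. First I would establish a slot-wise lower bound: in any slot $t$ in which TUCRL drops at least one packet, the dropping rule forces $\sum_{i,k}(Q_{ik}(t)+a_{ik}(t))\ge V$, so the boundedness assumption $a_{ik}(t)\le D$ gives $\sum_{i,k}Q_{ik}(t)\ge V-NKD$. Letting $N_T^{\rm drop}$ denote the number of dropping slots in $\{0,\dots,T-1\}$ and summing over those slots yields
$$\sum_{t=0}^{T-1}\sum_{i,k} Q_{ik}(t) \;\ge\; N_T^{\rm drop}\,(V-NKD).$$

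Next, I would combine this with the upper bound of Theorem \ref{thm:truncated-bound}. On the high-probability event of that theorem, $N_T^{\rm drop}(V-NKD)\le TJ_V^*+\gamma(T,V)$, so once $V$ is large enough that $V-NKD\ge V/2$ we obtain $N_T^{\rm drop}\le \frac{2}{V}\bigl(TJ_V^*+\gamma(T,V)\bigr)$. I would then turn $N_T^{\rm drop}$ into the fraction $\eta_T$ by bounding numerator and denominator separately: the number of dropped packets in $T$ slots is at most $NKD\cdot N_T^{\rm drop}$, since at most $\sum_{i,k}a_{ik}(t)\le NKD$ packets can be shed in any one slot, while the total exogenous arrivals $\sum_t\sum_{i,k}a_{ik}(t)$ is a sum of bounded stationary random variables and hence, by a Hoeffding/Azuma concentration inequality, satisfies $\sum_t\sum_{i,k}a_{ik}(t)\ge \tfrac{1}{2}T\sum_{i,k}\lambda_{ik}$ with probability $1-O(1/T)$ for all sufficiently large $T$. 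Dividing yields exactly $\eta_T=O\!\bigl(J_V^*/V+\gamma(T,V)/(TV)\bigr)$, which is the claimed bound.

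The argument is essentially a pigeonhole: the cumulative queue length cannot be large (by Theorem \ref{thm:truncated-bound}), but every dropping slot forces a large instantaneous queue length, so dropping slots must be rare. The only real bookkeeping obstacle is keeping the overall probability of failure at $1-O(1/T+1/V)$ when the event of Theorem \ref{thm:truncated-bound} is intersected with the arrival-concentration event; a simple union bound handles this, provided $\sum_{i,k}\lambda_{ik}$ is treated as a $\Theta(1)$ constant so that the $1/V$ and $1/T$ scalings in the final bound are preserved. Everything else is elementary algebra once the lower bound on $\sum_{t,i,k}Q_{ik}(t)$ in terms of $N_T^{\rm drop}$ has been written down.
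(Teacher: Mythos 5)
Your proposal is correct and follows essentially the same route as the paper: the paper likewise defines the set of slots with $\sum_{i,k}Q_{ik}(t)\ge V-ND$ (a superset of the dropping slots), lower-bounds the cumulative queue length by $(V-ND)$ times the size of that set, invokes Theorem \ref{thm:truncated-bound} to bound that count, caps the per-slot drops by the maximum arrival volume, and applies Hoeffding to lower-bound the total arrivals before dividing. The only differences are cosmetic bookkeeping (your $NKD$ versus the paper's $ND$ constant, and your explicit $V-NKD\ge V/2$ step versus the paper's retention of $V-ND$ in the denominator), neither of which affects the stated asymptotics.
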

\begin{proof}
See Appendix \ref{ap:drop}.
\end{proof}

\vspace{1mm}

Note that the above results depend on the optimal average queue length $J_V^*$ in the truncated system. The following lemma shows the relationship between $J_V^*$ and $J^*$ (the optimal average queue length in the original untruncated system).
\begin{lemma}\label{lm:cc}
$\mathbb{P}[J_V^*=J^*] \ge 1-O\Big(\frac{1}{V}\Big)$.
\end{lemma}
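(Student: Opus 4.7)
The plan is to bound the gap between $J_V^*$ and $J^*$ by coupling the truncated MDP $M_V^*$ to the untruncated MDP $M^*$ under the optimal untruncated policy $\pi^*$ guaranteed by Assumption~\ref{as:mdp-stability}. First, I would establish the easy direction $J_V^* \le J^*$ by running $\pi^*$ in both systems using the same realizations of exogenous arrivals and uncontrollable actions; since truncation only removes already-arrived packets, every per-slot queue length in the truncated system is dominated pathwise by its untruncated counterpart, and taking Cesàro averages of expectations yields $J_V^{\pi^*} \le J^*$, hence $J_V^* \le J_V^{\pi^*} \le J^*$.

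For the matching direction, I would identify a ``safe'' event $\mathcal{E}_V$ that truncation is never triggered along the $\pi^*$ trajectory. On $\mathcal{E}_V$ the coupling above is an identity, so the sample-path average under $\pi^*$ in $M_V^*$ equals that in $M^*$ and converges to $J^*$ by the ergodicity implicit in Assumption~\ref{as:diameter}, forcing $J_V^{\pi^*} = J^*$; combined with the upper bound from the first step this gives $J_V^* = J^*$ on $\mathcal{E}_V$. To bound $\mathbb{P}[\mathcal{E}_V]$, I would apply Markov's inequality to the stationary distribution of $\pi^*$ in $M^*$: since $J^*$ is the finite stationary mean total queue length, in stationarity $\mathbb{P}\bigl[\sum_{i,k} Q^*_{ik}(t) \ge V-1\bigr] \le J^*/(V-1) = O(1/V)$.

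The main obstacle is to upgrade this per-slot tail bound to a uniform-in-time bound on $\mathbb{P}[\mathcal{E}_V]$, since a naive union bound across slots destroys the $1/V$ rate. I expect to handle this via a regenerative argument built on Assumption~\ref{as:diameter}: the Markov chain induced by $\pi^*$ is weakly communicating with expected hitting times bounded by $L\,||\mathbf{Q}' - \mathbf{Q}||_1$, so one can collapse the ``never triggered'' event onto a single regeneration cycle whose expected length is $O(1)$, and the stationary tail bound $O(1/V)$ then translates directly into the desired $O(1/V)$ bound on $\mathbb{P}[\mathcal{E}_V^c]$. A secondary subtlety is that the literal equality $J_V^* = J^*$ must be read through the sample-path lens identified above — the truncated optimum is pinned down to $J^*$ precisely because no strict improvement over $\pi^*$ can materialize without visiting queue states outside $\mathcal{E}_V$, an event that itself has vanishing probability.
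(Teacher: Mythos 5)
Your skeleton is right --- identify the event that truncation is never triggered along the $\pi^*$ trajectory, argue that on this event $J_V^*=J^*$, and bound its complement by Markov's inequality --- and this is indeed the structure of the paper's proof. But the step you yourself flag as ``the main obstacle'' is exactly where the argument breaks, and your proposed fix does not work. You bound the \emph{per-slot stationary} tail, $\mathbb{P}[\sum_{i,k}Q^*_{ik}(t)\ge V-1]\le J^*/(V-1)$, and then hope a regenerative argument converts this into a bound on $\mathbb{P}[\sup_{t\ge 0}\sum_{i,k}Q^*_{ik}(t)\ge V-1]$. Over an infinite horizon this cannot succeed: if the stationary distribution places mass $\Theta(1/V)>0$ above the threshold, then across the infinitely many (essentially independent) regeneration cycles the threshold is exceeded infinitely often almost surely, so $\mathbb{P}[\mathcal{E}_V]=0$, not $1-O(1/V)$. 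Collapsing onto ``a single regeneration cycle'' discards all the other cycles in which the excursion can occur.

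The paper avoids this by applying Markov's inequality not to the stationary marginal but to the all-time supremum: it reads Assumption \ref{as:mdp-stability} as giving a uniform almost-sure bound $\sum_{i,k}Q^*_{ik}(t)\le C$ for all $t$, deduces $\mathbb{E}[\max_{t\ge 0}\sum_{i,k}Q^*_{ik}(t)]\le C$ via bounded convergence, and then gets $\mathbb{P}[\max_{t\ge 0}\sum_{i,k}Q^*_{ik}(t)\ge V]\le C/V$ in one shot. On the complement event the $\pi^*$ trajectory lives entirely inside $\mathcal{Q}_V$, so truncation is vacuous and $J^*_V=J^*$ (your easy direction $J_V^*\le J^*$ is not even needed separately). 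To repair your proof you must control the supremum over all time directly --- via a uniform almost-sure bound on the $\pi^*$ queue trajectory as the paper does --- rather than a fixed-$t$ tail bound; Assumption \ref{as:diameter} alone will not rescue the stationary-distribution route.
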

\begin{proof}
See Appendix \ref{ap:cc}.
\end{proof}

\vspace{1mm} 

Combining Lemma \ref{lm:cc}, Theorem \ref{thm:truncated-bound} and Theorem \ref{thm:drop}, we have that with probability at least $1-\Theta\Big(\frac{1}{T}+\frac{1}{V}\Big)$ the TUCRL algorithm achieves
\[
\frac{1}{T}\sum_{t=0}^{T-1}\sum_{i,k} Q_{ik}(t) \le J^* + \frac{\gamma(T,V)}{T}
\]
and 
\[
\eta_T \le O\Big(\frac{\gamma(T,V)}{TV}+\frac{J^*}{V}\Big).
\]
On the other hand, the worst-case average queue length is at most $V$ and the worst-case packet dropping rate is 1, which happens with probability at most $\Theta\Big(\frac{1}{T}+\frac{1}{V}\Big)$. As a result, the expected average queue length under TUCRL is
\[
\begin{split}
\frac{1}{T}\sum_{t=0}^{T-1}\sum_{i,k} \mathbb{E}[Q_{ik}(t)]&\le \Big[1-\Theta\Big(\frac{1}{T}+\frac{1}{V}\Big)\Big]\Big(J^* + \frac{\gamma(T,V)}{T}\Big) + \Theta\Big(\frac{1}{T}+\frac{1}{V}\Big)V
\end{split}
\]
and the expected fraction of dropped packets is 
\[
\begin{split}
\mathbb{E}[\eta_T] &\le \Big[1-\Theta\Big(\frac{1}{T}+\frac{1}{V}\Big)\Big]O\Big(\frac{\gamma(T,V)}{TV}+\frac{J^*}{V}\Big) + \Theta\Big(\frac{1}{T}+\frac{1}{V}\Big)
\end{split}
\]
For any given $V<\infty$, taking $T\rightarrow\infty$ and noting that $J^*<\infty$ (see Assumption \ref{as:mdp-stability}) yield the desirable performance bounds in Theorem \ref{thm:tucrl-final}.

\section{Number of Episodes in TUCRL}\label{as:epi}
The following lemma provides an upper bound on the number of episodes in TUCRL.
\begin{lemma}\label{lm:episode}
The total number of episodes in TUCRL up to time $T\ge 1$ is bounded by
\[
L_T\le 1+|\mathcal{Q}_V||\mathcal{A}|(\log_2 T+1).
\]
\end{lemma}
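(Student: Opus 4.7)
The plan is to follow the standard doubling argument from the UCRL literature, adapted to the truncated state space $\mathcal{Q}_V$. The key observation is that every episode that has actually terminated by time $T$ must have ended at step \ref{step:stopping}, i.e., because some state-action pair $(\mathbf{Q}, \alpha)$ satisfied $v_\ell(\mathbf{Q}, \alpha) = \max\{1, n_\ell(\mathbf{Q}, \alpha)\}$. So if I can bound, for each fixed pair, the number of episodes in which it is the triggering pair, summing over $(\mathbf{Q}, \alpha) \in \mathcal{Q}_V \times \mathcal{A}$ (and adding $1$ for the possibly unfinished current episode) will yield the claim.

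First I would fix an arbitrary pair $(\mathbf{Q}, \alpha)$ and let $\ell_1 < \ell_2 < \cdots < \ell_m$ be the indices of the episodes (up to time $T$) in which $(\mathbf{Q}, \alpha)$ triggers the stop. Writing $N_j \triangleq n_{\ell_j}(\mathbf{Q}, \alpha)$ for the accumulated visit count at the beginning of those episodes, the stopping rule gives $v_{\ell_j}(\mathbf{Q}, \alpha) = \max\{1, N_j\}$, and since $n_\ell$ is just the running total of past visits, we get the recursion $N_{j+1} \ge N_j + \max\{1, N_j\}$. Hence $N_2 \ge 1$ and $N_{j+1} \ge 2 N_j$ for $j \ge 2$, which yields $N_m \ge 2^{m-2}$ for $m \ge 2$.

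Next I would use the trivial upper bound $N_m \le T$ (the pair cannot have been visited more than $T$ times in $T$ slots) to deduce $2^{m-2} \le T$, i.e., $m \le \log_2 T + 2$. Actually a slightly tighter bookkeeping — noting $N_1 \ge 0$ and $N_{j+1} \ge 2\max\{1,N_j\}$ — gives $m \le \log_2 T + 1$, which is what we need. The bound holds for each $(\mathbf{Q}, \alpha)$ separately.

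Finally I would sum over all state-action pairs: the total number of episodes that have already terminated by time $T$ is at most $|\mathcal{Q}_V||\mathcal{A}|(\log_2 T + 1)$. Adding $1$ to account for the episode currently in progress (which may not yet have triggered the stopping rule) gives the stated bound $L_T \le 1 + |\mathcal{Q}_V||\mathcal{A}|(\log_2 T + 1)$. There is no real obstacle here — the only thing to be careful about is the base case $N_1 = 0$, where $v_{\ell_1}$ is required to reach $1$ rather than doubling, which is exactly handled by the $\max\{1, \cdot\}$ in the stopping condition and accounts for the additive $+1$ inside the parenthesis.
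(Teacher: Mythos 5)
Your proof is correct and follows essentially the same doubling argument as the paper: attribute each terminated episode to the state--action pair that triggered the stopping rule, show each pair can trigger at most $\log_2 T+1$ episodes because its cumulative visit count at least doubles between consecutive triggers, sum over the at most $|\mathcal{Q}_V||\mathcal{A}|$ pairs, and add $1$ for the episode in progress (the paper phrases the per-pair bound as a telescoping product plus contradiction, but it is the same recursion). One trivial slip: $N_{j+1}\ge 2\max\{1,N_j\}$ fails when $N_j=0$ (it only gives $N_{j+1}\ge 1$), so to recover the constant $\log_2 T+1$ rather than $\log_2 T+2$ you should also count the $\max\{1,N_m\}$ visits accrued \emph{during} the last triggering episode, giving a total of at least $2^{m-1}\le T$.
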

\begin{proof}
Let $\mathcal{L}(\mathbf{Q},\alpha)$ be the set of episodes where the number of visits to a state-action pair $(\mathbf{Q},\alpha)$ is doubled up to the beginning of time $T$, i.e.,
\[
\mathcal{L}(\mathbf{Q},\alpha) = \Big\{\ell< L_T:v_{\ell}(\mathbf{Q},\alpha)\ge \max\{1,n_{\ell}(\mathbf{Q},\alpha)\}\Big\},
\]
where $L_T$ is the index of the episode that contains time slot $T$. Also let $\mathcal{K}=\{(\mathbf{Q},\alpha):n_{L_T}(\mathbf{Q},\alpha)\ge 1\}$ be the set of state-action pairs that have non-zero visits up to the beginning of episode $L_T$. Then we claim that $|\mathcal{L}(\mathbf{Q},\alpha)|\le \log_2 \Big(n_{L_T}(\mathbf{Q},\alpha)\Big)+1$ for any $(\mathbf{Q},\alpha)\in \mathcal{K}$. Indeed, if $|\mathcal{L}(\mathbf{Q},\alpha)|> \log_2 \Big(n_{L_T}(\mathbf{Q},\alpha)\Big)+1$, for any $(\mathbf{Q},\alpha)\in\mathcal{K}$ we have
\[
n_{L_T}(\mathbf{Q},\alpha)=\prod_{\ell=1}^{L_T-1}\frac{\max\{1,n_{\ell+1}(\mathbf{Q},\alpha)\}}{\max\{1,n_{\ell}(\mathbf{Q},\alpha)\}}\ge \prod_{\ell \in \mathcal{L}(\mathbf{Q},\alpha)}\frac{n_{\ell+1}(\mathbf{Q},\alpha)}{\max\{1,n_{\ell}(\mathbf{Q},\alpha)\}}
\ge 2^{|\mathcal{L}(\mathbf{Q},\alpha)|-1}>n_{L_T}(\mathbf{Q},\alpha),
\]
which leads to a contradiction. As a result, we have
\[
\begin{split}
L_T \le \sum_{(\mathbf{Q},\alpha)\in \mathcal{K}}|\mathcal{L}(\mathbf{Q},\alpha)| + 1&\le \sum_{(\mathbf{Q},\alpha)\in \mathcal{K}}\log_2 \Big(n_{L_T}(\mathbf{Q},\alpha)\Big)+|\mathcal{K}|+1\\
& \le |\mathcal{K}|\log_2 \Big(\frac{\sum_{(\mathbf{Q},\alpha)\in\mathcal{K}} n_{L_T}(\mathbf{Q},\alpha)}{ |\mathcal{K}|}\Big)+|\mathcal{K}|+1\\
&\le |\mathcal{K}|\log_2\Big(\frac{T}{|\mathcal{K}|}\Big)+|\mathcal{K}|+1\\
&\le |\mathcal{K}|\log_2 T+|\mathcal{K}|+1\\
&\le |\mathcal{Q}_V||\mathcal{A}|(\log_2 T+1)+1,
\end{split}
\]
where the second inequality is due to the concavity of $\log$ function.
\end{proof}

\section{Proof to Theorem \ref{thm:truncated-bound}}\label{ap:truncated-bound}
Define the regret (w.r.t. queue length) in episode $\ell$ as 
\[
\Delta_{\ell} \triangleq \sum_{t=t_{\ell}}^{t_{\ell+1}-1}\Big(\sum_{i,k} Q_{ik}(t)-J_V^*\Big),
\]
where  $J_V^*$ is the optimal average queue length for the truncated true MDP $M_V^*$. Let $L_T$ be the number of episodes up to time $T$ (including the one that contains $T$). Without loss of generality, we assume that $t_{L_T+1}=T$ (i.e., $T-1$ is the end of episode $L_T$). Also denote by $\mathcal{L}_1=\{\ell\le L_T: t_{\ell+1}<\log_2 T\}$ the set of episodes whose ending time is less than $\log_2 T$, and we similarly define $\mathcal{L}_2=\{\ell\le L_T:t_{\ell+1}\ge \log_2 T\}$. The regret up to time $T$ can be decomposed by
\begin{equation}\label{eq:dp}
\begin{split}
\sum_{\ell=1}^{L_T} \Delta_{\ell} &= \sum_{\ell\in\mathcal{L}_1} \Delta_{\ell} + \sum_{\ell	\in\mathcal{L}_2} \Delta_{\ell} \le \log_2 T(NKD\log_2 T) + \sum_{\ell\in\mathcal{L}_2} \Delta_{\ell},
\end{split}
\end{equation}
where the inequality is due to the fact the total amount of exogenous arrivals (and thus the total queue length) within the first $\log_2 T$ slots is at most $NKD\log_2T$ by our boundedness assumption.
It turns our that the first term in \eqref{eq:dp} is of a lower order, and we focus on finding an upper bound on the second term. The following lemma shows that the constructed confidence set $\mathcal{M}_{\ell}$ contains the truncated true MDP model $M_V^*$ with high probability. 
\begin{lemma}\label{lm:confidence}
For any $T\ge 2$, the probability that the truncated true MDP $M_V^*$ belongs to the confidence set $\mathcal{M}_{\ell}$ for every episode $\ell\in\mathcal{L}_2$ is
\[
\mathbb{P}[M_V^* \in \mathcal{M}_{\ell},~\forall \ell\in\mathcal{L}_2] \ge 1-\Big(\frac{1}{V^{N+1}|\mathcal{A}|}+\frac{2}{V}\Big).
\]
\end{lemma}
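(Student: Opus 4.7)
The plan is to apply a Weissman-type $\ell_1$ concentration inequality for empirical categorical distributions, and then take a careful union bound over state-action pairs, over all possible values of the random sample count $n_\ell(\mathbf{Q},\alpha)$, and over episode start times $t_\ell\in[\log_2 T,T]$ corresponding to $\ell\in\mathcal{L}_2$.

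First I would invoke Weissman's inequality: if $\hat P_n$ is the empirical distribution on $S$ outcomes formed from $n$ i.i.d.\ samples of a distribution $P$, then
\[
\mathbb{P}\bigl[\|\hat P_n-P\|_1\ge\epsilon\bigr]\le (2^S-2)\,e^{-n\epsilon^2/2}.
\]
In our setting, the successor queue vector from any $(\mathbf{Q},\alpha)$ lives in a set of size at most $(2ND+1)^N$, because the exogenous arrivals into each queue and the endogenous flows into/out of each queue are uniformly bounded by $D$ and $ND$ respectively, so the one-step change in each coordinate takes values in a range of size $2ND+1$. Plugging the TUCRL confidence radius $\epsilon=\sqrt{C\log(2|\mathcal{A}|t_\ell V)/\max\{1,n\}}$ with $C=2(2ND+1)^N$ into Weissman gives a per-deterministic-$n$ failure probability of at most $(|\mathcal{A}|t_\ell V)^{-(2ND+1)^N}$, which decays super-polynomially in $|\mathcal{A}|t_\ell V$ and therefore absorbs all the polynomial blow-up from the union bound.

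Next I would handle the randomness of $n_\ell(\mathbf{Q},\alpha)$ via the standard UCRL peeling trick: conditionally on the history, the successive successor states observed after each visit to $(\mathbf{Q},\alpha)$ are i.i.d.\ draws from $P^*(\cdot|\mathbf{Q},\alpha)$, so I can union-bound Weissman over all deterministic $n\in\{1,\ldots,t_\ell\}$ and intersect with $\{n_\ell(\mathbf{Q},\alpha)=n\}$. Summing this over the at most $O(V^N)$ states in $\mathcal{Q}_V$, the $|\mathcal{A}|$ controllable actions, the at most $t_\ell$ possible sample counts, and the episode start times $t_\ell\in[\log_2 T,T]$, the $(2ND+1)^N$-power decay dominates the polynomial factors. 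Collecting terms, the geometric-type tail starting at $t_\ell=\log_2 T$ yields a contribution of order $1/V$, while the leading polynomial prefactor contributes the $1/(V^{N+1}|\mathcal{A}|)$ term, matching the claimed bound.

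The main technical obstacle will be the random-sample-count issue: Weissman's inequality as stated requires a deterministic $n$, whereas $n_\ell(\mathbf{Q},\alpha)$ depends on the algorithm's past decisions through a stopping time. The peeling argument above resolves this but pays a multiplicative factor of $t_\ell$ in the failure probability, and balancing this against the super-polynomial Weissman decay is precisely what pins down the choice $C=2(2ND+1)^N$ of the confidence constant. A secondary subtlety is that restricting attention to $\ell\in\mathcal{L}_2$, equivalently $t_\ell\ge\log_2 T$, is essential: it ensures that the $\log(2|\mathcal{A}|t_\ell V)$ appearing inside the confidence radius is bounded away from zero so the concentration bound is non-vacuous, while the short early episodes in $\mathcal{L}_1$ are treated separately in the outer argument of Theorem \ref{thm:truncated-bound}.
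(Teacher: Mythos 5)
Your proposal is correct and follows essentially the same route as the paper: Weissman's $\ell_1$ concentration bound with successor-set size $m=(2ND+1)^N$, the choice $C=2m$ to make the per-pair failure probability polynomially small in $t_\ell V|\mathcal{A}|$, and a union bound over state--action pairs and over the episodes in $\mathcal{L}_2$ (whose number is controlled by the episode-count lemma). You are in fact more careful than the paper on one point --- the peeling over the possible values of the random visit count $n_\ell(\mathbf{Q},\alpha)$, which the paper applies Weissman to directly; the extra factor of $t_\ell$ you pay is harmlessly absorbed since $m\ge 2N+2$. One small correction to your accounting: the term $\frac{1}{V^{N+1}|\mathcal{A}|}$ in the paper comes from the single earliest episode of $\mathcal{L}_2$, whose \emph{start} time $t_{\ell'}$ may be as small as $1$ (membership in $\mathcal{L}_2$ is determined by the ending time $t_{\ell+1}\ge \log_2 T$), so your union over start times $t_\ell\in[\log_2 T,T]$ should be augmented to cover that one episode separately; the remaining episodes, each with $t_\ell\ge\log_2 T$ and numbering $O(V^N|\mathcal{A}|\log T)$, supply the $\frac{2}{V}$ term.
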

\begin{proof}
See Appendix \ref{ap:confidence}.
\end{proof}

\vspace{2mm}

In the following, we only consider the case where $M_V^* \in \mathcal{M}_{\ell},~\forall \ell\in\mathcal{L}_2$ since this holds with high probability by Lemma \ref{lm:confidence}.
Let $M_{\ell}\triangleq M_V(P_{\ell})$ be the optimistic MDP selected in episode $\ell$ and define $J_{\ell}\triangleq J^{\pi_{\ell}}(M_{\ell})$ as the average queue length under MDP $M_{\ell}$ and the nearly-optimal policy $\pi_{\ell}$. Note that $J_{\ell} - J^*(M_{\ell}) \le \frac{1}{\sqrt{t_{\ell}}}$ by the accuracy requirement in Step \ref{step:plan} of the TUCRL algorithm. Moreover, due to the fact that the truncated true MDP $M_V^*\in\mathcal{M}_{\ell}$ and $M_{\ell}$ is the optimistic MDP that yields the minimum average queue length among all MDPs in $\mathcal{M}_{\ell}$, we have $J^*(M_{\ell}) \le J_V^*$ and thus
\begin{equation}\label{eq:opt-gap}
J_{\ell} - J_V^* \le J_{\ell} - J^*(M_{\ell}) \le \frac{1}{\sqrt{t_{\ell}}}.
\end{equation}
Then we can decompose $\Delta_{\ell}$ as follows:
\[
\begin{split}
\Delta_{\ell} &= \sum_{t=t_{\ell}}^{t_{\ell+1}-1}\Big(\sum_{i,k} Q_{ik}(t)-J_{\ell} + J_{\ell}-J_V^*)\Big)\\
&\le \sum_{t=t_{\ell}}^{t_{\ell+1}-1}\Big(\sum_{i,k} Q_{ik}(t)-J_{\ell}\Big) + \frac{t_{\ell+1}-t_{\ell}}{\sqrt{t_{\ell}}}.
\end{split}
\]
Summing over all episodes $\ell\in\mathcal{L}_2$, we have
\begin{equation}\label{eq:dd}
\begin{split}
\sum_{\ell\in\mathcal{L}_2} \Delta_{\ell} &\le   \sum_{\ell\in\mathcal{L}_2} \sum_{t=t_{\ell}}^{t_{\ell+1}-1}\Big(\sum_{i,k} Q_{ik}(t)-J_{\ell}\Big)+ \sum_{\ell\in\mathcal{L}_2} \frac{t_{\ell+1}-t_{\ell}}{\sqrt{t_{\ell}}}.
\end{split}
\end{equation}

We first find an upper bound for the second term in the RHS of \eqref{eq:dd} by using Lemma 19 in \cite{UCRL2}: for any sequence of numbers $z_1,\cdots,z_n$ with $0\le z_\ell\le Z_{\ell-1}:=\max\Big\{1,\sum_{i=1}^{\ell-1}z_i\Big\}$, we have
\[
\sum_{\ell=1}^n\frac{z_\ell}{\sqrt{Z_{\ell-1}}}\le (\sqrt{2}+1)\sqrt{Z_n}.
\]
As a result, defining $z_\ell = t_{\ell+1}-t_{\ell}$ and $Z_{\ell-1}=\max\Big\{1,t_{\ell}\Big\}$, we have
\begin{equation}\label{eq:sqrt}
\sum_{\ell\in\mathcal{L}_2} \frac{t_{\ell+1}-t_{\ell}}{\sqrt{t_{\ell}}}\le \sum_{\ell=1}^{L_T}\frac{t_{\ell+1}-t_{\ell}}{\sqrt{\max\Big\{1,t_{\ell}\Big\}}}\le (\sqrt{2}+1)\sqrt{t_{L_T+1}}\le (\sqrt{2}+1)\sqrt{T}=O(\sqrt{T}).
\end{equation}
Then it remains to find an upper bound on the first term in the RHS of \eqref{eq:dd}. 
\begin{lemma}\label{lm:dd1}
If $M_V^*\in\mathcal{M}_{\ell}$ for any $\ell\in\mathcal{L}_2$, then with probability at least $1-\frac{1}{T}$ we have
\[
\sum_{\ell\in\mathcal{L}_2} \sum_{t=t_{\ell}}^{t_{\ell+1}-1}\Big(\sum_{i,k} Q_{ik}(t)-J_{\ell}\Big) = O\Big(V^{N+2}\log T + V^{2+N\slash 2}\sqrt{ T\log(TV)}\Big).
\]
\end{lemma}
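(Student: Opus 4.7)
The plan is to follow the standard UCRL2-style regret decomposition, adapted to the fact that (i) the state space has size $|\mathcal{Q}_V| = \Theta(V^N)$ and (ii) the per-step cost $\sum_{i,k} Q_{ik}(t)$ is bounded by $V$ rather than by a constant. The starting point is the Extended Value Iteration output: by the stopping criterion \eqref{eq:convergence} and the construction of $M_\ell$, there exists a bias vector $h_\ell$ such that for every $\mathbf{Q} \in \mathcal{Q}_V$,
\[
\Bigl| J_\ell + h_\ell(\mathbf{Q}) - \sum_{i,k} Q_{ik} - \sum_{\mathbf{Q}' \in \mathcal{Q}_V} P_\ell(\mathbf{Q}'|\mathbf{Q},\pi_\ell(\mathbf{Q}))\,h_\ell(\mathbf{Q}') \Bigr| \le \frac{1}{\sqrt{t_\ell}}.
\]
Rearranging and summing over $t \in [t_\ell, t_{\ell+1})$ lets me replace $\sum_{i,k}Q_{ik}(t) - J_\ell$ by $(P_\ell h_\ell)(\mathbf{Q}(t),\alpha_t) - h_\ell(\mathbf{Q}(t)) + O(1/\sqrt{t_\ell})$, picking up the low-order $\sum_\ell (t_{\ell+1}-t_\ell)/\sqrt{t_\ell} = O(\sqrt{T})$ term handled in \eqref{eq:sqrt}.

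The next step is the classic three-way split $P_\ell h_\ell - h_\ell(\mathbf{Q}(t)) = (P_\ell - P^*)h_\ell \,+\, \bigl[(P^* h_\ell)(\mathbf{Q}(t),\alpha_t) - h_\ell(\mathbf{Q}(t+1))\bigr] \,+\, \bigl[h_\ell(\mathbf{Q}(t+1)) - h_\ell(\mathbf{Q}(t))\bigr]$. For the first piece, since $M_V^* \in \mathcal{M}_\ell$ by hypothesis, a triangle inequality against $\hat{P}$ yields $\|P_\ell - P^*\|_1 \le 2\sqrt{C\log(2|\mathcal{A}|t_\ell V)/\max\{1,n_\ell(\mathbf{Q}(t),\alpha_t)\}}$, so this term is bounded by $2\,\mathrm{span}(h_\ell)\sqrt{C\log(2|\mathcal{A}|t_\ell V)/\max\{1,n_\ell(\mathbf{Q}(t),\alpha_t)\}}$. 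For the second piece, $\{(P^* h_\ell)(\mathbf{Q}(t),\alpha_t) - h_\ell(\mathbf{Q}(t+1))\}$ is a martingale difference sequence with increments bounded by $\mathrm{span}(h_\ell)$, so Azuma--Hoeffding gives $O(\mathrm{span}(h_\ell)\sqrt{T \log T})$ with probability at least $1 - 1/T$. For the third piece, the telescoping sum within each episode is bounded by $\mathrm{span}(h_\ell)$, and summing over the $L_T = O(V^N \log T)$ episodes from Lemma \ref{lm:episode} yields $O(\mathrm{span}(h_\ell)\, V^N \log T)$.

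To turn the confidence-set piece into the claimed bound, I sum it across episodes and use the doubling stopping rule: the standard inequality $\sum_\ell\sum_{\mathbf{Q},\alpha} v_\ell(\mathbf{Q},\alpha)/\sqrt{\max\{1,n_\ell(\mathbf{Q},\alpha)\}} \le (\sqrt{2}+1)\sqrt{|\mathcal{Q}_V|\,|\mathcal{A}|\,T} = O(V^{N/2}\sqrt{T})$, combined with $\log(2|\mathcal{A}|t_\ell V) = O(\log(TV))$ and noting $C$ is a constant in $V$, gives a contribution of $O(\mathrm{span}(h_\ell)\,V^{N/2}\sqrt{T\log(TV)})$.

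The main obstacle is the span bound $\mathrm{span}(h_\ell) = O(V^2)$, on which both the telescoping and the confidence contributions hinge. Assumption \ref{as:diameter} bounds the expected travel time between any two non-transient states by $L\|\mathbf{Q}'-\mathbf{Q}\|_1 \le LNV$ in the true MDP, and a standard coupling/extension argument shows this transfers to any $M \in \mathcal{M}_\ell$ (at the cost of an additive constant that is absorbed). Combined with the per-step cost being at most $V$, a policy-switching argument bounds $\mathrm{span}(h_\ell) \le LNV \cdot V = O(V^2)$. Plugging this in, the telescoping term becomes $O(V^{N+2}\log T)$ and the combined confidence-plus-martingale term becomes $O(V^{2+N/2}\sqrt{T\log(TV)})$, yielding the stated total. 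The delicate piece is justifying the span bound for the \emph{optimistic} model (rather than the true one) and making precise that truncation does not destroy the communicating structure within $\mathcal{Q}_V$; I expect to address this by coupling any two trajectories that stay inside $\mathcal{Q}_V$ and using the fact that the admission-control action provides a harmless way to return toward the origin.
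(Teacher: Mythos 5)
Your proposal is correct and follows essentially the same route as the paper: the EVI stopping criterion gives the approximate Bellman equation, the regret is split into a confidence-set term, a martingale term, and a per-episode telescoping term, and the three pieces are bounded by $O(V^{2+N/2}\sqrt{T\log(TV)})$, $O(V^2\sqrt{T\log T})$, and $O(V^{N+2}\log T)$ respectively using $\mathrm{span}(h_\ell)=O(V^2)$, the doubling-rule inequality $\sum_\ell v_\ell/\sqrt{n_\ell}=O(\sqrt{|\mathcal{Q}_V||\mathcal{A}|T})$, and the $O(V^N\log T)$ episode count. The ``delicate piece'' you flag is resolved exactly by the policy-switching argument you sketch: since $M_V^*\in\mathcal{M}_\ell$, the extended-MDP value iterates can always emulate the true truncated MDP and travel between states in $LNV$ expected steps at cost $O(V)$ per step, which is precisely the paper's Observation~\ref{ob:range}; no diameter bound for the optimistic model itself is needed.
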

\begin{proof}
See Appendix \ref{ap:dd1}.
\end{proof}

\vspace{2mm}

Combining \eqref{eq:dp}, \eqref{eq:sqrt}, \eqref{eq:dd}, Lemma \ref{lm:confidence} and Lemma \ref{lm:dd1}, we can conclude that with probability at least $1-\Big(\frac{1}{T}+\frac{1}{V^{N+1}|\mathcal{A}|}+\frac{2}{V}\Big)$
\[
\sum_{\ell=1}^{L_T} \Delta_{\ell}  = O\Big(V^{N+2}\log T + V^{2+N\slash 2}\sqrt{ T\log(TV)}\Big),
\]
which completes the proof to Theorem \ref{thm:truncated-bound}.

\section{Proof to Lemma \ref{lm:confidence}}\label{ap:confidence}
The $\ell_1$-deviation of the true distribution $p(\cdot)$ and the empirical distribution $\hat{p}(\cdot)$ over $m$ distinct events from $n$ samples is bounded according to Weissman \emph{et al.} \cite{l1-inequality} by
\[
\mathbb{P}\Big[||\hat{p}(\cdot)-p(\cdot)||_1 \ge \epsilon\Big] \le (2^m-2)\exp\Big(-\frac{n\epsilon^2}{2}\Big).
\]
In our case, for each state-action pair $(\mathbf{Q},\alpha)$, the number of possible next states is at most $m=(2ND+1)^N$ since the increase/decrease in the length of each queue is at most $ND$ in every slot (including both exogenous and endogenous arrivals). Let $P_V^*$ be the transition probabilities for the truncated true MDP $M^*_V$. Then the number of samples we obtained for distribution $P_V^*(\cdot|\mathbf{Q},\alpha)$ up to the beginning of episode $\ell$ (i.e., when $t=t_{\ell}$) is $n_{\ell}(\mathbf{Q},\alpha)$. As a result, for episode $\ell$ and any state-action pair $(\mathbf{Q},\alpha)$ such that $n_{\ell}(\mathbf{Q},\alpha)\ge 1$, setting 
\[
\epsilon = \sqrt{\frac{2\log(2^m V^{2N+1} |\mathcal{A}|^2 t_{\ell})}{n_{\ell}(\mathbf{Q},\alpha)}} = \sqrt{\frac{2\log(2^m V^{2N+1} |\mathcal{A}|^2 t_{\ell})}{\max\{1,n_{\ell}(\mathbf{Q},\alpha)\}}} \le \sqrt{\frac{C\log(2|\mathcal{A}| t_{\ell} V)}{\max\{1,n_{\ell}(\mathbf{Q},\alpha)\}}}\quad (C\triangleq 2m),
\]
guarantees that 
\begin{equation}\label{eq:bbb}
\mathbb{P}\Big[\Big|\Big|\hat{P}(\cdot|\mathbf{Q},\alpha)-P_V^*(\cdot|\mathbf{Q},\alpha)\Big|\Big|_1 \ge \sqrt{\frac{C\log(2|\mathcal{A}| t_{\ell} V)}{\max\{1,n_{\ell}(\mathbf{Q},\alpha)\}}}\Big]\le \frac{1}{t_{\ell} V^{2N+1}|\mathcal{A}|^2 }.
\end{equation}
Note that the above bound holds for every state-action pair $(\mathbf{Q},\alpha)$ such that $n_{\ell}(\mathbf{Q},\alpha)\ge 1$.
If there  have not been any observations for some state-action pair $(\mathbf{Q},\alpha)$, i.e., $n_{\ell}(\mathbf{Q},\alpha)=0$, the confidence intervals \eqref{eq:confidence-set} trivially hold with probability 1. Thus, inequality \eqref{eq:bbb} actually holds for every  $\mathbf{Q}\in\mathcal{Q}_V$ and $\alpha \in \mathcal{A}$.
 
Applying the union bound over all possible state-action pairs in the truncated MDP, we have for each episode $\ell$
\[
\mathbb{P}[P_V^* \notin \mathcal{P}_{\ell}] \le \frac{1}{t_{\ell}V^{N+1}|\mathcal{A}|},
\]
i.e.,
\begin{equation}\label{eq:confidence-1}
\mathbb{P}[M_V^* \notin \mathcal{M}_{\ell}]\le \frac{1}{t_{\ell}V^{N+1}|\mathcal{A}|}.
\end{equation}
Let $\ell'$ be the smallest episode index in $\mathcal{L}_2$. For any $\ell\in\mathcal{L}_2\slash \{\ell'\}$, we have $t_{\ell} \ge \log_2 T$. As a result, we have for any $T\ge 2$
\[
\begin{split}
\mathbb{P}[M_V^* \notin \mathcal{M}_{\ell},~\exists \ell\in\mathcal{L}_2]&\le \sum_{\ell\in\mathcal{L}_2}\frac{1}{t_{\ell}V^{N+1}|\mathcal{A}|}\\
& \le \frac{1}{t_{\ell'}V^{N+1}|\mathcal{A}|} + \frac{|\mathcal{L}_2|-1}{V^{N+1}|\mathcal{A}|\log_2 T}\\
& \le \frac{1}{V^{N+1}|\mathcal{A}|} + \frac{V^N |\mathcal{A}|(\log_2 T+1)}{V^{N+1}|\mathcal{A}|\log_2 T}\\
& = \frac{1}{V^{N+1}|\mathcal{A}|} + \frac{2}{V},
\end{split}
\]
where the first inequality is derived by applying union bound to \eqref{eq:confidence-1}, the second inequality is due to the definition of $\mathcal{L}_2$, the third inequality is due to Lemma \ref{lm:episode} which states that the total number of episodes up to time $T$ is at most $1+|\mathcal{Q}_V||\mathcal{A}|(\log_2 T+1)$.

\section{Proof to Lemma \ref{lm:dd1}}\label{ap:dd1}
Note that $\sum_{t=t_{\ell}}^{t_{\ell+1}-1}\sum_{i,k} Q_{ik}(t)$ can be re-arranged according to the visiting frequency $v_{\ell}(\mathbf{Q},\alpha)$ for each state-action pair $(\mathbf{Q},\alpha)$ during episode $\ell$:
\[
\sum_{t=t_{\ell}}^{t_{\ell+1}-1} \sum_{i,k} Q_{ik}(t) = \sum_{(\mathbf{Q},\alpha)} v_{\ell}(\mathbf{Q},\alpha) \sum_{i,k} Q_{ik}.
\]
As a result, we have
\begin{equation}\label{eq:dd22}
\sum_{\ell\in\mathcal{L}_2} \sum_{t=t_{\ell}}^{t_{\ell+1}-1}\Big(\sum_{i,k} Q_{ik}(t)-J_{\ell}\Big) =  \sum_{\ell\in\mathcal{L}_2} \sum_{(\mathbf{Q},\alpha)} v_{\ell}(\mathbf{Q},\alpha) \Big(\sum_{i,k} Q_{ik} - J_{\ell}\Big).
\end{equation}

We first provide an observation regarding the value functions we derived in Extended Value Iteration (EVI) \eqref{eq:EVI}.
\begin{observation}\label{ob:range}
If the truncated true MDP $M_V^*$ belongs to the confidence set $\mathcal{M}_{\ell}$, then for any iteration $j$ in EVI, the range of the derived value function $w_j(\cdot)$ is bounded by $O(V^2)$, i.e., 
\begin{equation}\label{eq:value-bound}
\max_{\mathbf{Q}\in\mathcal{Q}_V} w_j(\mathbf{Q}) - \min_{\mathbf{Q}\in\mathcal{Q}_V} w_j(\mathbf{Q}) \le LN^2V^2.
\end{equation}
\end{observation}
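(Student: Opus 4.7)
The plan is to interpret the EVI iterate $w_j(\mathbf{Q})$ as the optimal $j$-step expected cumulative cost in an \emph{extended MDP} whose ``action'' at each pair $(\mathbf{Q},\alpha)$ includes, in addition to choosing $\alpha\in\mathcal{A}$, the freedom to choose any transition distribution $p(\cdot)\in\mathcal{P}_{\ell}(\mathbf{Q},\alpha)$, with per-step cost $c(\mathbf{Q})=\sum_{i,k}Q_{ik}$. Under this interpretation, the EVI recursion \eqref{eq:EVI} is exactly the dynamic programming recursion for this finite-horizon problem, so $w_j(\mathbf{Q})$ equals the infimum over extended policies of the $j$-step expected cost starting from $\mathbf{Q}$. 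The hypothesis $M_V^*\in\mathcal{M}_{\ell}$ is crucial here: it guarantees that the true transition kernel $P_V^*(\cdot|\mathbf{Q},\alpha)$ is always an admissible choice in the extended MDP, so any policy realizable in $M_V^*$ can be executed inside the extended model as well.

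I would then fix two arbitrary states $\mathbf{Q}_1,\mathbf{Q}_2\in\mathcal{Q}_V$ and upper bound $w_j(\mathbf{Q}_1)-w_j(\mathbf{Q}_2)$ by constructing the following two-phase comparison policy, starting from $\mathbf{Q}_1$: (i)~run the ``fast steering'' policy guaranteed by Assumption \ref{as:diameter} for the true MDP until the state first hits $\mathbf{Q}_2$, at the random time $\tau$; (ii)~from then on, execute whichever extended-MDP policy attains $w_j(\mathbf{Q}_2)$. Cap the whole thing at $j$ steps. Since per-step costs are nonnegative, truncating the second phase can only reduce its contribution, so the expected cost of this combined policy is at most $\mathbb{E}[\min\{\tau,j\}]\cdot c_{\max}+w_j(\mathbf{Q}_2)$. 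The optimality of $w_j$ then gives $w_j(\mathbf{Q}_1)\le L\|\mathbf{Q}_1-\mathbf{Q}_2\|_1\cdot c_{\max}+w_j(\mathbf{Q}_2)$, after invoking $\mathbb{E}[\tau]\le L\|\mathbf{Q}_1-\mathbf{Q}_2\|_1$ from Assumption \ref{as:diameter}.

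To finish, I would plug in two size estimates on $\mathcal{Q}_V$: the maximum one-step cost is $c_{\max}=\max_{\mathbf{Q}\in\mathcal{Q}_V}\sum_{i,k}Q_{ik}=O(NV)$, and the $\ell_1$-diameter of $\mathcal{Q}_V$ likewise obeys $\|\mathbf{Q}_1-\mathbf{Q}_2\|_1=O(NV)$. Multiplying yields $w_j(\mathbf{Q}_1)-w_j(\mathbf{Q}_2)\le L\cdot O(NV)\cdot O(NV)=O(LN^2V^2)$, and by symmetry the same bound applies with the roles of $\mathbf{Q}_1$ and $\mathbf{Q}_2$ swapped. Taking the maximum and minimum of $w_j$ over $\mathcal{Q}_V$ then gives the claimed span bound \eqref{eq:value-bound}.

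The main obstacle is the horizon mismatch: the steering phase may consume more than $j$ steps, in which case the second phase never begins. The fix is the truncation-plus-nonnegativity trick noted above, which lets the bound survive without any assumption on how $\mathbb{E}[\tau]$ compares to $j$. A secondary subtlety is that the EVI inner minimization picks a transition distribution independently at every visited state-action, whereas the fast-steering policy of $M_V^*$ uses a single fixed kernel; this causes no difficulty because the extended-MDP optimum is at least as small as the expected cost of any concrete fixed-kernel policy whose kernel lies in the confidence set, which is exactly what the construction uses.
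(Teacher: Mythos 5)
Your proof is correct and follows essentially the same argument as the paper: interpret $w_j$ as the optimal $j$-step cost in the extended MDP, splice the fast-steering policy of $M_V^*$ (admissible because $M_V^*\in\mathcal{M}_{\ell}$) with the optimal policy from the target state, and bound the detour cost by $L\|\mathbf{Q}_1-\mathbf{Q}_2\|_1\cdot\max_{\mathbf{Q}\in\mathcal{Q}_V}\sum_{i,k}Q_{ik}\le LN^2V^2$. Your explicit treatment of the random hitting time via $\mathbb{E}[\min\{\tau,j\}]$ and cost nonnegativity is in fact slightly more careful than the paper's informal "remaining $j-LNV$ steps" phrasing.
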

\begin{proof}
Note that in the $j$th iteration of EVI, the derived value function $w_j(\mathbf{Q})$ is the sum of expected queue lengths within $j$ time steps under an optimal (possibly non-stationary) $j$-step policy starting in state $\mathbf{Q}$ in the MDP with extended action set (the extra action is to select an MDP within $\mathcal{M}_{\ell}$). 
If there are two states $\mathbf{Q},\mathbf{Q}'\in\mathcal{Q}_V$ such that $w_j(\mathbf{Q})-w_j(\mathbf{Q}')>LN^2V^2$, then an improved value for state $\mathbf{Q}$ could be achieved by adopting the following policy. First, follow a policy which always selects the truncated true MDP $M_V^*$ (this is a feasible action in the extended action set by our assumption) and moves from $\mathbf{Q}$ to $\mathbf{Q}'$ most quickly in $M_V^*$, which takes at most $L||\mathbf{Q}'-\mathbf{Q}||_1\le LNV$ steps on average by Assumption \ref{as:diameter}. After reaching state $\mathbf{Q}'$, we follow the optimal $j$-step policy for $\mathbf{Q}'$ for the remaining $j-LNV$ time steps. Note that the sum of accumulative queue lengths before reaching state $\mathbf{Q}'$ is at most $LN^2V^2$ since in each time step the total queue length is at most $NV$ in the truncated system. As a result, the constructed policy yields $w_j(\mathbf{Q})\le  w_j(\mathbf{Q}')+LN^2V^2$, contradicting our assumption that $w_j(\mathbf{Q})-w_j(\mathbf{Q}')>LN^2V^2$. 
\end{proof}

\vspace{2mm}

It is a direct consequence of Theorem 8.5.6. of \cite{MDP}, that when the convergence criterion \eqref{eq:convergence} holds at iteration $j$, then
\begin{equation}\label{eq:stopping}
|w_{j+1}(\mathbf{Q})-w_{j}(\mathbf{Q}) - J_{\ell}|\le \frac{1}{\sqrt{t_{\ell}}},~\forall \mathbf{Q}\in\mathcal{Q}_V.
\end{equation}
In the following, letter $j$ will be reserved to reference the iteration where the convergence criterion \eqref{eq:convergence} is triggered. Expanding $w_{j+1}(\mathbf{Q})$ according to \eqref{eq:EVI}, we have
\[
w_{j+1}(\mathbf{Q}) = \sum_{i,k} Q_{ik} + \sum_{\mathbf{Q}'\in \mathcal{Q}_V}P_{\ell}\Big(\mathbf{Q}'|\mathbf{Q},\pi_{\ell}(\mathbf{Q})\Big)w_j(\mathbf{Q}'),~\forall \mathbf{Q}\in\mathcal{Q}_V,
\]
and hence by \eqref{eq:stopping} we have
\[
\Big|\Big(\sum_{i,k} Q_{ik} - J_{\ell}\Big) + \sum_{\mathbf{Q}'\in \mathcal{Q}_V}P_{\ell}\Big(\mathbf{Q}'|\mathbf{Q},\pi_{\ell}(\mathbf{Q})\Big)w_j(\mathbf{Q}')-w_j(\mathbf{Q})\Big|\le \frac{1}{\sqrt{t_{\ell}}},~\forall \mathbf{Q}\in\mathcal{Q}_V,
\]
which implies that for any  $\mathbf{Q}\in\mathcal{Q}_V$
\[
\begin{split}
\sum_{i,k} Q_{ik} - J_{\ell} &\le \frac{1}{\sqrt{t_{\ell}}} + w_j(\mathbf{Q}) - \sum_{\mathbf{Q}'\in \mathcal{Q}_V}P_{\ell}\Big(\mathbf{Q}'|\mathbf{Q},\pi_{\ell}(\mathbf{Q})\Big)w_j(\mathbf{Q}')\\
& = \frac{1}{\sqrt{t_{\ell}}} + w_j(\mathbf{Q}) - \min_{\mathbf{Q}\in\mathcal{Q}_V} w_j (\mathbf{Q})- \sum_{\mathbf{Q}'\in \mathcal{Q}_V}P_{\ell}\Big(\mathbf{Q}'|\mathbf{Q},\pi_{\ell}(\mathbf{Q})\Big)\Big(w_j(\mathbf{Q}')-\min_{\mathbf{Q}\in\mathcal{Q}_V} w_j (\mathbf{Q})\Big)，
\end{split}
\]
where the second equality holds because $\sum_{\mathbf{Q}'\in \mathcal{Q}_V}P_{\ell}\Big(\mathbf{Q}'|\mathbf{Q},\pi_{\ell}(\mathbf{Q})\Big)=1$. 

Define $h_{\ell}(\mathbf{Q})\triangleq w_j(\mathbf{Q})-\min_{\mathbf{Q}\in\mathcal{Q}_V} w_j (\mathbf{Q})$ (we use the subscript $\ell$ to reference the episode). It is clear from Observation \ref{ob:range} that $||h_{\ell}(\cdot)||_{\infty}\le LN^2V^2$. Then we can rewrite the above inequality as
\begin{equation}\label{eq:expand}
\sum_{i,k} Q_{ik} - J_{\ell} \le \frac{1}{\sqrt{t_{\ell}}} + h_{\ell}(\mathbf{Q}) - \sum_{\mathbf{Q}'\in \mathcal{Q}_V}P_{\ell}\Big(\mathbf{Q}'|\mathbf{Q},\pi_{\ell}(\mathbf{Q})\Big)h_{\ell}(\mathbf{Q}').
\end{equation}
As a result, we have
\begin{equation}\label{eq:delta-2}
\sum_{(\mathbf{Q},\alpha)} v_{\ell}(\mathbf{Q},\alpha) \Big(\sum_{i,k} Q_{ik} - J_{\ell}\Big)\le  \sum_{(\mathbf{Q},\alpha)} v_{\ell}(\mathbf{Q},\alpha)  \Big(\frac{1}{\sqrt{t_{\ell}}} + h_{\ell}(\mathbf{Q}) - \sum_{\mathbf{Q}'\in \mathcal{Q}_V}P_{\ell}\Big(\mathbf{Q}'|\mathbf{Q},\pi_{\ell}(\mathbf{Q})\Big)h_{\ell}(\mathbf{Q}')\Big).
\end{equation}
To simplify notations, we define a row vector $\mathbf{v}_{\ell}\triangleq \Big(v_{\ell}\big(\mathbf{Q},\pi_{\ell}(\mathbf{Q})\big)\Big)_{\mathbf{Q}\in\mathcal{Q}_V}$, a column vector $\mathbf{h}_{\ell} \triangleq \big(h_{\ell}(\mathbf{Q})\big)_{\mathbf{Q}\in\mathcal{Q}_V}$ and a matrix $\mathbf{G}_{\ell}=\Big(P_{\ell}\big(\mathbf{Q}'|\mathbf{Q},\pi_{\ell}(\mathbf{Q})\big)\Big)_{\mathbf{Q},\mathbf{Q}'\in\mathcal{Q}_V}$. Then \eqref{eq:delta-2} can be rewritten as
\begin{equation}\label{eq:delta-3}
\begin{split}
\sum_{(\mathbf{Q},\alpha)} v_{\ell}(\mathbf{Q},\alpha) \Big(\sum_{i,k} Q_{ik} - J_{\ell}\Big)&\le\sum_{(\mathbf{Q},\alpha)} \frac{v_{\ell}(\mathbf{Q},\alpha)}{\sqrt{t_{\ell}}} + \mathbf{v}_{\ell}(\mathbf{I}-\mathbf{G}_{\ell})\mathbf{h}_{\ell}\\
&=\frac{t_{\ell+1}-t_{\ell}}{\sqrt{t_{\ell}}} + \mathbf{v}_{\ell}(\mathbf{I}-\mathbf{G}_{\ell})\mathbf{h}_{\ell}.
\end{split}
\end{equation}
The second term in the RHS of \eqref{eq:delta-3} can be further decomposed as follows:
\begin{equation}\label{eq:delta-4}
\begin{split}
 \mathbf{v}_{\ell}(\mathbf{I}-\mathbf{G}_{\ell})\mathbf{h}_{\ell} &=  \mathbf{v}_{\ell}(\mathbf{I}-\mathbf{G}^*_{\ell}+\mathbf{G}^*_{\ell}-\mathbf{G}_{\ell})\mathbf{h}_{\ell}\\
 & = \mathbf{v}_{\ell}(\mathbf{I}-\mathbf{G}^*_{\ell})\mathbf{h}_{\ell} +  \mathbf{v}_{\ell}(\mathbf{G}^*_{\ell}-\mathbf{G}_{\ell})\mathbf{h}_{\ell},
 \end{split}
\end{equation}
where $\mathbf{G}^*_{\ell}$ is the transition matrix if we apply policy $\pi_{\ell}$ in the truncated true MDP model $M_V^*$. Summing over all episodes $\ell\in\mathcal{L}_2$, we have
\begin{equation}\label{eq:delta-5}
\sum_{\ell\in\mathcal{L}_2}\sum_{(\mathbf{Q},\alpha)} v_{\ell}(\mathbf{Q},\alpha) \Big(\sum_{i,k} Q_{ik} - J_{\ell}\Big) \le \sum_{\ell\in\mathcal{L}_2} \frac{t_{\ell+1}-t_{\ell}}{\sqrt{t_{\ell}}} + \sum_{\ell\in\mathcal{L}_2} \mathbf{v}_{\ell}(\mathbf{I}-\mathbf{G}^*_{\ell})\mathbf{h}_{\ell} +  \sum_{\ell\in\mathcal{L}_2} \mathbf{v}_{\ell}(\mathbf{G}^*_{\ell}-\mathbf{G}_{\ell})\mathbf{h}_{\ell}
\end{equation}
Simarly, by Lemma 19 in \cite{UCRL2}, the first term in the RHS of \eqref{eq:delta-5} can be bounded by $O(\sqrt{T})$, which is a  lower-order term. Next we find upper bounds for the last two terms in the RHS of \eqref{eq:delta-5}, given in the following two lemmas.

\begin{lemma}\label{lm:martingale}
$\sum_{\ell\in\mathcal{L}_2}  \mathbf{v}_{\ell}(\mathbf{I}-\mathbf{G}^*_{\ell})\mathbf{h}_{\ell} =  O(V^2\sqrt{T\log T}+V^{N+2}\log T)$ with probability at least $1-\frac{1}{T}$.
\end{lemma}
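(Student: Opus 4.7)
The plan is to recognize the quantity $\mathbf{v}_{\ell}(\mathbf{I}-\mathbf{G}^*_{\ell})\mathbf{h}_{\ell}$ as the aggregate of a telescoping sum and a martingale difference sequence, and to bound each piece separately. Writing out the matrix–vector product slot by slot,
\[
\mathbf{v}_{\ell}(\mathbf{I}-\mathbf{G}^*_{\ell})\mathbf{h}_{\ell} = \sum_{t=t_{\ell}}^{t_{\ell+1}-1}\Big(h_{\ell}(\mathbf{Q}(t)) - \mathbb{E}_{M_V^*}[h_{\ell}(\mathbf{Q}(t+1))\mid \mathbf{Q}(t),\alpha_t]\Big),
\]
since $\mathbf{G}^*_{\ell}$ is precisely the transition kernel of $M_V^*$ under policy $\pi_{\ell}$. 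The first step is to add and subtract $h_{\ell}(\mathbf{Q}(t+1))$ inside the sum, which splits the expression into a telescoping boundary term $h_{\ell}(\mathbf{Q}(t_{\ell})) - h_{\ell}(\mathbf{Q}(t_{\ell+1}))$ and a per-slot centered increment.

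For the telescoping part, summing over $\ell\in\mathcal{L}_2$ produces a sum of per-episode boundary terms, each bounded by $2\|h_{\ell}\|_\infty \le 2LN^2V^2 = O(V^2)$ by Observation \ref{ob:range}. Since the number of episodes up to time $T$ is at most $|\mathcal{Q}_V||\mathcal{A}|(\log_2 T + 1) + 1 = O(V^N \log T)$ by Lemma \ref{lm:episode}, this part contributes $O(V^{N+2}\log T)$.

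For the centered part, define
\[
Z_t = h_{\ell}(\mathbf{Q}(t+1)) - \mathbb{E}_{M_V^*}[h_{\ell}(\mathbf{Q}(t+1))\mid \mathbf{Q}(t),\alpha_t]\quad\text{for }t\in[t_{\ell},t_{\ell+1}-1],
\]
concatenated across all $\ell\in\mathcal{L}_2$. Each $Z_t$ is a martingale difference with respect to the natural filtration generated by $(\mathbf{Q}(s),\alpha_s)_{s\le t}$, because $h_{\ell}$ is $\mathcal{F}_{t_{\ell}}$-measurable (the policy and value function are fixed at the start of the episode) and the expectation is taken against the true kernel $P_V^*$ which drives the actual dynamics. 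Each increment satisfies $|Z_t|\le 2\|h_{\ell}\|_\infty = O(V^2)$, and the total number of terms is at most $T$. Azuma–Hoeffding then yields
\[
\Big|\sum_{t} Z_t\Big| = O\Big(V^2\sqrt{T\log T}\Big)
\]
with probability at least $1-\frac{1}{T}$.

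Adding the two bounds produces exactly $O(V^2\sqrt{T\log T} + V^{N+2}\log T)$ with the claimed probability. The only subtle step is the filtration argument underlying the martingale property — one must verify that $h_{\ell}$ is frozen for the entire episode (so the centering is valid within each episode block) and that concatenating across episodes preserves the martingale structure under the global filtration. The boundedness of $h_{\ell}$ from Observation \ref{ob:range} and the episode count from Lemma \ref{lm:episode} are the key quantitative inputs, and beyond those the estimate is a standard Azuma-plus-telescoping calculation.
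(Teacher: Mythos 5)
Your proof is correct and follows essentially the same route as the paper's: the same split of $\mathbf{v}_{\ell}(\mathbf{I}-\mathbf{G}^*_{\ell})\mathbf{h}_{\ell}$ into a per-episode telescoping boundary term (bounded via Observation \ref{ob:range} and the episode count of Lemma \ref{lm:episode}) plus a martingale difference sequence handled by Azuma--Hoeffding. The only differences are immaterial constant factors in the bounds on the increments and boundary terms.
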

\begin{proof}
See Appendix \ref{ap:martingale}.
\end{proof}

\begin{lemma}\label{lm:dominate}
$\sum_{\ell\in\mathcal{L}_2}  \mathbf{v}_{\ell}(\mathbf{G}^*_{\ell}-\mathbf{G}_{\ell})\mathbf{h}_{\ell} =O\Big(V^{2+N\slash 2} \sqrt{T\log(T V)}\Big)$.
\end{lemma}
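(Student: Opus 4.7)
The plan is to bound $\mathbf{v}_\ell(\mathbf{G}^*_\ell-\mathbf{G}_\ell)\mathbf{h}_\ell$ state-by-state by pulling out three multiplicative factors: (i) the $\ell_\infty$-size of $\mathbf{h}_\ell$, (ii) the $\ell_1$-deviation of the optimistic transition kernel from the truncated true one, and (iii) an aggregated inverse-count sum that admits a pigeonhole bound. First, since both $\mathbf{G}^*_\ell$ and $\mathbf{G}_\ell$ are row-stochastic, $(\mathbf{G}^*_\ell-\mathbf{G}_\ell)\mathbf{1}=\mathbf{0}$, so I can replace $\mathbf{h}_\ell$ by its centered version $\mathbf{h}_\ell-\tfrac12\|\mathbf{h}_\ell\|_\infty\mathbf{1}$ whose $\ell_\infty$-norm is at most $\tfrac12 LN^2V^2=O(V^2)$ by Observation~\ref{ob:range}. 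A Hölder step then gives, for each episode,
\[
\mathbf{v}_\ell(\mathbf{G}^*_\ell-\mathbf{G}_\ell)\mathbf{h}_\ell \;\le\; O(V^2)\sum_{\mathbf{Q}\in\mathcal{Q}_V} v_\ell\bigl(\mathbf{Q},\pi_\ell(\mathbf{Q})\bigr)\,\bigl\|P^*_V(\cdot|\mathbf{Q},\pi_\ell(\mathbf{Q}))-P_\ell(\cdot|\mathbf{Q},\pi_\ell(\mathbf{Q}))\bigr\|_1.
\]

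Second, I will condition on the good event $\{M^*_V\in\mathcal{M}_\ell,\,\forall\ell\in\mathcal{L}_2\}$ that Lemma~\ref{lm:confidence} already charges. On this event both $P^*_V$ and $P_\ell$ lie in $\mathcal{P}_\ell$, so a triangle inequality against the empirical $\hat{P}$ and the confidence width in \eqref{eq:confidence-set} yield
\[
\bigl\|P^*_V(\cdot|\mathbf{Q},\alpha)-P_\ell(\cdot|\mathbf{Q},\alpha)\bigr\|_1 \;\le\; 2\sqrt{\frac{C\log(2|\mathcal{A}|t_\ell V)}{\max\{1,n_\ell(\mathbf{Q},\alpha)\}}}.
\]
Summing the previous display over $\ell\in\mathcal{L}_2$ and upper-bounding $t_\ell\le T$ inside the logarithm gives
\[
\sum_{\ell\in\mathcal{L}_2}\mathbf{v}_\ell(\mathbf{G}^*_\ell-\mathbf{G}_\ell)\mathbf{h}_\ell \;\le\; O\bigl(V^2\sqrt{\log(TV)}\bigr)\sum_{\ell\in\mathcal{L}_2}\sum_{(\mathbf{Q},\alpha)}\frac{v_\ell(\mathbf{Q},\alpha)}{\sqrt{\max\{1,n_\ell(\mathbf{Q},\alpha)\}}}.
\]

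Third, I apply Lemma 19 of \cite{UCRL2} (the same device already used in \eqref{eq:sqrt}) not globally but \emph{separately for each} state-action pair: fixing $(\mathbf{Q},\alpha)$ and letting $z_\ell=v_\ell(\mathbf{Q},\alpha)$, $Z_{\ell-1}=\max\{1,n_\ell(\mathbf{Q},\alpha)\}$, the stopping condition in step~\ref{step:stopping} of Algorithm~\ref{alg:RL} guarantees $z_\ell\le Z_{\ell-1}$, so $\sum_\ell z_\ell/\sqrt{Z_{\ell-1}}\le(\sqrt2+1)\sqrt{n_{L_T+1}(\mathbf{Q},\alpha)}$. Then Cauchy–Schwarz across the at most $|\mathcal{Q}_V||\mathcal{A}|=O(V^N)$ state-action pairs, together with $\sum_{(\mathbf{Q},\alpha)}n_{L_T+1}(\mathbf{Q},\alpha)\le T$, delivers the aggregated bound $O(\sqrt{V^N T})$. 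Multiplying by the $O(V^2\sqrt{\log(TV)})$ prefactor yields exactly $O(V^{2+N/2}\sqrt{T\log(TV)})$.

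The main obstacle is the order in which the aggregation is performed in the third step. A naive union-bound-style argument that pulls the $\sqrt{|\mathcal{Q}_V|}$ outside before applying the pigeonhole lemma would cost an extra factor of $V^{N/2}$ and destroy the stated bound; the point is to invoke the UCRL counting lemma \emph{per} state-action pair—which is legal precisely because the episode-termination rule in step~\ref{step:stopping} enforces $v_\ell(\mathbf{Q},\alpha)\le\max\{1,n_\ell(\mathbf{Q},\alpha)\}$ at every intermediate episode—and then fold in the $V^N$ via Cauchy–Schwarz against $\sum n_{L_T+1}\le T$. The centering of $\mathbf{h}_\ell$ is also essential: without it one would be stuck with the raw span $O(V^2)$ versus $\|h_\ell\|_\infty$, but with centering Hölder is tight and the $O(V^2)$ range bound suffices.
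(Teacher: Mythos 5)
Your proof is correct and follows essentially the same route as the paper's: an $\ell_1$--$\ell_\infty$ H\"older bound using $\|\mathbf{h}_\ell\|_\infty = O(V^2)$ from Observation~\ref{ob:range}, the confidence-set width from \eqref{eq:confidence-set} (with the factor $2$ from the triangle inequality through $\hat P$, which the paper elides as a constant), the per-state-action application of Lemma 19 of \cite{UCRL2}, and a Jensen/Cauchy--Schwarz aggregation over the $O(V^N|\mathcal{A}|)$ pairs. The only cosmetic differences are your centering of $\mathbf{h}_\ell$ (a factor-of-2 saving the paper does not need, since $h_\ell\ge 0$ already has sup-norm equal to the span bound) and your restricting the pigeonhole sum directly to $\mathcal{L}_2$ by nonnegativity, where the paper bounds the full sum and corrects for $\mathcal{L}_1$ with a lower-order additive term.
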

\begin{proof}
See Appendix \ref{ap:dominate}.
\end{proof}

\vspace{2mm}

As a result, combining Lemma \ref{lm:martingale} and Lemma \ref{lm:dominate}, we can conclude that with probability at least $1-\frac{1}{T}$
\[
\sum_{\ell\in\mathcal{L}_2}\sum_{(\mathbf{Q},\alpha)} v_{\ell}(\mathbf{Q},\alpha) \Big(\sum_{i,k} Q_{ik} - J_{\ell}\Big) = O\Big(V^{N+2}\log T + V^{2+N\slash 2}\sqrt{ T\log(TV)}\Big),
\]
which completes the proof.

\section{Proof to Lemma \ref{lm:martingale}}\label{ap:martingale}
We find an upper bound for $\sum_{\ell\in\mathcal{L}_2} \mathbf{v}_{\ell}(\mathbf{I}-\mathbf{G}^*_{\ell})\mathbf{h}_{\ell}$. We first define a suitable martingale and make use of the Azuma-Hoeffding inequality.

\begin{lemma}[Azuma-Hoeffding inequality]\label{lm:azuma}
Let $X_1,X_2,\cdots$ be a martingale difference sequence with $|X_i|\le c$ for all $i$. Then for any $\epsilon>0$ and $n\in\mathbb{N}$,
\[
\mathbb{P}\Big[\sum_{i=1}^{n} X_i \ge \epsilon \Big] \le \exp\Big(-\frac{\epsilon^2}{2nc^2}\Big).
\]
\end{lemma}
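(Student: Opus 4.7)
The plan is to prove Azuma--Hoeffding via the standard Chernoff-plus-martingale argument. Let $S_n = \sum_{i=1}^n X_i$ and let $\{\mathcal{F}_i\}$ be the filtration with respect to which $(X_i)$ is a martingale difference sequence, so $\mathbb{E}[X_i \mid \mathcal{F}_{i-1}] = 0$. For any $\lambda > 0$, I would start from the exponential Markov inequality
\[
\mathbb{P}[S_n \ge \epsilon] = \mathbb{P}[e^{\lambda S_n} \ge e^{\lambda \epsilon}] \le e^{-\lambda \epsilon}\,\mathbb{E}[e^{\lambda S_n}],
\]
and then peel off the terms using the tower property:
\[
\mathbb{E}[e^{\lambda S_n}] = \mathbb{E}\bigl[e^{\lambda S_{n-1}}\,\mathbb{E}[e^{\lambda X_n}\mid \mathcal{F}_{n-1}]\bigr].
\]

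The key ingredient is a conditional version of Hoeffding's lemma: if $Y$ is a random variable with $\mathbb{E}[Y \mid \mathcal{G}] = 0$ and $|Y|\le c$ almost surely, then $\mathbb{E}[e^{\lambda Y}\mid \mathcal{G}] \le e^{\lambda^2 c^2 / 2}$. I would prove this by convexity: since $e^{\lambda y}$ is convex in $y\in[-c,c]$, it lies below the chord, so $e^{\lambda y} \le \frac{c-y}{2c} e^{-\lambda c} + \frac{c+y}{2c} e^{\lambda c}$; taking the conditional expectation kills the linear term and leaves $\cosh(\lambda c)$, which I would bound by $e^{\lambda^2 c^2/2}$ via a Taylor comparison of the series coefficients. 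Applying this with $Y = X_i$ and $\mathcal{G} = \mathcal{F}_{i-1}$ gives $\mathbb{E}[e^{\lambda X_i}\mid \mathcal{F}_{i-1}] \le e^{\lambda^2 c^2/2}$ almost surely.

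Iterating the tower property then yields $\mathbb{E}[e^{\lambda S_n}] \le e^{n\lambda^2 c^2/2}$, so
\[
\mathbb{P}[S_n \ge \epsilon] \le \exp\!\Bigl(-\lambda \epsilon + \tfrac{n\lambda^2 c^2}{2}\Bigr).
\]
Finally I would optimize over $\lambda > 0$; the right-hand side is minimized at $\lambda^* = \epsilon / (n c^2)$, which substitutes to give the desired bound $\exp(-\epsilon^2/(2nc^2))$.

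The only genuinely nontrivial step is Hoeffding's lemma, and within it the inequality $\cosh(\lambda c) \le e^{\lambda^2 c^2/2}$; the rest is a bookkeeping exercise. I would not expect measurability or integrability technicalities to cause difficulty since the $X_i$ are bounded by $c$, hence all exponential moments exist and conditional expectations are well defined.
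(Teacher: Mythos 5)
Your proposal is correct: the Chernoff bound via $e^{\lambda S_n}$, the conditional Hoeffding lemma obtained from the chord bound and $\cosh(\lambda c)\le e^{\lambda^2c^2/2}$, and the optimization at $\lambda^*=\epsilon/(nc^2)$ all check out and reproduce the stated constant exactly. Note that the paper offers no proof of this lemma at all — it is quoted as a standard known result (the Azuma--Hoeffding inequality) and used as a black box in the proof of Lemma \ref{lm:martingale} — so there is nothing to compare against; yours is simply the canonical argument, correctly executed.
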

Define a $|\mathcal{Q}_V|$-dimensional vector $\mathbf{e}_{Q}$ where the coordinate corresponding to state $\mathbf{Q}$ being 1 and all other coordinates being 0. Let $\mathcal{F}_t=\big(\mathbf{Q}(0),\alpha_0,\mathbf{Q}(1),\cdots,\mathbf{Q}(t),\alpha_{t}\big)$ be the sequence of states and actions up to slot $t$, and let $\ell(t)$ be the episode containing slot $t$. 
Consider the sequence 
\[
X_t\triangleq \Big(\mathbf{e}_{Q(t+1)}-P^*(\cdot | \mathbf{Q}(t), \alpha_{t})\Big)\mathbf{h}_{\ell(t)},~t=0,1\cdots,T,
\]
where $\alpha_t=\pi_{\ell}(\mathbf{Q}(t))$ is the action taken by the TUCRL algorithm in slot $t$. Note that conditioned on the historical sequence $\mathcal{F}_t$, the randomness of $X_t$ is in the next state $\mathbf{Q}(t+1)$. Also, $\{X_t\}_{t\ge 0}$ is a sequence of martingale differences since $\mathbb{E}_{Q(t+1)}[X_t|\mathcal{F}_{t}]=0$ for all $t$. In addition, we have $|X_t|\le ||\mathbf{e}_{Q(t+1)}\mathbf{h}_{\ell}||_{1} \le ||\mathbf{h}_{\ell}||_{\infty} \le LN^2V^2$. 

Then we have
\[
\begin{split}
\mathbf{v}_{\ell}(\mathbf{I}-\mathbf{G}^*_{\ell})\mathbf{h}_{\ell} &= \sum_{t=t_{\ell}}^{t_{\ell+1}-1}\Big(\mathbf{e}_{Q(t)}-P^*(\cdot | \mathbf{Q}(t), \alpha_{t})\Big)\mathbf{h}_{\ell}\\
&=\Big(\sum_{t=t_{\ell}}^{t_{\ell+1}-1} \mathbf{e}_{Q(t+1)}-\sum_{t=t_{\ell}}^{t_{\ell+1}-1}P^*(\cdot | \mathbf{Q}(t), \alpha_{t}) -\mathbf{e}_{Q(t_{\ell+1})}+\mathbf{e}_{Q(t_{\ell})}\Big)\mathbf{h}_{\ell}\\
&= \sum_{t=t_{\ell}}^{t_{\ell+1}-1} X_t + h_{\ell}(\mathbf{Q}(t_{\ell}))-h_{\ell}(\mathbf{Q}(t_{\ell+1}))\\
&\le \sum_{t=t_{\ell}}^{t_{\ell+1}-1} X_t + LN^2V^2,
\end{split}
\]
where the last inequality is due to $||\mathbf{h}_{\ell}||_{\infty}\le LN^2V^2$. Let $\mathcal{T}_2$ be the set of slots contained in episodes $\ell\in\mathcal{L}_2$. By the Azuma-Hoeffding inequality, we have
\[
\mathbb{P}\Big[\sum_{t\in\mathcal{T}_2} X_t \ge LN^2V^2\sqrt{2|\mathcal{T}_2|\log T}\Big]\le \frac{1}{T}.
\]
Summing over $\ell\in\mathcal{L}_2$, we have  with probability at least $1-\frac{1}{T}$
\[
\begin{split}
\sum_{\ell\in\mathcal{L}_2} \mathbf{v}_{\ell}(\mathbf{I}-\mathbf{G}^*_{\ell})\mathbf{h}_{\ell} &\le \sum_{t\in\mathcal{T}_2} X_t + LN^2V^2|\mathcal{L}_2|\\
&\le LN^2V^2\sqrt{2|\mathcal{T}_2|\log T} + LN^2V^2\Big(1+V^N|\mathcal{A}|(\log T+1)\Big)\\
&\le  O(V^2\sqrt{T\log T}+V^{N+2}\log T),
\end{split}
\]
where we use the fact that $|\mathcal{T}_2|\le T$ and $|\mathcal{L}_2|\le 1+|\mathcal{Q}_V||\mathcal{A}|(\log_2 T+1)$ (see Lemma \ref{lm:episode}).

\section{Proof to Lemma \ref{lm:dominate}}\label{ap:dominate}
We find an upper bound for $\sum_{\ell\in\mathcal{L}_2} \mathbf{v}_{\ell}(\mathbf{G}^*_{\ell}-\mathbf{G}_{\ell})\mathbf{h}_{\ell}$. Since both the truncated true MDP $M_V^*$ and the selected optimistic MDP $M_{\ell}$ are in the confidence set $\mathcal{M}_{\ell}$, the term $\mathbf{G}^*_{\ell}-\mathbf{G}_{\ell}$ can be bounded using the confidence set construction \eqref{eq:confidence-set}. We have
\begin{equation}
\begin{split}
\mathbf{v}_{\ell}(\mathbf{G}^*_{\ell}-\mathbf{G}_{\ell})\mathbf{h}_{\ell} & = \sum_{\mathbf{Q} \in\mathcal{Q}_V} v_{\ell}\Big(\mathbf{Q},\pi_{\ell}(\mathbf{Q})\Big)\sum_{\mathbf{Q}'\in \mathcal{Q}_V}\Big(P_V^*\big(\mathbf{Q}'|\mathbf{Q}, \pi_{\ell}(\mathbf{Q})\big)-P_{\ell}\big(\mathbf{Q}'|\mathbf{Q},\pi_{\ell}(\mathbf{Q})\big)\Big)h_{\ell}(\mathbf{Q}')\\
&\le \sum_{(\mathbf{Q},\alpha)} v_{\ell}(\mathbf{Q},\alpha)\sqrt{\frac{C \log(2|\mathcal{A}|t_{\ell} V)}{\max\{1,n_{\ell}(\mathbf{Q},\alpha)\}}} ||\mathbf{h}_{\ell}||_{\infty}\\
&\le \sqrt{C}LN^2\cdot \sum_{(\mathbf{Q},\alpha)} v_{\ell}(\mathbf{Q},\alpha)\sqrt{\frac{\log(2|\mathcal{A}|T V)}{\max\{1,n_{\ell}(\mathbf{Q},\alpha)\}}} V^2,
\end{split}
\end{equation}
where $P_V^*$ is the transition matrix for the truncated true MDP $M_V^*$ and we use the fact that $||\mathbf{h}_{\ell}||_{\infty}\le LN^2V^2$. 

Let $n(\mathbf{Q},\alpha)=\sum_{\ell} v_{\ell}(\mathbf{Q},\alpha)$ such that $\sum_{(\mathbf{Q},\alpha)} n(\mathbf{Q},\alpha)=T$. Moreover, recall that $n_{\ell}(\mathbf{Q},\alpha)=\sum_{i<\ell} v_i(\mathbf{Q},\alpha)$, and that $v_{\ell}(\mathbf{Q},\alpha)\le n_{\ell}(\mathbf{Q},\alpha)$ by the stopping condition of episode $\ell$. As a result, we have
\[
\begin{split}
\sum_{\ell} \mathbf{v}_{\ell}(\mathbf{G}^*_{\ell}-\mathbf{G}_{\ell})\mathbf{h}_{\ell} & \le  \sqrt{C}LN^2\cdot \sum_\ell \sum_{(\mathbf{Q},\alpha)} v_{\ell}(\mathbf{Q},\alpha)\sqrt{\frac{\log(2|\mathcal{A}|T V)}{\max\{1,n_{\ell}(\mathbf{Q},\alpha)\}}} V^2\\
& =  \sqrt{C}LN^2 V^2 \sqrt{\log(2|\mathcal{A}|T V)}  \sum_{(\mathbf{Q},\alpha)} \sum_{\ell} \frac{v_{\ell}(\mathbf{Q}, \alpha)}{\sqrt{\max\{1,n_{\ell}(\mathbf{Q},\alpha)\}}}\\
&\le  \sqrt{C}LN^2V^2 (\sqrt{2}+1)  \sqrt{\log(2|\mathcal{A}|T V)}  \sum_{(\mathbf{Q},\alpha)} \sqrt{n(\mathbf{Q},\alpha)}\\
& \le  \sqrt{C}LN^2V^2 (\sqrt{2}+1)  \sqrt{\log(2|\mathcal{A}|T V)} \sqrt{V^N |\mathcal{A}| T}.
\end{split}
\]
Here we used for the second inequality that
\[
\sum_{\ell=1}^n \frac{x_\ell}{\sqrt{X_{\ell-1}}}\le (\sqrt{2}+1)\sqrt{X_n}.
\]
where $X_\ell=\max\{1,\sum_{i=1}^\ell x_i\}$ and $0\le  x_\ell\le X_{\ell-1}$ (see Lemma 19 in \cite{UCRL2}), and we used Jensen's inequality for the last inequality. Finally, by the definition of $\mathcal{L}_1$ and $\mathcal{L}_2$ we have
\[
\begin{split}
\sum_{\ell\in\mathcal{L}_2} \mathbf{v}_{\ell}(\mathbf{G}^*_{\ell}-\mathbf{G}_{\ell})\mathbf{h}_{\ell}&=\sum_{\ell}\mathbf{v}_{\ell}(\mathbf{G}^*_{\ell}-\mathbf{G}_{\ell})\mathbf{h}_{\ell} +  \sum_{\ell\in\mathcal{L}_1}\mathbf{v}_{\ell}(\mathbf{G}_{\ell}-\mathbf{G}^*_{\ell})\mathbf{h}_{\ell}\\
&\le \sum_{\ell} \mathbf{v}_{\ell}(\mathbf{G}^*_{\ell}-\mathbf{G}_{\ell})\mathbf{h}_{\ell} + \sqrt{C}LN^2V^2 \log T \sqrt{\log(2|\mathcal{A}|TV)}\\
& \le \sqrt{C}LN^2V^2 (\sqrt{2}+1)  \sqrt{\log(2|\mathcal{A}|T V)} \sqrt{V^N |\mathcal{A}| T}+ \sqrt{C}LN^2V^2 \log T \sqrt{\log(2|\mathcal{A}|TV)}\\
&=O\Big(V^{2+N\slash 2} \sqrt{T\log(T V)}\Big),
\end{split}
\]
which completes the proof.

\section{Proof to Theorem \ref{thm:drop}}\label{ap:drop}
Let $\mathcal{X}$ be the set of slots when $\sum_{i,k} Q_{ik}(t) \ge V-ND$. Note that the amount of exogenous arrivals to each queue is at most $D$ in each time slot, and thus packet dropping occurs in a  slot $t$ only if $t\in\mathcal{X}$. As a result, in order to bound the amount of dropped packets, we find an upper bound for $|\mathcal{X}|$. 

Note that
\[
\sum_{t=0}^{T-1}\sum_{i,k} Q_{ik}(t) \ge \sum_{t\in\mathcal{X}} \sum_{i,k} Q_{ik}(t) \ge (V-ND) |\mathcal{X}|.
\]
By Theorem \ref{thm:truncated-bound}, with probability at least $1-O\Big(\frac{1}{V}+\frac{1}{T}\Big)$, the TUCRL algorithm achieves
\[
\begin{split}
\sum_{t=0}^{T-1}\sum_{i,k} Q_{ik}(t) &\le TJ_V^* + \gamma(T,V),
\end{split}
\]
which implies that
\begin{equation}\label{eq:x}
|\mathcal{X}| \le  \frac{\gamma(T,V) + TJ^*_V}{V-ND}.
\end{equation}
Since at most $ND$ packets are dropped for each slot $t\in\mathcal{X}$, the amount of dropped packets within $T$ slots is at most 
\[
ND|\mathcal{X}|\le \frac{(\gamma(T,V) + TJ^*_V)ND}{V-ND}.
\]
Note that the amount of total exogenous arrivals $\sum_{i,k} a_{ik}(t)$ in each slot $t$ is i.i.d. with expectation $\mathbb{E}[\sum_{i,k} a_{ik}(t)]=\sum_{i,k} \lambda_{ik}$, and  $\sum_{i,k} a_{ik}(t)\in [0,ND]$. By Hoefdding's inequality, we have
\[
\mathbb{P}\Big[\sum_{t=0}^{T-1}\sum_{i,k} a_{ik}(t) \le \frac{T\sum_{i,k} \lambda_{ik}}{2}\Big]\le \exp\Big(-\frac{T(\sum_{i,k}\lambda_{ik})^2}{2N^2D^2}\Big).
\]
When $T> \frac{2N^2D^2\log T}{(\sum_{i,k} \lambda_{ik})^2}$, the above probability is less than $\frac{1}{T}$.
As a result, with probability at least $1-O\Big(\frac{1}{T}+\frac{1}{V}\Big)$, the fraction of dropped packets is at most
\[
\frac{ND|\mathcal{X}|}{\sum_{t=0}^{T-1}\sum_{i,k} a_{ik}(t)}\le  \frac{2(\gamma(T,V)+TJ_V^*)ND}{T(V-ND)\sum_i \lambda_i}=O\Big(\frac{\gamma(T,V)}{TV}+\frac{J_V^*}{V}\Big).
\]

\section{Proof to Lemma \ref{lm:cc}}\label{ap:cc}
By Assumption \ref{as:mdp-stability}, we have that for any $t\ge 0$
\[
\mathbb{P}\Big(\sum_{i,k} Q_{ik}^*(t)\le C\Big)=1,
\]
for some constant $C>0$. Then it follows that for any $T\ge 0$
\[
\mathbb{P}\Big(\max_{t\le T}\sum_{i,k} Q_{ik}^*(t)\le C\Big) = 1.
\]
By Bounded Convergence Theorem, we have
\[
\mathbb{P}\Big(\lim_{T\rightarrow\infty}\max_{t\le T}\sum_{i,k} Q_{ik}^*(t)\le C\Big) =\lim_{T\rightarrow\infty}\mathbb{P}\Big(\max_{t\le T}\sum_{i,k} Q_{ik}^*(t)\le C\Big)= 1
\]
As a result, we have
\[
\mathbb{E}\Big[\max_{t\ge 0} \sum_{i,k} Q^*_{ik}(t)\Big]\le C.
\]
By Markov inequality, we have
\[
\mathbb{P}[\max_{t\ge 0} \sum_{i,k} Q_{ik}^*(t) \ge V] \le \frac{\mathbb{E}\Big[\max_{t\ge 0} \sum_{i,k} Q^*_{ik}(t)\Big]}{V}\le \frac{C}{V}=O\Big(\frac{1}{V}\Big).
\]
 As a result, with probability at least $1-O\Big(\frac{1}{V}\Big)$, the optimal policy never visits any queue length vector outside $\mathcal{Q}_V$ and the optimal average queue length is not affected in the truncated MDP, i.e., $J^*_V=J^*<\infty$, which implies that $\mathbb{P}[J_V^*=J^*]\ge 1-O\Big(\frac{1}{V}\Big)$.

\end{document}